\newif\ifspacehacks
\DeclareMathAlphabet{\mathcal}{OMS}{cmsy}{m}{n}
\newif\iflong
\newif\ifauth
\begin{document}

\title{State-Machine Replication for Planet-Scale Systems}

\author{Vitor Enes}
\affiliation{%
  \institution{INESC TEC and University of Minho}
}

\author{Carlos Baquero}
\affiliation{%
  \institution{INESC TEC and University of Minho}
}

\author{Tuanir Fran\c{c}a Rezende}
\affiliation{%
  \institution{T\'el\'ecom SudParis}
}

\author{Alexey Gotsman}
\affiliation{%
  \institution{IMDEA Software Institute}
}

\author{Matthieu Perrin}
\affiliation{%
  \institution{University of Nantes}
}

\author{Pierre Sutra}
\affiliation{%
  \institution{T\'el\'ecom SudParis}
}

\begin{CCSXML}
<ccs2012>
   <concept>
       <concept_id>10003752.10003809.10010172</concept_id>
       <concept_desc>Theory of computation~Distributed algorithms</concept_desc>
       <concept_significance>500</concept_significance>
       </concept>
 </ccs2012>
\end{CCSXML}

\ccsdesc[500]{Theory of computation~Distributed algorithms}

\keywords{Fault tolerance, Consensus, Geo-replication.}

\begin{abstract} 
  Online applications now routinely replicate their data at multiple sites
  around the world. In this paper we present \SYS, the first state-machine
  replication protocol tailored for such planet-scale systems. \SYS does not
  rely on a distinguished leader, so clients enjoy the same quality of
  service independently of their geographical locations. Furthermore,
  client-perceived latency improves as we add sites closer to clients. To
  achieve this, \SYS minimizes the size of its quorums using an observation that
  concurrent data center failures are rare. It also processes a high 
  percentage of accesses in a single round trip, even when these 
  conflict. We experimentally demonstrate that \SYS consistently
  outperforms state-of-the-art protocols in planet-scale scenarios. In
  particular, \SYS is up to two times faster than Flexible Paxos with identical
  failure assumptions, and more than doubles the performance of
  Egalitarian Paxos in the YCSB benchmark.
\end{abstract}

\maketitle

\section{Introduction}
\labsection{introduction}

Modern online applications run at multiple sites scattered across the
globe: they are now \emph{planet-scale}. Deploying applications in this way
enables high availability and low latency, by allowing clients to access the
closest responsive site. A major challenge in developing planet-scale
applications is that many of their underlying components, such as coordination
kernels~\cite{zookeeper,chubby} and critical databases~\cite{spanner}, require
strong guarantees about the consistency of replicated data.

The classical way of maintaining strong consistency in a distributed service is
{\em state-machine replication (SMR)}~\cite{smr}.  In SMR, a service is defined
by a deterministic state machine, and each site maintains its own local replica
of the machine.  An {\em SMR protocol} coordinates the execution of commands at
the sites, ensuring that they stay in sync. The resulting system is {\em
  linearizable}~\cite{linearizability}, providing an illusion that each command
executes instantaneously throughout the system.

Unfortunately, existing SMR protocols are poorly suited to planet-scale systems.
Common SMR protocols, such as Paxos~\cite{paxos} and Raft~\cite{raft}, are
rooted in cluster computing where a \emph{leader} site determines the ordering
of commands.  This is unfair to clients far away from the leader.  It impairs
scalability, since the leader cannot be easily parallelized and thus becomes a
bottleneck when the load increases.  It also harms availability as, if the
leader fails, the system cannot serve requests until a new one is elected.
Moreover, adding more sites to the system does not help, but on the contrary,
hinders performance, requiring the leader to replicate commands to more sites on
the critical path. This is a pity, as geo-replication has a lot of potential for
improving performance, since adding sites brings the service closer to clients.

To fully exploit the potential of geo-replication, we propose \SYS, a new SMR
protocol tailored to planet-scale systems with many sites spread 
across the world. In particular, \SYS improves client-perceived latency as we
add sites closer to clients. The key to the \SYS design is an observation that
common SMR protocols provide a level of fault-tolerance that is unnecessarily
high in a geo-distributed setting. These protocols allow any minority of sites to
fail simultaneously: e.g., running a typical protocol over $13$ data centers
would tolerate $6$ of them failing. However, natural disasters leading to the loss of a data center are
rare, and planned downtime can be handled by reconfiguring the unavailable site
out of the system~\cite{paxos,zab-reconfiguration}. Furthermore, temporary data
center outages (e.g., due to connectivity issues) typically have a short
duration~\cite{xpaxos}, and, as we confirm experimentally in
\refsection{evaluation}, rarely happen concurrently. For this reason, industry
practitioners assume that the number of concurrent site failures in a
geo-distributed system is low, e.g. $1$ or $2$~\cite{spanner}. Motivated by
this, our SMR protocol allows choosing the maximum number of sites that can fail
($f$) independently of the overall number of sites ($n$), and is optimized for
small values of the former. Our protocol thus trades off higher fault tolerance
for higher scalability\footnote{Apart from data centers being down,
  geo-distributed systems may also exhibit network partitionings, which
  partition off several data centers from the rest of the system. Our protocol
  may block for the duration of the partitioning, which is unavoidable due to
  the CAP theorem~\cite{cap}.}.

In more detail, like previously proposed protocols such as Egalitarian Paxos
(EPaxos)~\cite{epaxos} and Mencius~\cite{mencius}, our protocol is {\em
  leaderless}, i.e., it orders commands in a decentralized way, without relying
on a distinguished leader site. This improves availability and allows serving
clients with the same quality of service independently of their geographical
locations. As is common, our protocol also exploits the fact that commands in
SMR applications frequently commute~\cite{chubby,spanner}, and for the
replicated machine to be linearizable, it is enough that replicas only agree on the
order of non-commuting commands~\cite{gb,gpaxos}. This permits processing a
command in one round trip from the closest replica using a {\em fast path}, e.g., when the command
commutes with all commands concurrently submitted for execution. In the presence
of concurrent non-commuting commands, the protocol may sometimes have to take a
{\em slow path}, which requires two round trips.

Making our protocol offer better latency for larger-scale deployments required
two key innovations in the baseline scheme of a leaderless SMR protocol. First,
the lower latency of the fast path in existing protocols comes with a downside:
the fast path must involve a {\em fast quorum} of replicas bigger than a
majority, which increases latency due to accesses to far-away replicas. For
example, in Generalized Paxos~\cite{gpaxos} the fast quorum consists of at least
$\frac{2n}{3}$ replicas, and in EPaxos of at least $\frac{3n}{4}$
replicas.  To solve this problem, in \SYS the size of the fast quorum is a
function of the number of allowed failures $f$ --  namely,
$\left\lfloor \frac{n}{2} \right\rfloor + f$. Smaller values of $f$ result in
smaller fast quorums, thereby decreasing latency. Furthermore, violating the
assumption the protocol makes about the number of failures may only compromise
liveness, but never safety. In particular, if more than $f$ transient outages
occur, due to, e.g., connectivity problems, \SYS will just block until enough sites are
reachable.

A second novel feature of \SYS is that it can take the fast path
even when non-commuting commands are submitted concurrently, something that is
not allowed by existing SMR protocols~\cite{gpaxos,epaxos}. This permits
processing most commands via the fast path when the conflict rate is
low-to-moderate, as is typical for SMR
applications~\cite{chubby,spanner}. Moreover, when $f = 1$ our protocol {\em
  always} takes the fast path and its fast quorum is a plain majority.

The biggest challenge we faced in achieving the above features --~smaller fast
quorums and a flexible fast-path condition~-- was in designing a correct failure
recovery mechanism for \SYS. Failure recovery is the most subtle part of a SMR
protocol with a fast path because the protocol needs to recover the decisions reached by
the failed replicas while they were short-cutting some of the protocols steps in the fast path. This
is only made more difficult with smaller fast quorums, as a failed
process leaves information about its computations at fewer replicas. \SYS
achieves its performant fast path while having a recovery protocol that is
significantly simpler than that of previous leaderless
protocols~\cite{epaxos,caesar} and has been rigorously proved correct.

As an additional optimization, \SYS also includes a novel mechanism to accelerate the
execution of linearizable reads and reduce their impact on the protocol
stack. This improves performance in read-dominated workloads.

We experimentally evaluate \SYS on Google Cloud Platform using 3 to 13 sites spread around the world.
As new replicas are added closer to clients, \SYS gets faster:
going from 3 to 13 sites, the client-perceived latency is almost cut by half.
We also experimentally compare \SYS with Flexible Paxos~\cite{flexible-paxos} (a variant of Paxos that also allows
selecting $f$ independently of $n$), EPaxos and Mencius.
\SYS consistently outperforms these protocols in planet-scale scenarios.
In particular, our protocol is up to two times faster than Flexible Paxos with identical failure assumptions ($f=1,2$),
and more than doubles the performance of EPaxos in mixed YCSB workloads \cite{ycsb}.

\section{State-Machine Replication}
\labsection{smr}

We consider an asynchronous distributed system consisting of $n$ processes
$\Proc = \{1, \dots, n\}$. At most $f$ processes may fail by crashing (where
$1 \le f \le \lfloor \frac{n-1}{2} \rfloor$), but processes do not behave
maliciously. In a geo-distributed deployment, each process represents a data
center, so that a failure corresponds to the outage of a whole data center. Failures
of single machines are orthogonal to our concerns and can be masked by
replicating a process within a data center using standard
techniques~\cite{paxos,raft}. We call a majority of processes a {\em (majority)
  quorum}. We assume that the set of processes is static. Classical approaches
can be used to add reconfiguration to our
protocol~\cite{paxos,epaxos}. Reconfiguration can also be used in practice to
allow processes that crash and recover to rejoin the system.

\begin{figure*}[t]
  \tikzset{
    ncbar angle/.initial=90,
    ncbar/.style={
        to path=(\tikztostart)
        -- ($(\tikztostart)!#1!\pgfkeysvalueof{/tikz/ncbar angle}:(\tikztotarget)$)
        -- ($(\tikztotarget)!($(\tikztostart)!#1!\pgfkeysvalueof{/tikz/ncbar angle}:(\tikztotarget)$)!\pgfkeysvalueof{/tikz/ncbar angle}:(\tikztostart)$)
        -- (\tikztotarget)
    },
    ncbar/.default=0.5cm,
}
\tikzset{square left brace/.style={ncbar=0.15cm}}
\tikzset{square right brace/.style={ncbar=-0.15cm}}
\tikzset{round left paren/.style={ncbar=0.5cm,out=120,in=-120}}
\tikzset{round right paren/.style={ncbar=0.5cm,out=60,in=-60}}

\begin{tikzpicture}%

  \draw (-4.15,1.7) node [rotate=90] {\small{fast quorums}};
  \draw[\msgAColor!40,rounded corners=5] (-3.83,2) +(0, -1.15) rectangle +(0.5, 1.15) ;
  \draw[\msgBColor!40,rounded corners=5] (-3.9,1.4) +(0, -1.15) rectangle +(0.5, 1.15) ;
  
  \draw[-latex] (-3.25,2.9) node[left,xshift=-4.5]{1} -- (13.1, 2.9) ;
  \draw[-latex] (-3.25,2.3) node[left,xshift=-4.5]{2} -- (13.1, 2.3) ;
  \draw[-latex] (-3.25,1.7) node[left,xshift=-4.5]{3} -- (13.1, 1.7) ;
  \draw[-latex] (-3.25,1.1) node[left,xshift=-4.5]{4} -- (13.1, 1.1) ;
  \draw[-latex] (-3.25,0.5) node[left,xshift=-4.5]{5} -- (13.1, 0.5) ;

  \draw (-2.2,2.9) node{$\rule{.4pt}{1ex}$};
  \draw (-2.2,3.2) node{\scriptsize{$\pcdbroadcast(\msgA)$}};

  \draw (-2.2,0.5) node{$\rule{.4pt}{1ex}$};
  \draw (-2.2,0.2) node{\scriptsize{$\pcdbroadcast(\msgB)$}};

  \draw (2.5,2.3) node{$\rule{.4pt}{1ex}$};
  \draw (2.7,2.55) node{\scriptsize{$\recover(\msgA)$}};

  \draw[fill=\msgAColor!20,rounded corners=3] (-0.42,2.78) +(-0.75, -0.05) rectangle +(0.7, 0.3) +(-0.02,0.13) node{\footnotesize $\KwMCol$};

  \draw[\msgAColor, ->] (0.3,2.9) to[out=85,in=95,distance=4] (0.5, 2.9);
  \draw[\msgAColor, ->] (0.3,2.9) -- (0.75, 2.3);
  \draw[\msgAColor, ->] (0.3,2.9) -- (1, 1.7);
  \draw[\msgAColor, ->] (0.3,2.9) -- (1.2, 1.1);

  \draw[\msgAColor, ->] (0.75, 2.3) -- (0.9,2.9);
  \draw[\msgAColor, ->] (1, 1.7) -- (1.3,2.9);
  \draw[\msgAColor, ->] (1.2, 1.1) -- (1.65,2.9);

  \draw[fill=\msgBColor!20,rounded corners=3] (-0.22,0.38) +(-0.75, -0.05) rectangle +(0.7, 0.3) +(-0.02,0.13) node{\footnotesize $\KwMCol$};
  \draw[\msgBColor, ->] (0.5,0.5) to[out=-85,in=-95,distance=4] (0.7, 0.5);
  \draw[\msgBColor, ->] (0.5,0.5) -- (1.4, 2.3);
  \draw[\msgBColor, ->] (0.5,0.5) -- (1.2, 1.7);
  \draw[\msgBColor, ->] (0.5,0.5) -- (0.95, 1.1);

  \draw[\msgBColor, ->] (1.4, 2.3) -- (1.85,0.5);
  \draw[\msgBColor, ->] (1.2, 1.7) -- (1.5,0.5);
  \draw[\msgBColor, ->] (0.95, 1.1) -- (1.1,0.5);

  \draw (1,3.03) node{\scriptsize $\msgA \to \msgB$};
  \draw (0.1,2.43) node{\scriptsize $\msgA \to \msgB$};
  \draw (0.35,1.83) node{\scriptsize $\msgA \to \msgB$};
  \draw (0.25,1.23) node{\scriptsize $\msgB \to \msgA$};
  \draw (1.15,0.38) node{\scriptsize $\msgB \to \msgA$};

  \draw[\msgAColor, ->] (2.6,2.3) to[out=85,in=95,distance=4] (2.8, 2.3);
  \draw[\msgAColor, ->]  (2.6,2.3) -- (2.85, 1.7);
  \draw[\msgAColor, ->]  (2.6,2.3) -- (2.85, 1.1);
  \draw[\msgAColor, ->]  (2.85, 1.7) -- (3, 2.3);
  \draw[\msgAColor, ->]  (2.85, 1.1) -- (3, 2.3);

  \draw[fill=\msgAColor!20,rounded corners=5] (6,2.65) +(-0.7, -0.5) rectangle +(0.7, 0.5) +(0,0.3) node[black]{\footnotesize consensus} +(0,0) node[black]{\footnotesize on}+(0,-0.3) node[black]{\footnotesize $\KwMsgCcl[\msgA]$};
  \draw[Black] (6,0.8) node{\footnotesize \emph{fast path}};
  \draw[Black, ->]  (5.7,0.6) -- (6.3,0.6);

  \draw[Black] (6,3.55) node{\footnotesize \emph{slow path}};
  \draw[Black, ->]  (5.25,3.2) to[out=25,in=155,distance=15] (6.75,3.2);

  \draw (5.35,3.05) node[left]{\scriptsize $\KwMsgCcl[\msgA] = \emptyset$};
  \draw (5.35,2.45) node[left]{\scriptsize $\KwMsgCcl[\msgA] = \{\msgB\}$};
  \draw (5.35,0.65) node[left]{\scriptsize $\KwMsgCcl[\msgB] = \{\msgA\}$};

  \draw (6.7,3.05) node[right]{\scriptsize $\KwMsgCcl[\msgA] = \emptyset$};
  \draw (6.7,2.45) node[right]{\scriptsize $\KwMsgCcl[\msgA] = \emptyset$};
  \draw (6.7,0.65)  node[right]{\scriptsize $\KwMsgCcl[\msgB] = \{\msgA\}$};

  \draw[fill=\msgAColor!20,rounded corners=3] (9.63,2.78) +(-0.65, -0.05) rectangle +(0.6, 0.3) +(-0.02,0.13) node{\footnotesize $\KwMCom$};
  \draw[\msgAColor, ->]  (10.25,2.9) to[out=85,in=95,distance=4] (10.45, 2.9);
  \draw[\msgAColor, ->]  (10.25,2.9) -- (10.75, 2.3);
  \draw[\msgAColor, ->]  (10.25,2.9) -- (11.0, 1.7);
  \draw[\msgAColor, ->]  (10.25,2.9) -- (11.2, 1.1);
  \draw[\msgAColor, ->]  (10.25,2.9) -- (11.35, 0.5);
  
  \draw[fill=\msgBColor!20,rounded corners=3] (9.63,0.38) +(-0.65, -0.05) rectangle +(0.6, 0.3) +(-0.02,0.13) node{\footnotesize $\KwMCom$};
  \draw[\msgBColor, ->]  (10.25,0.5) to[out=-85,in=-95,distance=4] (10.45, 0.5);
  \draw[\msgBColor, ->]  (10.25,0.5) -- (10.4, 1.1);
  \draw[\msgBColor, ->]  (10.25,0.5) -- (10.77, 1.7);
  \draw[\msgBColor, ->]  (10.25,0.5) -- (11.4, 2.3);
  \draw[\msgBColor, ->]  (10.25,0.5) -- (12.25, 2.9);

  \draw (10.75,2.9) node{$\rule{.4pt}{1ex}$};
  \draw (10.75,3.2) node{\scriptsize{$\execute(\msgA)$}};
  \draw (12.5,2.9) node{$\rule{.4pt}{1ex}$};
  \draw (12.5,3.2) node{\scriptsize{$\execute(\msgB)$}};

  \draw (11.6,0.5) node{$\rule{.4pt}{1ex}$};
  \draw (11.6,0.2) node{\scriptsize{$\execute(\msgA)\ ;~\execute(\msgB)$}};

\end{tikzpicture}
  \caption{ \labfigure{overview} Example of processing two conflicting commands
    $\msgA$ and $\msgB$ in \SYS with $n=5$ processes and up to $f = 2$ failures.
    We omit the messages implementing consensus and depict this step abstractly by
    the consensus box. }
\end{figure*}

State-machine replication (SMR) is a common way of implementing strongly consistent replicated
services~\cite{smr}. A service is defined by a deterministic state machine with
an appropriate set of {\em commands}, denoted by $\CMD$. Processes maintain
their own local copy of the state machine, and proxy the access
to the replicated service by client applications (not modeled). 
An {\em SMR protocol} coordinates the
execution of commands at the processes, ensuring that service replicas stay in
sync. The protocol provides a command $\submit(c)$, which allows a process to
submit a command $c \in \CMD$ for execution on behalf of a client. The protocol
may also trigger an event $\execute(c)$ at a process, asking it to apply $c$ to
the local service replica; after execution, the process that submitted the
command may return the outcome of $c$ to the client. Without loss of generality,
we assume that each submitted command is unique.

The strongest property a replicated service implemented using SMR may satisfy is
\emph{linearizability}~\cite{linearizability}. Informally, this means that
commands appear as if executed sequentially on a single copy of the state
machine in an order consistent with the {\em real-time order}, i.e., the order
of non-overlapping command invocations. As observed in \cite{gpaxos,gb}, for the
replicated service to be linearizable, the SMR protocol does not need to ensure that
commands are executed at processes in the exact same order: it is enough to
agree on the order of non-commuting commands.

We now give the specification of the SMR protocol.
We say that commands $c$ and $d$ \emph{commute} if in every state $s$ of the
state machine:
\begin{inparaenum}
\item executing $c$ followed by $d$ or $d$ followed by $c$ in $s$ leads to the same state; and
\item $c$ returns the same response in $s$ as in the state obtained by executing
  $d$ in $s$, and vice versa.
\end{inparaenum}
If commands do not commute, we say that they {\em conflict}%
\footnote{
  Detecting if two commands conflict must be possible without executing them.
    In practice, this information can often be extracted from the API provided by the replicated service.
    In cases when such inference is infeasible, it is always safe to consider that a pair of commands conflict.
}.
We write $c \mapsto_i d$ when $c$ and $d$ conflict and process $i \in \Proc$ executes $c$
before executing $d$. We also define the following {\em real-time order}:
$c \leadsto d$ if $c$ was executed at some process before $d$ was submitted.
Let ${\mapsto} = {\leadsto} \,\cup\, (\bigcup_{i \in \Proc} {\mapsto_i})$. Then, the
specification of the SMR protocol is given by the following properties:
\begin{itemize}
\item [\em \bf Validity.] If a process executes a command $c$, then some process submitted $c$ before.
\item [\em \bf Integrity.] A process executes each command at most once.
\item [\em \bf Ordering.] The relation $\mapsto$ is acyclic.
\end{itemize}
Note that the Ordering property enforces that conflicting commands are executed
in a consistent manner across the system. In particular, it prevents two
conflicting commands from being executed in contradictory orders by different
processes.  If the SMR protocol satisfies the above properties, then the
replicated service implemented using it is linearizable (we prove this in \tra{\ref{appendix:smr}}{B}).
In the following sections we present \SYS, which satisfies the above specification.

\section{The \SYS Protocol}
\labsection{algo}

To aid understanding, we first illustrate by example the message flow of the
\SYS protocol (\refsection{algo:bigpicture}), which corresponds to a common
structure of leaderless SMR protocols~\cite{epaxos}. We then describe the
protocol in detail (\refsection{algo:details}).

\subsection{Overview}
\labsection{algo:bigpicture}

\reffigure{overview} illustrates how \SYS processes two conflicting commands,
$\msgA$ and $\msgB$, with $n = 5$ processes and at most $f = 2$ failures. At a
given process, a command usually goes through several 
{\em phases}: the initial phase $\KwPhA$, then $\KwPhB$, $\KwPhD$ and $\KwPhE$
(an additional phase $\KwPhF$ is used when handling failures).

Command $\msgA$ starts its journey when $\pcdbroadcast(\msgA)$ is invoked at process $1$.
We call process $1$ the initial \emph{coordinator} of $\msgA$.
This coordinator is initial because, if it fails or is slow, another process may take over.
Command $\msgA$ then enters the $\KwPhB$ phase at process $1$, whose goal is to compute the set of commands
that are \emph{dependencies} of $\msgA$, denoted by $\KwMsgCcl[\msgA]$.
These dependencies are later used to determine the order of execution of conflicting commands.
To compute dependencies, process $1$ sends an $\KwMCol$ message containing command $\msgA$
to a {\em fast quorum} of processes, which is at least a majority but may 
be bigger. In our example the fast quorum picked by $1$ is $\{1, 2, 3, 4\}$.

Each process in the fast quorum returns the set of commands conflicting with $\msgA$ that it received before $\msgA$.
In \reffigure{overview}, $\to$ indicates the order in which processes receive commands.
For instance, process $4$ receives $\msgB$ first, whereas the other fast-quorum processes do not receive any command before $\msgA$.
Based on the replies, process $1$ computes the value of $\KwMsgCcl[\msgA]$
(as described in the next section); in our example this happens to be $\emptyset$.

If a coordinator of a command is suspected to have failed, another process may
try to take over.  In \reffigure{overview}, process $2$ suspects $1$ and becomes
another coordinator of $\msgA$, denoted by $\mathtt{recover}(\msgA)$.
Process $2$ contacts a majority quorum of processes $\{2, 3, 4\}$ and computes
its own version of the dependencies of $\msgA$: $\KwMsgCcl[\msgA] = \{\msgB\}$.

Dependencies are used to determine the order in which conflicting commands are
executed, and all processes have to execute conflicting commands in the same
order. To ensure this, the coordinators of command $\msgA$ need to reach a
consensus on the value of $\KwMsgCcl[\msgA]$. This is implemented using an
optimized variant of single-decree Paxos~\cite{paxos}, with all $n$ processes
acting as acceptors. In our example, this makes the processes agree on
$\KwMsgCcl[\msgA] = \emptyset$. The use of consensus represents the {\em slow
  path} of the protocol.

If a coordinator can ensure that all the values that can possibly be proposed to consensus are the same,
then it can take the {\em fast path} of the protocol, avoiding the use of consensus.
In \reffigure{overview}, this is the case for process $5$ coordinating command $\msgB$.
For a process to take the fast path, we require it to receive a response
from every process in the fast quorum, motivating the name of the latter.

After consensus or the shortcut via the fast path, a coordinator of a command
sends its final dependencies to other processes in an $\KwMCom$ message.
A process stores these dependencies and marks the command as having entered the $\KwPhD$ phase.
A command can be executed (and thereby transition to the $\KwPhE$ phase) only after all
its dependencies are in the $\KwPhD$ or $\KwPhE$ phases. Since in our example
$\KwMsgCcl[\msgA] = \emptyset$, processes can
execute command $\msgA$ right after receiving its final dependencies ($\emptyset$).
This is exploited by processes $1$ and $2$ in \reffigure{overview}. 
However, as $\KwMsgCcl[\msgB] = \{\msgA\}$, processes
must delay the execution of $\msgB$ until $\msgA$ is
executed. This is the case for processes $3$, $4$
and $5$ in \reffigure{overview}. Such an execution mechanism guarantees that
the conflicting commands $\msgA$ and $\msgB$ are executed in the 
same order at all processes.

\subsection{Protocol in Detail}
\labsection{algo:details}

\refalg{atlas} specifies the \SYS protocol at process $i \in \Proc$ in the
failure-free case. We assume that self-addressed protocol messages are delivered
immediately.

\subsubsection{Start phase}
A client submits a command $c \in \CMD$ by invoking $\pcdbroadcast(c)$ at one of
the processes running \SYS, which will serve as the initial command
coordinator. When $\pcdbroadcast(c)$ is invoked at a process $i$
(line~\ref{algo:new:request}), this coordinator first assigns
to command $c$ a unique identifier -- a pair $\tup{i, l}$ where $l - 1$ is the number of
commands submitted at process $i$ before $c$. In the following we denote
the set of all identifiers by $\ID$. At the bottom of \refalg{atlas}, we
summarize the data maintained by each process for a command with identifier
$\id \in \ID$. In particular, the mapping $\KwMsgM$ stores the payload of the
command, and the mapping $\KwMsgPh$ tracks the progress of the command through
phases. For brevity, the name of the phase written in lower case also denotes
all the identifiers in that phase, e.g.,
$\KwPhAset = \{\id \in \ID \mid \KwMsgPh[\id] = \KwPhA\}$.

Once the coordinator assigns an identifier to $c$, the command starts its $\KwPhB$
phase, whose goal is to compute a set of identifiers that are the
\emph{dependencies} of $c$. At the end of this phase, the coordinator sends an
$\KwMCom(\id, c, D)$ message including the computed dependencies $D$. Before
this, it agrees with other possible coordinators on the same final value of $D$,
resulting in the following invariant.
\begin{invariant}\label{inv:cons}
  For any two messages $\KwMCom(\id, c, D)$ and
  $\KwMCom(\id, c', D')$ sent, $c = c'$ and $D = D'$.
\end{invariant}
Hence, each identifier is associated with a unique command and final set of dependencies. 
The key property of dependencies is that, for any two distinct conflicting
commands, one has to be a dependency of the other.
This is stated by the following invariant.
\begin{invariant}\label{inv:conf}
  Assume that messages $\KwMCom(\id, c, D)$ and $\KwMCom(\id', c', D')$ have
  been sent. If $\id\not=\id'$ and $\conflict(c, c')$ then either $\id' \in D$
  or $\id \in D'$, or both.
\end{invariant}
This invariant is key to ensure that conflicting commands are executed in the
same order at all processes, since we allow processes to execute commands that
are not a dependency of each other in any order. We next explain how \SYS
ensures the above invariants.

\IncMargin{0.2em}
\begin{algorithm}[!t]
    \algosetup

    \setcounter{AlgoLine}{0}
    \SubAlgo{\Fun $\KwPCD(c)$\label{algo:new:request}}{
      $\id \leftarrow \tup{i, \mathsf{min}\{l \mid \tup{i, l} \in \KwPhAset \}}$ \\
      $\past \leftarrow \conflicts(c)$ \label{algo:new:past}\;
      $Q \leftarrow \fastq(i)$ \label{algo:new:fast-quorum}\;
      \Send $\KwMCol(\id, c, \past, Q)$ \To $Q$ \label{algo:new:collect-past} \;
    }

    \SubAlgo{\RecA $\KwMCol(\id, c, \past, Q)$ \RecB $j$}{
      \Pre $\id \in \KwPhAset$ \label{algo:new:collect-pre}\;
      $\KwMsgCcl[\id] \leftarrow \conflicts(c) \union \past$ \label{algo:new:conflicts}\;
      $\KwMsgM[\id] \leftarrow c$;
      $\KwMsgQ[\id] \leftarrow Q$\;
      $\KwMsgPh[\id] \leftarrow \KwPhB$\;
      \Send $\KwMColA(\id, \KwMsgCcl[\id])$ \To $j$ \label{algo:new:collect-ack}
    }

    \SubAlgo{\RecA $\KwMColA(\id, \dep_j)$ \RecBAll $j \in Q$\label{algo:new:collect-ack-receive}}{
      \Pre $\id \in \KwPhBset \land Q = \KwMsgQ[\id]$ \label{algo:new:collect-ack-pre}\;
      $D \leftarrow \bigcup_{Q} \dep$ \label{algo:new:fp-proposal} \;
      \If{$\bigcup_{Q} \dep = \quorumcup{f}{Q} \dep$\label{algo:new:fp-condition}}{
        \Send $\KwMCom(\id, \KwMsgM[\id], D)$ \ToAll \label{algo:new:fp}
      }
      \Else{\label{algo:new:sp-else}
        $Q' \leftarrow \slowq(i)$ \label{algo:new:slow-quorum}\;
        \Send $\KwMCons(\id, \KwMsgM[\id], D, i)$ \To $Q'$ \label{algo:new:sp}
      }
    }

    \SubAlgo{\RecA $\KwMCons(\id, c, D, b)$ \RecB $j$} {
      \Pre $\KwMsgR[\id] \leq b$ \label{algo:new:consensus-pre} \;
      $\KwMsgM[\id] \leftarrow c$;
      $\KwMsgCcl[\id] \leftarrow D$ \;
      $\KwMsgR[\id] \leftarrow b$;
      $\KwMsgW[\id] \leftarrow b$ \;
      \Send $\KwMConsA(\id, b)$ \To $j$ \;
    }

    \SubAlgo{\RecA $\KwMConsA(\id, b)$ \RecB $Q$} {
      \Pre $\KwMsgR[\id] = b \land |Q| = f + 1$ \label{algo:new:consensus-ack-pre} \;
      \Send $\KwMCom(\id, \KwMsgM[\id], \KwMsgCcl[\id])$ \ToAll \label{algo:new:consensus-end}\;
    }  

    \SubAlgo{\RecA $\KwMCom(\id, c, D)$} {
      \Pre $\id \not \in \KwPhDset \union \KwPhEset$\;
      $\KwMsgM[\id] \leftarrow c$;
      $\KwMsgCcl[\id] \leftarrow D$;
      $\KwMsgPh[\id] \leftarrow \KwPhD$ \;
    }

    \newcounter{CoreLastLine}
    \setcounter{CoreLastLine}{\value{AlgoLine}}

    \ifspacehacks
    \vspace{-.6em}
    \fi
    \algrule[0.8pt]
    \begin{center}
        \begin{tabular}{r@{}c@{}l@{}ll}
          $\KwMsgM[\id]$ & ${} \assign {}$ & $\KwNoop$ & ${} \in \CMD$ & Command
          \\
          $\KwMsgPh[\id]$ & ${} \assign {}$ & $\KwPhA$ & & Phase
          \\
          $\KwMsgCcl[\id]$ & ${} \assign {}$ & $\emptySet$ & ${} \subseteq \ID$ & Dependency set
          \\
          $\KwMsgQ[\id]$ & ${} \assign {}$ & $\emptySet$ & ${} \subseteq \Proc$ & Fast quorum
          \\
          $\KwMsgR[\id]$ & ${} \assign {}$ & $0$ & ${} \in \mathds{N}$ & Current ballot
          \\
          $\KwMsgW[\id]$ & ${} \assign {}$ & $0$ & ${} \in \mathds{N}$ & Last accepted ballot
          \\
        \end{tabular}
    \end{center}

    \caption{\SYS protocol at process $i$: failure-free case.}
    \labalg{atlas}
  \end{algorithm}

\subsubsection{Collect phase}
To compute the dependencies of a command $c$, its coordinator first computes the
set of commands it knows about that conflict with $c$ (denoted by $\past$,
line~\ref{algo:new:past}) using a function
$\conflicts(c) = \{ \id \not \in \KwPhAset \mid \conflict(c, \KwMsgM[\id]) \}$.
The coordinator then picks a fast quorum $Q$ of size $\floor{\frac{n}{2}} + f$
that includes itself (line~\ref{algo:new:fast-quorum})
and sends an $\KwMCol$ message with the information it computed to all processes in $Q$.

Upon receiving an $\KwMCol$ message from the coordinator, a process in the fast
quorum computes its contribution to $c$'s dependencies
as the set of commands that conflict with $c$, combined with $\past$ (line~\ref{algo:new:conflicts}).
The process stores the computed dependencies, command $c$ and the fast quorum
$Q$ in mappings $\KwMsgCcl$, $\KwMsgM$ and $\KwMsgQ$, respectively, and sets the
command's phase to $\KwPhB$. The process then replies to the coordinator with an
$\KwMColA$ message, containing the computed dependencies
(line~\ref{algo:new:collect-ack}).

Once the coordinator receives an $\KwMColA$ message
from all processes in the fast quorum
(line~\ref{algo:new:collect-ack-pre}),
it computes the dependencies for the command as the union of all reported dependencies $D = \bigcup_Q \dep = \bigcup \{ \dep_j \mid j \in Q \}$ (line~\ref{algo:new:fp-proposal}).
Since a fast quorum contains at least a majority of processes,
the following property implies that this computation maintains Invariant~\ref{inv:conf}.
\begin{property}\label{lemma:ordering}
  Assume two conflicting commands with identifiers $id$ and $id'$ and
  dependencies $D$ and $D'$ computed as in
  line~\ref{algo:new:fp-proposal} over majority quorums. Then either $\id' \in D$ or $\id \in D'$, or
  both.
\end{property}
\paragraph{\it Proof.}
Assume that the property does not hold: there are two conflicting commands with
distinct identifiers $id$ and $id'$ and dependencies $D$ and $D'$ such that
$id' \not \in D$ and $id \not \in D'$. We know that $D$ was computed over some majority
$Q$ and $D'$ over some majority $Q'$.  Since $id' \not \in D$, we have: {\em
  (i)} the majority $Q$ observed $id$ before $id'$. Similarly, since
$id \not \in D'$: {\em (ii)} the majority $Q'$ observed $id'$ before $id$.
However, as majorities $Q$ and $Q'$ must intersect, we cannot have both {\em
  (i)} and {\em (ii)}. This contradiction shows the required.\qed

\smallskip
\smallskip

For example, in \reffigure{overview} coordinator $5$ determines the dependencies
for \msgB using the computation at line~\ref{algo:new:fp-proposal} (coordinator
$1$ uses an optimized version of this computation presented
in~\refsection{execution:improvements}).

After computing the command's dependencies, its coordinator decides to either take the fast path
(line~\ref{algo:new:fp-condition}) or the slow path (line~\ref{algo:new:sp-else}).
Both fast and slow paths end with the coordinator sending an $\KwMCom$ message containing
the command and its final dependencies.

\addtolength{\belowcaptionskip}{0.2cm}%
\begin{figure*}[t]
  \subcaptionbox{
  \begin{tabular}{l}
    \rmark \ \SYS $f=2$\\
    \wmark \hspace{0.058cm} \emph{matching replies}
  \end{tabular}
  \label{fig:fp-ex1}}[0.24\textwidth]{
    \centering
    \scalebox{.75}{
    \begin{tikzpicture}
      \node[coordi] (1) at (-1.3, -1) {1};
      \node (1-text) at (-1.95, -1) {\tikznodelabel{a}};
      \node[quorum] (2) at (0, 0) {2};
      \node (2-text) at (0, 0.6) {\tikznodelabel{a,b,c}};
      \node[quorum] (3) at (1.3, -1) {3};
      \node (3-text) at (1.3, -0.4) {\tikznodelabel{a,b,d}};
      \node[restof] (4) at (.75, -2.3) {5};
      \node (4-text) at (.75, -2.9) {};
      \node[quorum] (5) at (-.75, -2.3) {4};
      \node (5-text) at (-.75, -2.9) {\tikznodelabel{a,c,d}};
      \draw[->,thick] (1.north) -- (2);
      \draw[->,thick] (2) .. controls +(down:5mm) .. (1.65);
      \draw[->,thick] (1.15) -- (3);
      \draw[->,thick] (3) .. controls +(down:5mm) .. (1.0);
      \draw[->,thick] (1.285) -- (5.north);
      \draw[->,thick] (5.140) .. controls +(left:1mm) .. (1.270);
    \end{tikzpicture}
  }
}
\subcaptionbox{
  \begin{tabular}{l}
    \wmark \ \SYS $f=2$\\
    \wmark \hspace{0.058cm} \emph{matching replies}
  \end{tabular}
  \label{fig:fp-ex2}}[0.24\textwidth]{
    \centering
    \scalebox{.75}{
    \begin{tikzpicture}
      \node[coordi] (1) at (-1.3, -1) {1};
      \node (1-text) at (-1.95, -1) {$\varnothing$};
      \node[quorum] (2) at (0, 0) {2};
      \node (2-text) at (0, 0.6) {$\varnothing$};
      \node[quorum] (3) at (1.3, -1) {3};
      \node (3-text) at (1.3, -0.4) {$\varnothing$};
      \node[restof] (4) at (.75, -2.3) {5};
      \node (4-text) at (.75, -2.9) {};
      \node[quorum] (5) at (-.75, -2.3) {4};
      \node (5-text) at (-.75, -2.9) {\tikznodelabel{b}};
      \draw[->,thick] (1.north) -- (2);
      \draw[->,thick] (2) .. controls +(down:5mm) .. (1.65);
      \draw[->,thick] (1.15) -- (3);
      \draw[->,thick] (3) .. controls +(down:5mm) .. (1.0);
      \draw[->,thick] (1.285) -- (5.north);
      \draw[->,thick] (5.140) .. controls +(left:1mm) .. (1.270);
    \end{tikzpicture}
  }
}
\subcaptionbox{
  \begin{tabular}{l}
    \rmark \ \SYS $f=1$\\
    \wmark \hspace{0.058cm} \emph{matching replies}
  \end{tabular}
  \label{fig:fp-ex3}}[0.24\textwidth]{
    \centering
    \scalebox{.75}{
      \begin{tikzpicture}
        \node[coordi] (1) at (-1.3, -1) {1};
        \node (1-text) at (-1.95, -1) {\tikznodelabel{a}};
        \node[quorum] (2) at (0, 0) {2};
        \node (2-text) at (0, 0.6) {\tikznodelabel{a,b}};
        \node[restof] (3) at (1.3, -1) {3};
        \node[restof] (4) at (.75, -2.3) {5};
        \node[quorum] (5) at (-.75, -2.3) {4};
        \node (5-text) at (-.75, -2.9) {\tikznodelabel{a,c}};
        \draw[->,thick] (1.north) -- (2);
        \draw[->,thick] (2) .. controls +(down:5mm) .. (1.65);
        \draw[->,thick] (1.285) -- (5.north);
        \draw[->,thick] (5.140) .. controls +(left:1mm) .. (1.270);
    \end{tikzpicture}
  }
}
\subcaptionbox{
  \begin{tabular}{l}
    \rmark \ \SYS $f=1$\\
    \rmark \ \emph{matching replies}
  \end{tabular}
  \label{fig:fp-ex4}}[0.24\textwidth]{
  \centering
  \scalebox{.75}{
    \begin{tikzpicture}
      \node[coordi] (1) at (-1.3, -1) {1};
      \node (1-text) at (-1.95, -1) {\tikznodelabel{a}};
      \node[quorum] (2) at (0, 0) {2};
      \node (2-text) at (0, 0.6) {\tikznodelabel{a}};
      \node[restof] (3) at (1.3, -1) {3};
      \node[restof] (4) at (.75, -2.3) {5};
      \node[quorum] (5) at (-.75, -2.3) {4};
      \node (5-text) at (-.75, -2.9) {\tikznodelabel{a}};
      \draw[->,thick] (1.north) -- (2);
      \draw[->,thick] (2) .. controls +(down:5mm) .. (1.65);
      \draw[->,thick] (1.285) -- (5.north);
      \draw[->,thick] (5.140) .. controls +(left:1mm) .. (1.270);
    \end{tikzpicture}
  }
}
  \caption{
    \labfigure{fp-ex}
    Examples in which the fast path is taken \rmark{} or not \wmark{},
    for both \SYS and protocols that
    require \emph{matching replies} from fast-quorum processes, such as EPaxos~\cite{epaxos}.
    All examples consider $n = 5$ processes while tolerating $f$ faults.
    The coordinator is always process $1$, and
    circles with a solid line represent the processes that are part of the fast quorum.
    Next to each process we depict the set of dependencies
    sent to the coordinator (e.g. \tikznodelabel{a,b}).
  }
\end{figure*}
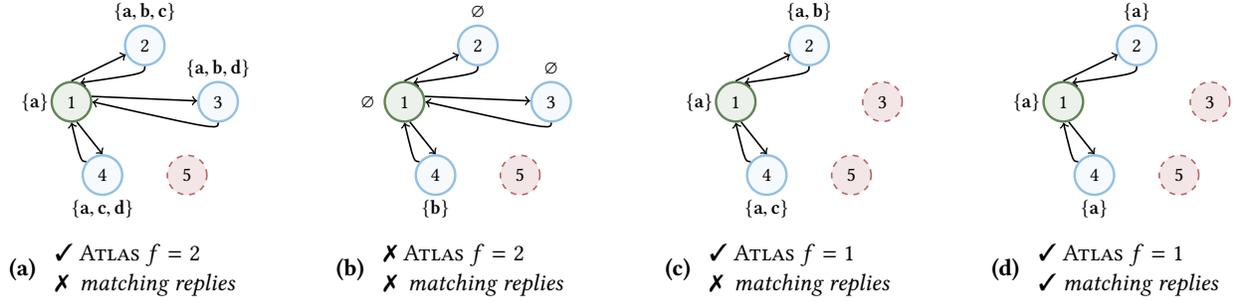
\addtolength{\belowcaptionskip}{-0.2cm}%

\subsubsection{Slow path}
If the coordinator of a command is suspected to have failed,
another process may try to take over its job and compute a different set of
dependencies. Hence, before an $\KwMCom$ message is sent, processes must reach an
agreement on its contents to satisfy Invariant~\ref{inv:cons}. They can always
achieve this by running a consensus protocol -- this is the slow path of
\SYS. Consensus is implemented using single-decree (Flexible)
Paxos~\cite{flexible-paxos}. For each identifier we allocate ballot numbers to
processes round-robin, with ballot $i$ reserved for the initial coordinator $i$
and ballots higher than $n$ for processes that try to take over. Every process
stores for each identifier $\id$ the ballot number $\KwMsgR[\id]$ it is
currently participating in and the last ballot $\KwMsgW[\id]$ in which it
accepted a proposal (if any).

When the initial coordinator $i$ decides to go onto the slow path, it performs
an analog of Paxos Phase 2: it sends an $\KwMCons$ message with its proposal and
ballot $i$ to a {\em slow quorum} that includes itself
(line~\ref{algo:new:slow-quorum})\footnote{%
  The initial coordinator $i$ can safely skip Paxos Phase 1: since processes
  perform recovery with ballots higher than $n$, no proposal with a ballot lower
  than $i$ can ever be accepted.}.
Following Flexible Paxos~\cite{flexible-paxos}, the size of the slow quorum is
only $f+1$, rather than a majority like in classical Paxos. This minimizes the
additional latency incurred on the slow path in exchange for using larger
quorums in recovery (as described below). As usual in Paxos, a process accepts
an $\KwMCons$ message only if its $\KwMsgR[\id]$ is not greater than the ballot
in the message (line~\ref{algo:new:consensus-pre}). Then it stores the proposal,
sets $\KwMsgR[\id]$ and $\KwMsgW[\id]$ to the ballot in the message, and replies
to the coordinator with $\KwMConsA$. Once the coordinator gathers $f + 1$ such
replies (line~\ref{algo:new:consensus-ack-pre}), it is sure that its proposal
will survive the allowed number of failures $f$, and it thus broadcasts the
proposal in an $\KwMCom$ message (line~\ref{algo:new:consensus-end}).

\subsubsection{Fast path}
The initial coordinator of a command can avoid consensus when it can ensure that
any process performing recovery will propose the same set of dependencies to
consensus~\cite{fast-paxos} -- this is the fast path of \SYS, in which a command
is committed after a single round trip to the closest fast quorum
(line~\ref{algo:new:fp}). In order to take the fast path, previous SMR
protocols, such as Generalized Paxos~\cite{gpaxos} and EPaxos~\cite{epaxos},
require fast-quorum replies to match exactly. One of the key innovations of
\SYS is that it is able to take the fast path even if this is not the case,
e.g., when conflicting commands are submitted concurrently. 
This feature significantly improves performance
in practice
(\refsection{evaluation}).

In more detail, the coordinator takes the fast path if every dependency reported
by some fast-quorum process is actually reported by at least $f$ such
processes. This is expressed by the condition
$\bigcup_{Q} \dep = \quorumcup{f}{Q} \dep$ in line~\ref{algo:new:fp-condition},
where
\begin{align*}
\quorumcup{f}{Q} \dep & = \{ \id \mid \multiplicity(\id) \geq f \};
\\
\multiplicity(\id) & = |\{ j \in Q \mid \id \in \dep_j \}|.
\end{align*}

\reffigure{fp-ex} contains several examples that illustrate the flexibility of
the above fast-path condition. All examples consider $n = 5$ processes while
tolerating varying numbers of faults $f$. The example in \reffigure{fp-ex1}
considers \SYS $f=2$. The coordinator of some command, process $1$, picks a
fast quorum $Q = \{1,2,3,4\}$ of size $\floor{\frac{n}{2}} + f = 4$. It receives
replies $\dep_1 = \tikznodelabel{a}$, $\dep_2 = \tikznodelabel{a,b,c}$,
$\dep_3 = \tikznodelabel{a,b,d}$, $\dep_4 = \tikznodelabel{a,c,d}$. The
coordinator then computes
$\quorumcup{\hspace{-1pt}2}{Q} \dep = \tikznodelabel{a, b, c, d}$,
i.e., all the dependencies reported at least twice.
Since $\bigcup_Q \dep = \quorumcup{\hspace{-1pt}2}{Q} \dep$,
the coordinator takes the fast path. This is not the case for
the example in \reffigure{fp-ex2} where
$\bigcup_Q \dep = \tikznodelabel{b} \not= \varnothing = \quorumcup{\hspace{-1pt}2}{Q} \dep$
($\mathbf{b}$ is excluded from
$\quorumcup{\hspace{-1pt}2}{Q} \dep$
because $\multiplicity(\mathbf{b}) = 1$). In this case the coordinator has to take the slow path.
Back in \reffigure{overview}
we had the same situation: coordinator $1$ had to take the slow path
because dependency \msgB was declared solely by process $4$.
On the other hand, coordinator $5$ was able to take the fast path, because
dependency \msgA was declared by $f=2$ processes: $2$ and $3$.

Notice that in \reffigure{fp-ex1}, the coordinator takes the fast path even
though dependencies reported by processes do not match, a situation which may
arise when conflicting commands are submitted concurrently. 
Furthermore, when $f=1$ we have
$\{ \id \mid \multiplicity(\id) < f \} = \varnothing$, so that the fast-path
condition in line~\ref{algo:new:fp-condition} always holds. Hence, \SYS $f=1$
\emph{always} takes the fast path, as is the case in
\reffiguretwo{fp-ex3}{fp-ex4}. In contrast, EPaxos is able to take the fast path
only in \reffigure{fp-ex4}, since it is the only example in which fast-quorum
replies match.

\subsubsection{Recovery idea}
The initial coordinator of a command may fail or be slow to respond,
in which case
\SYS allows a process to take over its role and recover the command and its
dependencies.  We start by describing the idea of the most subtle part of this
mechanism -- recovering decisions reached by failed coordinators via the fast
path.

Let $D = \bigcup_{Q} \dep = \quorumcup{f}{Q} \dep$ be some fast-path
proposal (line~\ref{algo:new:fp}). By definition of
$\quorumcup{f}{Q} \dep$, each $\id \in D$ was reported in the $\KwMColA$
message of at least $f$ fast-quorum processes. It follows that $D$ can be
obtained without $f - 1$ of those processes by taking the union of the
dependencies reported by the remaining processes. Moreover, as the initial
coordinator is always part of the fast quorum and each process in the quorum
combines its dependencies with the ones declared by the coordinator (i.e.,
$\past$ in line~\ref{algo:new:conflicts}), the latter is also not necessary to
obtain $D$. Thus, the proposal $D$ can be obtained without $f$ fast-quorum
processes including the initial coordinator (e.g., if the processes fail),
by combining the dependencies
reported by the remaining $\floor{\frac{n}{2}} + f - f = \floor{\frac{n}{2}}$
processes. The following property captures this observation.
\begin{property}\label{property:set-union}
  Any fast-path proposal can be obtained by taking the union of the dependencies
  sent in $\KwMColA$ by at least $\floor{\frac{n}{2}}$ fast-quorum processes
  that are not the initial coordinator.
\end{property}

As an example, assume that after the fast path is taken in \reffigure{fp-ex1},
$f = 2$ processes inside the fast quorum fail, one of them being the coordinator, process $1$.
Independently of which $\floor{\frac{n}{2}} = 2$
fast-quorum processes survive,
the proposal is always recovered by set union:
$\bigcup_{\{2,3\}} \dep = \bigcup_{\{2,4\}} \dep = \bigcup_{\{3,4\}} \dep = \tikznodelabel{a, b, c, d}$.

In the case of \reffigure{fp-ex2} it is unsafe to take the fast path since the
proposal may not be recoverable: the failure of process $4$ would lead to losing the
dependency $\mathbf{b}$, since this dependency was reported exclusively by
this process.

\begin{algorithm}[!t]
  \addtolength{\linewidth}{1em}
    \algosetup
    \setcounter{AlgoLine}{\value{CoreLastLine}}
    \SubAlgo{\Fun $\KwRec(\id)$\label{algo:new:rec}}{
      $b \leftarrow i + n (\floor{\frac{\KwMsgR[\id]}{n}} + 1)$ \label{algo:new:my-ballot} \;
      \Send $\KwMR (\id, \KwMsgM[\id], b)$ \ToAll \;
    }

    \SubAlgo{\RecA $\KwMR(\id, \_, \_)$ \RecB $j$}{
      \Pre $\id \in \KwPhDset \union \KwPhEset$\label{algo:new:rec-committed} \;
      \Send $\KwMCom(\id, \KwMsgM[\id], \KwMsgCcl[\id])$ \To $j$\label{algo:new:rec-committed-send}
    }

    \SubAlgo{\RecA $\KwMR(\id, c, b)$ \RecB $j$}{
      \Pre $\KwMsgR[\id] < b \land \id \not \in \KwPhDset \union \KwPhEset$ \label{algo:new:rec-pre} \;
      \If{$\KwMsgR[\id] = 0 \land \id \in \KwPhAset$\label{algo:new:rec-first-see}}{
        $\KwMsgCcl[\id] \leftarrow \conflicts(c)$; \label{algo:new:conflicts-rec}
        $\KwMsgM[\id] \leftarrow c$
      }
      $\KwMsgR[\id] \leftarrow b$ \label{algo:new:recovery-ballot} \;
      $\KwMsgPh[\id] \leftarrow \KwPhF$ \label{algo:new:recovery-phase}\;
      \Send $\KwMRA(\id, \KwMsgM[\id], \KwMsgCcl[\id], \KwMsgQ[\id],$
      \nonl\hspace*{1.9cm}
      $\KwMsgW[\id], b)$ \To $j$ \label{algo:new:rec-ack}
  }

    \SubAlgo{\mbox{\RecA $\KwMRA(\id, \msg_j, \dep_j, Q^0_j, ab_j, b)$ \RecBAll
      $j\hspace{1pt}{\in}\hspace{1pt} Q$}\label{algo:new:mrec-ack-pre}}{ 
      \Pre $\KwMsgR[\id] = b \land |Q| = n - f$\;
      \If{$\exists k \in Q.\, ab_k \not = 0$\label{algo:new:previous-proposal}}{
        \Let $k$ be such that $ab_k$ is maximal \label{algo:new:highest-proposal}\;
        \Send $\KwMCons(\id, \msg_k, \dep_k, b)$ \ToAll
      }
    \ElseIf{$\exists k \in Q.\, Q^0_k \not= \emptyset$\label{algo:new:qk-check}}{
      $Q' \leftarrow$ \leIf{$id.1 \in Q$\label{algo:new:coordinator-check}}{$Q$}{$Q \cap Q^0_k$}
      \Send $\KwMCons(\id, \msg_k, \bigcup_{Q'} \dep, b)$ \ToAll \label{algo:new:rec-union}
    }
    \lElse{\Send $\KwMCons(\id, \KwNoop, \emptyset, b)$ \ToAll \label{algo:new:noop}}

    \newcounter{RecoveryLastLine}
    \setcounter{RecoveryLastLine}{\value{AlgoLine}}
  }
    \caption{\SYS protocol at process $i$: recovery.}
    \labalg{recovery}
  \end{algorithm}

\subsubsection{Recovery in detail}
A process takes over as the coordinator for some command with identifier $\id$
by calling $\KwRec(\id)$ (line~\ref{algo:new:rec} in \refalg{recovery}).  In
order to find out if a decision on the dependencies of $\id$ has been reached in
consensus, the new coordinator first performs an analog of Paxos Phase 1. It
picks a ballot number it owns higher than any it participated in so far
(line~\ref{algo:new:my-ballot}) and sends an $\KwMR$ message with this ballot to
all processes.

Upon the receipt of such a message, in case $\id$ is already committed or
executed (line~\ref{algo:new:rec-committed}), the process notifies the new
coordinator with an $\KwMCom$ message. Otherwise, as is standard in Paxos, the
process accepts the $\KwMR$ message only if the ballot in the message is greater
than its $\KwMsgR[\id]$ (line~\ref{algo:new:rec-pre}). In this case, if the
process is seeing $\id$ for the first time (line~\ref{algo:new:rec-first-see}),
it computes its contribution to $\id$'s dependencies as the set of conflicting
commands (line~\ref{algo:new:conflicts-rec}). Then, the process sets $\KwMsgR[\id]$ to the new ballot
and $\KwMsgPh[\id]$ to $\KwPhF$.
Finally, the process replies with an $\KwMRA$ message
containing all the information it has regarding $\id$: the corresponding command
($\KwMsgM$), its current set of dependencies ($\KwMsgCl$),
the ballot at which these were previously accepted ($\KwMsgW$), and the fast quorum ($\KwMsgQ$).
Note that $\KwMsgQ[\id] = \emptyset$ if the process did not see the initial
$\KwMCol$ message, and $\KwMsgW[\id] = 0$ if the process has not yet accepted any
consensus proposal.

In the $\KwMRA$ handler (line~\ref{algo:new:mrec-ack-pre}), the new coordinator
computes its proposal given the information provided by processes and sends this
proposal in an $\KwMCons$ message to all processes. As in Flexible Paxos, the
new coordinator waits for $n - f$ $\KwMRA$ messages. This guarantees that, if a
quorum of $f + 1$ processes accepted an $\KwMCons$ message with a proposal
(which could have thus been sent in an $\KwMCom$ message), the new coordinator
will find out about this proposal. To maintain Invariant~\ref{inv:cons}, if any
process previously accepted a consensus proposal
(line~\ref{algo:new:previous-proposal}), by the standard Paxos
rules~\cite{paxos,flexible-paxos}, the coordinator selects the proposal accepted
at the highest ballot (line~\ref{algo:new:highest-proposal}).

If no consensus proposal has been accepted before, the new coordinator checks
whether any of the processes that replied has seen the initial $\KwMCol$
message, by looking for any non-empty fast quorum
(line~\ref{algo:new:qk-check}). If the fast quorum is known, depending on
whether the initial coordinator replied or not, there are two possible cases
that we describe next.

\emph{1) The initial coordinator replies to the new one ($\id.1 \in Q$,
  line~\ref{algo:new:coordinator-check}).}  In this case the initial coordinator
has not taken the fast path before receiving the $\KwMR$ message from the new
one, as it would have replied with $\KwMCom$ instead of $\KwMRA$
(line~\ref{algo:new:rec-committed-send}). It will also not take the fast path in
the future, since when processing the $\KwMR$ message it sets the command phase
to $\KwPhF$ (line~\ref{algo:new:recovery-phase}), which invalidates the
$\KwMColA$ precondition (line~\ref{algo:new:collect-ack-pre}). Since the initial
coordinator never takes the fast path, the new coordinator can choose the
command's dependencies in any way, as long as it maintains
Invariant~\ref{inv:conf}. By Property~\ref{lemma:ordering}, this is satisfied
if the coordinator chooses the set union of the dependencies declared by at
least a majority of processes. Hence, the new coordinator takes the union of the
dependencies reported by the
$n - f \geq n - \floor{\frac{n-1}{2}} \geq \floor{\frac{n}{2}} + 1$ processes in
$Q$ (line~\ref{algo:new:rec-union}).

\emph{2) The initial coordinator does not reply to the new one ($\id.1 \not\in Q$,
  line~\ref{algo:new:coordinator-check}).}  In this case the initial coordinator could have taken the
fast path and, if it did, the new coordinator must propose the same dependencies.
Given that the recovery quorum $Q$ has size $n - f$ 
and the fast quorum $Q^0_k$ has size $\floor{\frac{n}{2}} + f$, 
the set of processes $Q' = Q \cap Q^0_k$ (line~\ref{algo:new:rec-union})
contains at least $\floor{\frac{n}{2}}$ fast-quorum processes (distinct from the
initial coordinator, as it did not reply).
Furthermore, recall that when a process from $Q'$ replies to the new coordinator, it sets 
the command phase to $\KwPhF$ (line~\ref{algo:new:recovery-phase}),
which invalidates the $\KwMCol$ precondition (line~\ref{algo:new:collect-pre}).
Hence, if the initial coordinator took the fast path, then each process
in $Q'$ must have processed its $\KwMCol$ before the $\KwMR$ of the new
coordinator, and reported in the latter the dependencies from the former.
Then using Property~\ref{property:set-union}, the new coordinator recovers the
fast-path proposal by taking the union of the dependencies from the processes in
$Q'$ (line~\ref{algo:new:rec-union}). It can be shown that, even if the initial
coordinator did not take the fast path, this computation maintains
Invariant~\ref{inv:conf}, despite $Q'$ containing only $\floor{\frac{n}{2}}$
processes and Property~\ref{lemma:ordering} requiring a majority of them. This
is for the same reason this number of processes is sufficient in
Property~\ref{property:set-union}: dependencies declared by the initial
coordinator are included into those declared by other fast-quorum processes
(line~\ref{algo:new:conflicts}).

It remains to address the case in which the process performing the recovery
observes that no process saw the initial fast quorum, and consequently the
submitted command (line~\ref{algo:new:noop}). For instance, suppose that
process $i$ sends an $\KwMCol(\id, c, \_, \_)$ only to process $j$ and then fails.
Further, assume that $j$ receives another $\KwMCol(\_, c', \_, \_)$ from process
$k$, replies with a dependency set that includes the identifier $\id$ of $c$,
and also fails.  Now, process $k$ cannot execute $c'$ without executing $c$
(since $c$ is a dependency of $c'$), and it cannot execute $c$ because its
payload has been lost.  We solve this issue similarly to EPaxos:
if a process takes over as the new coordinator and cannot find the associated
payload, it may replace it by a special $\KwNoop$ command
(line~\ref{algo:new:noop}) that is not executed by the protocol and conflicts
with all commands.  With this, the final command for some identifier can take
two possible values: the one submitted (line~\ref{algo:new:request}) or
$\KwNoop$. It is due to this that we include the command payload in addition to
its dependencies into consensus messages associated with a given identifier
(e.g., line~\ref{algo:new:sp}), thus ensuring that a unique payload will be
chosen (Invariant~\ref{inv:cons}). Due to the possible replacement of a command
by a $\KwNoop$, the protocol actually ensures the following weakening of 
Invariant~\ref{inv:conf}, which is still sufficient to establish its correctness.
\begin{custominvariant}{\textsc{2$'$}}\label{inv:conf-a}
  Assume that messages $\KwMCom(\id, c, D)$ and $\KwMCom(\id', c', D')$ have
  been sent. If $\id\not=\id'$, $\conflict(c, c')$, $c \not= \KwNoop$ and
  $c' \not= \KwNoop$, then either $\id' \in D$ or $\id \in D'$, or both.
\end{custominvariant}

\subsubsection{Command execution}
\refalg{execution} describes a background task employed by \SYS that is
responsible for executing commands after they are committed.  This task runs in
an infinite loop trying to execute a {\em batch} of commands.
We define a batch as the smallest set of
committed identifiers $S \subseteq \KwPhDset$ such that, for each identifier
$\id\in S$,
its dependencies are in the batch or already executed: $\KwMsgCl[\id] \subseteq S \cup \KwPhEset$ (line~\ref{algo:new:exec-start}).
This ensures that a command can only be executed after its dependencies or
in the same batch with them, which yields the following invariant.
\begin{invariant}\label{inv:del}
Assume two commands $c$ and $c'$ with identifiers $\id$ and $\id'$, respectively.
  If a process executes a batch of commands containing $c$ before executing a batch containing $c'$,
  then $\id' \notin \KwMsgCl[\id]$.
\end{invariant}

As processes agree on the dependencies of each command
(Invariant~\ref{inv:cons}), the batch in which a
command is executed is equal in every process, as reflected in following invariant.

\begin{algorithm}[!t]
    \algosetup
    \setcounter{AlgoLine}{\value{RecoveryLastLine}}
    \SubAlgo{\Loop}{
      \Let $S$ be the smallest subset of $\KwPhDset$ such that\;
      {
        \nonl
        \hspace{0.65em}
          $\forall \id \in S.\, (\KwMsgCcl[\id] \subseteq S \cup \KwPhEset)$
          \label{algo:new:exec-start} \;
      }
      \For{$\id \in S$ \OrderedBy $<$\label{algo:new:exec-sort}}{
        $\pcdeliver(\KwMsgM[\id])$ \;
        $\KwMsgPh[\id] \leftarrow \KwPhE$
        \label{algo:new:exec-end}
      }
    }
    \caption{\SYS protocol: command execution.}
    \labalg{execution}
  \end{algorithm}

\begin{invariant}\label{inv:del2}
  If a process executes command $c$ in batch $S$
  and another process executes the same command $c$ in batch $S'$,
  then $S = S'$.
\end{invariant}

Inside a batch, commands are ordered according to some fixed total order $<$ 
on identifiers (line~\ref{algo:new:exec-sort}).
This guarantees that conflicting commands are executed in a consistent
order across all processes.

Consider again the example in \reffigure{overview}, where the final dependencies
are $\KwMsgCcl[\msgA] = \varnothing$ and $\KwMsgCcl[\msgB] = \{\msgA\}$. 
There are two cases, depending on the order in which processes commit the
commands \msgA and \msgB: 
{\setlength{\leftmargini}{10pt}
\begin{itemize}
\item \msgA then \msgB: at processes $1$ and $2$.  When the command \msgA is
  committed, the processes execute it in a singleton batch, as it has
  no dependencies. When later the command \msgB is committed, the processes
  execute it in a singleton batch too, since its only dependency \msgA has
  already been executed.
\item \msgB then \msgA: at processes $3$, $4$ and $5$.  When the command \msgB
  is committed, the processes cannot execute it, as its dependency \msgA has not
  yet been committed. When later the command \msgA is committed, the processes
  execute two singleton batches: first \msgA, then \msgB.
\end{itemize}} 
Note that \msgA is executed before \msgB in both cases, thus
ensuring a consistent execution order across processes.

Assume now we had final dependencies $\KwMsgCcl[\msgA] = \{\msgB\}$ and
$\KwMsgCcl[\msgB] = \{\msgA\}$. In this case, independently of the order in
which processes commit the commands, a batch will only be formed when both are
committed.
Since all processes will form the same batch containing both \msgA and \msgB,
these commands will be executed in a predefined order on their identifiers, again
ensuring a consistent execution order.

\subsection{\SYS Properties and Comparison with EPaxos}
\labsection{properties}

\paragraph{Complexity.}
\SYS commits a command after two communication delays when taking the fast path, and four otherwise.
As pointed out in \refsection{algo:details}, when $f = 1$, a fast quorum contains exactly a majority of processes and \SYS \emph{always} takes the fast path.
This is optimal for leaderless protocols~\cite{cap,lower-bounds} and results in a significant
performance pay-off (\refsection{evaluation}).

\paragraph{Fault tolerance.}
\SYS is parameterized by the number of tolerated concurrent faults $f$: smaller
values of $f$ yield smaller fast and slow quorums, thus reducing latency. As
observed in the literature~\cite{spanner,xpaxos} and as we experimentally
confirm in \S\ref{sect:evaluation:synchrony}, assuming small values of $f$ is
acceptable for geo-distribution. Furthermore, violating our assumption that the
number of failures is bounded by $f$ may only compromise the liveness of the
protocol, and never its safety: if more than $f$ transient outages occur, due
to, e.g., connectivity problems, \SYS will just block until enough sites are
reachable.

\paragraph{Comparison with EPaxos.}
\SYS belongs to the family of leaderless SMR protocols.  We now provide a
concise comparison with the most prominent protocol in this family,
EPaxos~\cite{epaxos}. The two protocols share the message flow, including the
splitting into fast and slow paths. However, as we demonstrate experimentally
in \refsection{evaluation}, \SYS significantly outperforms EPaxos, which is due
to a number of novel design decisions that we took.

First, EPaxos requires the conflicts reported by the fast quorum processes to the coordinator
to match exactly, whereas \SYS allows processes to report different
dependencies, as long as each dependency can be recovered after $f$ failures.
This allows \SYS to take the fast path even when non-commuting commands are
submitted concurrently.

Second, \SYS allows choosing the number of failures $f$ independently of the
size of the system $n$, which yields fast quorums of size
$\left\lfloor \frac{n}{2} \right\rfloor + f$. EPaxos assumes up to
$\left\lfloor \frac{n}{2} \right\rfloor$ failures and sets the fast quorum size
to $\left\lfloor \frac{3n}{4} \right\rfloor$. Our decision results in smaller
quorums for small values of $f$, which are appropriate in planet-scale
systems~\cite{xpaxos,spanner,blotter}; smaller quorums then result in lower latency.
Note that EPaxos cannot be straightforwardly modified to exploit the independent
bound on failures $f$ due to its very complex recovery
mechanism~\cite{epaxos-thesis,pando} (which, in fact, has been recently shown to
contain a bug~\cite{epaxos-bug}). In contrast, \SYS achieves its
smaller fast quorums with a significantly simpler recovery protocol that is able
to recover fast-path decisions using Property~\ref{property:set-union}.

\subsection{Correctness}
\label{sect:correctness}

We have rigorously proved Invariants~\ref{inv:cons} and \ref{inv:conf} (see
\tra{\ref{app:On}}{A}; we omit the easy proofs of Invariants~\ref{inv:del}
and~\ref{inv:del2}). We now prove that the protocol invariants imply the
correctness of \SYS, i.e., that it satisfies the SMR specification. The only
nontrivial property is Ordering, which we prove next.

\begin{lemma}\label{thm:assym}
The relation $\bigcup_{i=1}^n {\mapsto_i}$ is asymmetric.
\end{lemma}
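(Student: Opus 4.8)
The plan is to argue by contradiction, using the execution mechanism of \refalg{execution} together with the committed invariants. Suppose toward a contradiction that $c \mapsto_i d$ and $d \mapsto_j c$ hold for some processes $i, j \in \Proc$ and some pair of conflicting commands $c, d$ with identifiers $\id$ and $\id'$. Since both are executed, neither equals $\KwNoop$ (which is never executed). Let $D$ and $D'$ be their final dependency sets, which are well defined by Invariant~\ref{inv:cons}. Because $c$ and $d$ conflict and neither is $\KwNoop$, Invariant~\ref{inv:conf-a} gives that $\id' \in D$ or $\id \in D'$ (or both); this disjunction is the fact I will ultimately contradict.

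First I would establish that the partition into batches is process-independent. By Invariant~\ref{inv:del2}, the batch in which any given executed command appears is the same at every process. From this I would derive that $c$ and $d$ share a batch at $i$ if and only if they share a batch at $j$: if they lie in a common batch $S$ at one process, then applying Invariant~\ref{inv:del2} separately to $c$ and to $d$ shows that each of their batches at the other process is also $S$. This reduces the argument to two mutually exclusive, globally consistent cases.

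In the \emph{same-batch} case, $c$ and $d$ belong to a single batch at both $i$ and $j$, and within a batch commands execute in the fixed total order $<$ on identifiers (line~\ref{algo:new:exec-sort}). Then $c \mapsto_i d$ forces $\id < \id'$, while $d \mapsto_j c$ forces $\id' < \id$, contradicting the fact that $<$ is a strict total order. In the \emph{distinct-batch} case, at $i$ the batch of $c$ strictly precedes that of $d$, so Invariant~\ref{inv:del} yields $\id' \notin D$; symmetrically, at $j$ the batch of $d$ strictly precedes that of $c$, so $\id \notin D'$. This contradicts the disjunction $\id' \in D \lor \id \in D'$ obtained from Invariant~\ref{inv:conf-a}. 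In either case we reach a contradiction, so $\bigcup_{i=1}^n {\mapsto_i}$ is asymmetric.

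The main obstacle is the bookkeeping around batches: Invariant~\ref{inv:del} is a statement about a \emph{single} process, whereas the two occurrences $c \mapsto_i d$ and $d \mapsto_j c$ live at possibly different processes. The crux is therefore transferring the same-/distinct-batch dichotomy between $i$ and $j$, which is exactly what Invariant~\ref{inv:del2} provides; once this batch-consistency step is made rigorous, both cases close immediately.
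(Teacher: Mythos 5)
Your proof is correct and takes essentially the same route as the paper's: a contradiction argument with a case split on whether $c$ and $d$ share a batch, using Invariant~\ref{inv:del2} to transfer the same-/distinct-batch dichotomy between processes $i$ and $j$, the fixed total order $<$ to close the same-batch case, and Invariants~\ref{inv:del} and~\ref{inv:conf-a} to close the distinct-batch case. The only detail the paper makes explicit that you leave implicit is that Integrity rules out $c = d$ (and $i = j$), which is needed since Invariant~\ref{inv:conf-a} only applies when $\id \neq \id'$.
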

\paragraph{\it Proof.} By contradiction, assume that for some processes $i$ and $j$
and conflicting commands $c$ and $c'$ with identifiers $\id$ and $\id'$, we have
$c \mapsto_i c'$ and $c' \mapsto_j c$; then $c \not= \KwNoop$ and 
$c' \not= \KwNoop$. By Integrity we must have $i\not=j$ and
$c \not= c'$.

Assume first that $c$ and $c'$ are executed at process $i$ in the same batch
$S$. Then by Invariant~\ref{inv:del2} they also have to be executed at process
$j$ in the batch $S$. Since inside a batch commands are ordered using the fixed
order $<$ on their identifiers, $c$ and $c'$ have to be executed in the same
order at the two processes: a contradiction.

Assume now that $c$ and $c'$ are not executed at process $i$ in the same
batch. Then by Invariant~\ref{inv:del2} this also must be the case at process
$j$. Hence, Invariant~\ref{inv:del} implies that $\id' \not\in \KwMsgCl[\id]$ at
process $i$, and $\id \not\in \KwMsgCl[\id']$ at process $j$. Then process $i$
received $\KwMCom(\id, c, D)$ with $\id' \not\in D$, and process $j$ received
$\KwMCom(\id', c', D')$ with $\id \not\in D'$, which contradicts
Invariant~\ref{inv:conf-a}.\qed

\begin{lemma}\label{thm:path}
  Assume $c_1 \mapsto \ldots \mapsto c_n$ for $n \ge 2$. Whenever a process $i$
  executes $c_n$, some process has already executed $c_1$.
\end{lemma}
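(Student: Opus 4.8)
My plan is to reduce the whole statement to a single-edge fact and then run a short induction on the length of the chain. The single-edge fact (call it the \emph{step lemma}) is: if $c \mapsto d$ and a process $i$ executes $d$ at some global time $t$, then some process executes $c$ strictly before $t$. Granting this, the lemma follows by induction. The base case $n = 2$ is exactly the step lemma. For the inductive step I apply the step lemma to the final edge $c_{n-1} \mapsto c_n$: it yields a process $j$ that executes $c_{n-1}$ at some time $t' < t$, where $t$ is when $i$ executes $c_n$. The induction hypothesis, applied to the shorter chain $c_1 \mapsto \ldots \mapsto c_{n-1}$ and to $j$'s execution of $c_{n-1}$, then produces an execution of $c_1$ before $t'$, hence before $t$, which is what we want.

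To prove the step lemma I split on the kind of the edge $c \mapsto d$. If $c \leadsto d$, then $c$ is executed somewhere before $d$ is submitted; since Validity forces $d$ to have been submitted before $i$ executes it, $c$'s execution precedes $t$, and this case is immediate. The substantive case is $c \mapsto_k d$ for some $k$: here $c$ and $d$ conflict, are both executed (so neither equals $\KwNoop$), and $k$ executes $c$ before $d$. I would strengthen this into the claim that the per-process orderings of a conflicting pair are globally consistent --~every process that executes $d$ executes $c$ before it~-- which in particular covers the process $i$ of the step lemma and even gives that $i$ itself executes $c$ before $d$.

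This global-consistency claim is the delicate heart of the argument, and I would prove it through the batch structure of execution. By Invariant~\ref{inv:del2} and Invariant~\ref{inv:cons}, each executed command has a well-defined, process-independent batch; write $B(c)$ and $B(d)$ for the batches of $c$ and $d$. If $B(c) = B(d)$, both commands are executed inside this common batch in the fixed identifier order $<$, so the order in which $k$ executes them (namely $c$ before $d$) is reproduced at every process, including $i$. If $B(c) \neq B(d)$, then at $k$ the batch of $c$ precedes that of $d$, whence Invariant~\ref{inv:del} tells us $d$ is not a dependency of $c$; Invariant~\ref{inv:conf-a} (applicable since $c, d$ conflict and are non-$\KwNoop$) then forces $c$ to be a dependency of $d$, and the batch-formation condition on line~\ref{algo:new:exec-start} guarantees that when $i$ forms the batch of $d$ the command $c$ is already in $\KwPhEset$, i.e.\ already executed at $i$. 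In both cases $i$ executes $c$ strictly before $d$.

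The precedence obtained is strict in every branch --~within a batch by $<$, across batches because a dependency is executed in an earlier batch, and in the $\leadsto$ case because submission strictly precedes execution~-- and the induction preserves strictness. The only routine care needed is the bookkeeping of global times across different processes and the appeal to Invariant~\ref{inv:cons} to ensure dependency sets, and hence batches, are agreed upon. Strictness is also exactly what makes this lemma usable for Ordering: applied around a hypothetical $\mapsto$-cycle it would force the earliest execution of some command on the cycle to be preceded by an earlier execution of that same command, a contradiction.
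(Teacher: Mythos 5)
Your proof is correct, but it is organized differently from the paper's. The paper inducts on the chain length and, in the inductive step, case-splits on the last edge: for $c_{n-1} \mapsto_j c_n$ it asserts that process $i$ must order the two commands one way or the other, rules out $c_n \mapsto_i c_{n-1}$ by the asymmetry lemma (Lemma~\ref{thm:assym}), and then applies the induction hypothesis \emph{at process $i$ itself}. You instead factor out a single-edge step lemma, strengthened to the claim that \emph{every} process executing $d$ executes $c$ strictly before it, proved directly from the batch machinery (Invariants~\ref{inv:cons}, \ref{inv:del}, \ref{inv:del2} and \ref{inv:conf-a}, plus the batch-formation condition at line~\ref{algo:new:exec-start} and the fixed order $<$ at line~\ref{algo:new:exec-sort}); the induction then becomes uniform, applying the hypothesis at the witness process $j$ rather than at $i$. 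The trade-off: the paper's argument is shorter because it reuses Lemma~\ref{thm:assym} as a black box, whereas yours re-derives (and in fact strengthens) that lemma inside the step lemma --~your global-consistency claim immediately implies asymmetry. In exchange, your route is more self-contained and actually tightens a step the paper leaves tacit: the assertion that ``we must have either $c_{n-1} \mapsto_i c_n$ or $c_n \mapsto_i c_{n-1}$'' silently assumes $i$ executes $c_{n-1}$ at all, which is exactly what your cross-batch analysis establishes ($\id_c \in D[\id_d]$ forces $\id_c \in S \cup \KwPhEset$ at $i$, and Invariant~\ref{inv:del2} excludes $\id_c \in S$ since the batches differ). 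One nitpick: in that last step you jump straight to ``$c$ is already in $\KwPhEset$''; spell out the exclusion of $\id_c \in S$ via batch agreement, since that is where Invariant~\ref{inv:del2} does its work.
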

\paragraph{\it Proof.}
We prove the lemma by induction on $n$. The base case of $n=2$ directly follows
from the definition of $\mapsto$. Take $n > 3$ and assume
$c_1 \mapsto \ldots \mapsto c_{n-1} \mapsto c_n$. Consider the moment when a
process $i$ executes $c_n$. We want to show that by this moment some process has
already executed $c_1$. Since $c_{n-1} \mapsto c_n$, either
$c_{n-1} \leadsto c_n$ or $c_{n-1} \mapsto_j c_n$ for some process $j$. Consider
first the case when $c_{n-1} \leadsto c_n$. Then $c_{n-1}$ is executed at some
process $k$ before $c_n$ is submitted and, hence, before $c_n$ is executed at
process $i$. By induction hypothesis, $c_1$ is executed at some process before
$c_{n-1}$ is executed at process $k$ and, hence, before $c_n$ is executed at
process $i$, as required. We now consider the case when $c_{n-1} \mapsto_j c_n$
for some process $j$. Since process $i$ executes $c_n$, we must have either
$c_{n-1} \mapsto_i c_n$ or $c_n \mapsto_i c_{n-1}$. The latter case would
contradict Lemma~\ref{thm:assym}, so that $c_{n-1} \mapsto_i c_n$. By induction
hypothesis, $c_1$ is executed at some process before $c_{n-1}$ is executed at
process $i$ and, hence, before $c_n$ is executed at process $i$, as
required.\qed

\paragraph{\it Proof of Ordering.}
By contradiction, assume that $c_1 \mapsto \ldots \mapsto c_n = c_1$ for
$n \ge 2$. Then some process executed $c_1$. Consider the moment when the first
process did so. By Lemma~\ref{thm:path} some process has already executed $c_1$
before this, which yields a contradiction.\qed

\section{Optimizations}
\labsection{execution:improvements}

This section presents two mechanisms employed by the \SYS protocol to accelerate command execution.

\paragraph{Reducing dependencies in the slow path.}
When computing dependencies in the slow path, instead of proposing $\bigcup_{Q} \dep$ to consensus (line~\ref{algo:new:sp}),
the coordinator can propose $\quorumcup{f}{Q} \dep$.
This allows \SYS to prune from dependencies those commands that been reported by less than $f$ fast-quorum processes ($\{ \id \mid \multiplicity(\id) < f \}$)
without breaking Invariant~\ref{inv:conf-a}.
Smaller dependency sets allow batches to form more quickly in execution (\refalg{execution}),
thus reducing the delay between a command being committed and executed.

Back in \reffigure{overview}, command \msgB was reported to coordinator $1$
solely by process $4$. Since \msgB was reported by less than $f=2$ processes,
the above optimization allows coordinator $1$ to prune \msgB from the dependencies
of \msgA, thus proposing $\KwMsgCcl[\msgA] = \varnothing$ to consensus. This
maintains Invariant~\ref{inv:conf-a}, as \msgA is still a dependency of \msgB:
$\KwMsgCcl[\msgB] = \{\msgA\}$.
In \tra{\ref{app:slow-path-opt}}{A.2.1} we prove that this optimization always
maintains Invariant~\ref{inv:conf-a}.

\paragraph{Non-fault-tolerant reads.}
We observe that reads can be excluded from dependencies at lines~\ref{algo:new:past} and \ref{algo:new:conflicts} when the conflict relation between commands is transitive.
In this case, a read is never a dependency and thus it will never block a later command,
even if it is not fully executed, e.g., when its coordinator fails (or hangs).
For this reason, reads can be executed in a non-fault-tolerant manner.
More precisely, for some read with identifier $\id$, the coordinator selects a plain majority as a fast quorum (line~\ref{algo:new:fast-quorum}), independently of the value of $f$.
Then, at the end of the $\KwPhB$ phase, it immediately commits $\id$, setting $\KwMsgCcl[\id]$ to the union of all dependencies returned by this quorum
(line~\ref{algo:new:fp}).
This optimization, that we denote by \readopt, accelerates the execution of
linearizable reads and reduces their impact in the protocol stack.
We show its correctness in \tra{\ref{appendix:smr}}{B}.
The transitivity requirement on conflicts is satisfied by many common applications.
We experimentally evaluate the case of a key-value store in \refsection{evaluation}.

\section{Performance Evaluation}
\labsection{evaluation}

In this section we experimentally compare \SYS with Flexible Paxos (FPaxos) \cite{flexible-paxos}
and two leaderless protocols, EPaxos~\cite{epaxos} and Mencius~\cite{mencius}.
Mencius distributes the leader responsibilities round-robin among replicas; because of this, executing a command in Mencius requires contacting all replicas.
As discussed previously, FPaxos uses a quorum of $f+1$ replicas in the failure-free case
in exchange for a bigger quorum of $n-f$ replicas on recovery.

To improve the fairness of our comparison, \SYS and EPaxos use the same codebase.
This codebase consists of a server component, written in Erlang (3.7K~SLOC), and a client component, written in Java (3.1K~SLOC).
The former commits commands, while the latter executes them.
Thus, the implementation of two protocols differs only in the logic of the commit component.
For Mencius and Paxos we use the Golang implementation provided by the authors of EPaxos \cite{epaxos},
which we extended to support FPaxos.

Our evaluation takes place on Google Cloud Platform (GCP), in a federation of Kubernetes clusters \cite{k8s}.
The federation spans from 3 to 13 geographical regions spread across the world, which we call \emph{sites}.
When protocols are deployed in all 13 sites, we have 4 sites in Asia, 1 in Australia, 4 in Europe, 3 in North America, and 1 in South America. 
A site consists of a set of virtualized Linux nodes, %
each an 8-core Intel Xeon machine with 30~GB of memory (\texttt{n1-standard-8}).
At a site, the SMR protocol and its clients execute on distinct machines.
When benchmarking FPaxos, we take as leader the site that minimizes the standard deviation of clients-perceived latency.
This site corresponds to the fairest location in the system, trying to satisfy uniformly all the clients.

\subsection{Bounds on Failures}
\labsection{evaluation:synchrony}

In a practical deployment of \SYS, a critical parameter is the number of concurrent site failures $f$ the protocol can tolerate.  
It has been reported that concurrent site failures are rare in geo-distributed systems~\cite{spanner}. %
However, the value of $f$ should also account for asynchrony periods during which sites cannot communicate due to link failures: if more than $f$ sites are unreachable in this way, our protocol may block for the duration of the outage.
We have thus conducted an experiment to check that assuming small values of $f$ is still appropriate when this is taken into account.

\begin{figure}[!t]
  \centering
  \includegraphics[scale=0.6,height=4.1cm,keepaspectratio]{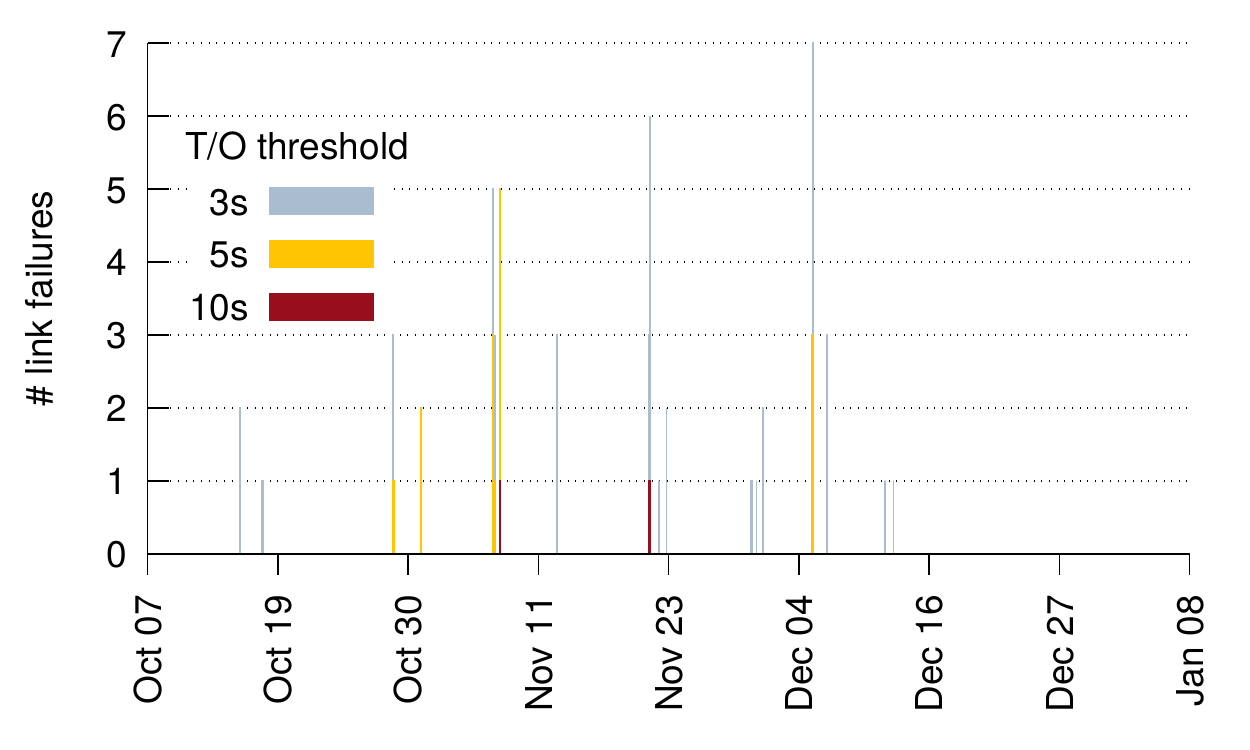}
  \caption{%
    \labfigure{ping}
    The number of simultaneous link failures among 17 sites in GCP when varying the timeout threshold.
  }
\end{figure}

Our experiment ran for 3 months (October 2018 -- January 2019) among 17 sites, the maximal number of sites available in GCP at the time.
During the experiment, sites ping each other every second (in the spirit of \cite{xpaxos} but on a much larger scale).
A link failure occurs between two sites when one of them does not receive a reply after a (tunable) amount of time.
\reffigure{ping} reports the number of simultaneous link failures for various timeout thresholds.
Note that no actual machine crash occurred during the campaign of measurements.

When the timeout threshold is set to 10s, only two events occur, each with a single link failure.
Fixing the threshold to either 3s or 5s leads to two events of noticeable length.
During the first event, occurring on November 7, the links between the Canadian site (QC) and five others are slow for a couple of hours.
During the second event, on December 8, the links between Taiwan (TW) and seven other sites are slow for around two minutes.

From the data collected, we compute the value of $f$ as the smallest number
of sites $k$ such that, at any point in the experiment, crashing $k$ sites would
cover all the slow links.
During our experiment, timeouts were reported on the links incident to at most a single site (e.g., the Canadian site on November 7).
Thus, we may conclude that $f \leq 1$ held during the whole experiment, even with the smallest timeout threshold.
In other words, \SYS with $f \geq 1$ would have been always responsive during this 3-month experiment.
In light of these results, we evaluate deployments of \SYS in which $f$ is set to $1$, $2$ or $3$.

\subsection{Benchmarks}
\labsection{evaluation:micro}

Our first set of experiments uses a microbenchmark -- a stub application that
executes dummy commands
(\refsection{evaluation:micro}-\refsection{evaluation:faults}). We then evaluate
\SYS with a geo-replicated key-value store under the YCSB workload~\cite{ycsb} (\refsection{evaluation:ycsb}).
In our microbenchmark a varying number of closed-loop clients access the service
at the closest (non-failed) site.
Clients measure latency as the time between submitting a command and the system executing it.
Each command carries a key of 8 bytes and
(unless specified otherwise) a payload of 100 bytes.  We assume that commands
conflict when they carry the same key. To measure performance under a rate
$\rho$ of conflicting commands, a client chooses key $0$ with a probability
$\rho$, and some unique key otherwise.

\subsection{Fast-Path Likelihood}
\labsection{evaluation:fast}

\reffigure{fastpath} evaluates the benefits of our new fast-path condition.
To this end, it compares the fast-path ratio of \SYS and EPaxos for different conflict rates and values of $f$.
The system consists of 3 sites when $f=1$, 5 sites when $f=2$, and 7 sites when $f=3$.
There is a single client per site (the results with more clients are almost identical).

As noted before, \SYS always commits a command on the fast path when $f=1$.
For higher values of $f$, our condition for taking the fast-path significantly improves its likelihood in comparison to EPaxos.
With 5 sites and $f=2$, when the conflict rate increases by 20\%, the ratio of fast paths in EPaxos decreases on average by 20\%.
In contrast, the fast-path ratio in \SYS only decreases by 10\%.
When all commands conflict, EPaxos rarely takes the fast path, while \SYS does so for 50\% of  commands.
Similar conclusions can be drawn from \reffigure{fastpath} when the two protocols are deployed with $f=3$.

\begin{figure}[t]
  \centering
  \fontsize{12}{14}\selectfont
  \scalebox{.6}{\input{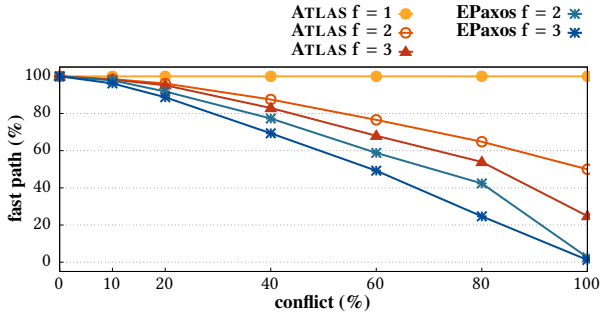}}
  \caption{
    \labfigure{fastpath}
    Ratio of fast paths for varying conflict rates.}
\end{figure}

\subsection{Planet-Scale Performance}
\labsection{evaluation:scalability}

We now consider two planet-scale scenarios that motivate the design of \SYS.
In these experiments we measure how the performance of \SYS evolves as the system scales up progressively from 3 to 13 sites.
In the first experiment, the load on \SYS is constant, using a fixed number of clients spread across all 13 sites.  
We demonstrate how bringing the service closer to already existing clients, by adding new replicas, improves the latency these clients perceive.
In the second experiment, each \SYS site hosts a fixed number of clients, so that the growth in the number of sites translates into increased load. This models a scenario where the service expands to new locations around the globe in order to serve new clients in these locations.
In this case we demonstrate that \SYS gracefully copes with this growth, maintaining its performance in contrast to state-of-the-art SMR protocols.

\begin{figure}[t]
  \centering
  \fontsize{12}{14}\selectfont
  \scalebox{.6}{\input{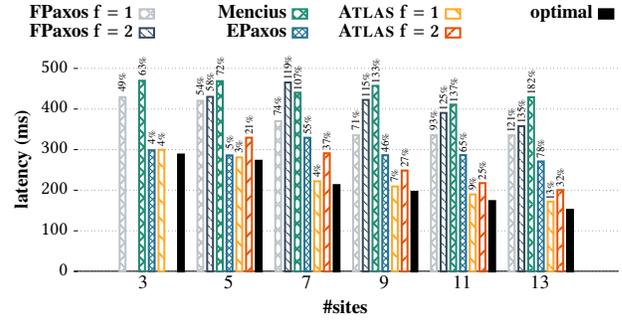}}
  \caption{
    \labfigure{latency}
    Latency when scaling-out from 3 to 13 sites
    with 1000 clients spread across 13 sites
    and $2\%$ conflict rate. Percentages indicate the overhead wrt the optimal performance.
  }
\end{figure}

\paragraph{Bringing the service closer to clients.}
We deploy 1000 clients equally spread over 13 sites, which issue commands at a fixed 2\% conflict rate.
\reffigure{latency} reports how the average latency changes as \SYS is gradually deployed closer to client locations.
  The black bar in \reffigure{latency} gives the average of the sum of round-trip
  latencies from a client to the closest coordinator, and from the coordinator
  to its closest majority.
  As clients execute no protocol logic and may not be co-located with sites, 
  this gives the optimal latency for leaderless protocols (\refsection{properties}).
The percentages on bars indicate the overhead of the different protocols with respect to this theoretical value.

As shown in \reffigure{latency}, \SYS improves its performance when deployed closer to clients:
these can access a closer coordinator site, which in its turn accesses the closest fast quorum of sites.
In \reffigure{latency}, the latency of \SYS $f=1$ improves on average by 25ms whenever two new sites are added; for $f=2$ this improvement is 33ms.
For 13 sites, the optimal latency is 151ms, and \SYS $f=1$ is only 13\% above this value, with an average latency of 172ms; \SYS $f=2$ is 32\% above the optimum, with an average latency of 200ms.
Overall, going from 3 to 13 sites with \SYS ($f=1,2$) cuts down client-perceived latency by 39\%-42\%.

As seen in \reffigure{latency}, the performance of \SYS greatly contrasts with that of the three other SMR protocols.
With 13 sites, FPaxos executes commands after 336ms when $f=1$ and after 358ms when $f=2$, which is almost twice as slow as \SYS with identical failure assumptions.
This large gap comes from the fact that, for a command to execute in a
leader-based protocol, clients wait for four message delays on the critical path: a round trip from the client to the leader, and a round trip from the leader to a quorum.

The performance of EPaxos remains almost constant, within 10\% of 300ms.
With 13 sites, EPaxos is 78\% slower than the optimum, and 57\% slower than \SYS $f = 1$.
This penalty is due to the large fast quorums it employs.

Finally, Mencius exhibits a high latency -- above 400ms -- in every configuration.
This is because a replica needs to contact all the other replicas to execute a command,
and thus, the performance of Mencius is bounded by the speed of its slowest replica.

\begin{figure}[t]
  \centering
  \fontsize{12}{14}\selectfont
  \scalebox{.6}{\input{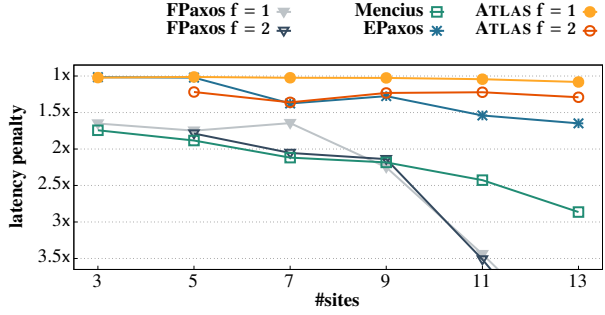}}
  \caption{
    \labfigure{tput-relative}
    Latency penalty (with respect to the optimal performance) when scaling-out from 3 to 13 sites,
    with 128 clients deployed on each site,
    and $1\%$ of conflict rate.
  }
\end{figure}

\paragraph{Expanding the service.}
We now consider another planet-scale experiment that models a situation in which the service expands to new locations to serve new clients. The experiment considers 3 to 13 sites, with 128 clients per site, and each clients submits commands with a payload of 3KB. 
\reffigure{tput-relative} reports the latency penalty with respect to the optimal.

FPaxos $f = 1$ exhibits a latency penalty ranging from 
1.7\textsf{x} to 4.7\textsf{x} (the last value is not shown in \reffigure{tput-relative} for readability).
In particular, starting from 9 sites its performance degrades sharply with the
increase in the number of sites and, hence, the number of clients.
This happens because the leader cannot cope with the load, having to broadcast each command to all replicas.
FPaxos $f=2$ follows a similar trend.

EPaxos behaves better than FPaxos, hitting the optimal performance with 3 and 5 sites.
However, starting from 11 sites, the latency of EPaxos becomes at best 50\% of the optimum.
Overall, due to its large fast quorums, the performance of EPaxos lowers as the number of sites increases.

In contrast to the prior two protocols, \SYS distributes the cost of broadcasting command payloads among replicas and uses small fast quorums.
This allows the protocol to be within 4\% of the optimum when $f=1$, and within 26\% when $f=2$.
\SYS is thus able to cope with the system growth without degrading performance.

\subsection{Varying Load and Conflict Rate}
\labsection{evaluation:varying}

\begin{figure}[t]
  \centering
  \fontsize{12}{14}\selectfont
  \scalebox{.6}{\input{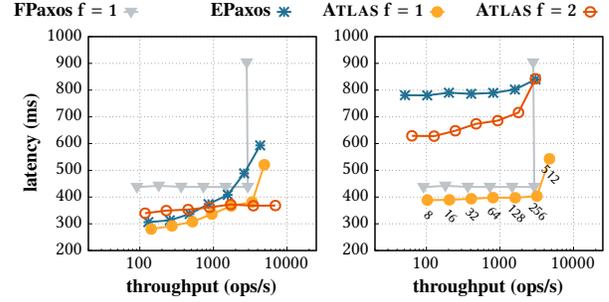}}
  \caption{
    \labfigure{tput-latency}
    Throughput and latency with 5 sites when the load (number of clients) increases
    under moderate (left, 10\%) and high (right, 100\%) conflict rates.
  }
\end{figure}

To further understand the performance of \SYS, we conduct an experiment in which the load and the conflict rate vary.
The protocol is deployed at 5 sites, and the load increases from 8 to 512 clients per site, under a moderate (10\%) to high (100\%) conflict rate.
As before, messages carry a payload of 3KB.
The results are presented in \reffigure{tput-latency}, where we also compare with FPaxos $f=1$ and EPaxos.

Under a 10\% conflict rate (left-hand side of \reffigure{tput-latency}) and with up to 64 clients per site, \SYS $f=1$ executes commands with an average latency below 336ms.
When this load doubles and then quadruples, the latency increases to respectively 366ms and 381ms.
Compared to \SYS, the performance of EPaxos degrades faster with increasing load,
yielding latencies of 368ms, 404ms and 484ms for 64, 128 and 256 clients per site, respectively.
FPaxos performance is stable at 437ms for up to 256 clients per site, as the leader is capable of handling such a moderate load.

At a high load, with 512 clients per site, all the protocols but \SYS $f=2$
saturate.  In particular, FPaxos saturates because the leader is no longer
capable of coping with the load.  Although \SYS $f=1$ is the most efficient
protocol until saturation, its performance degrades sharply at 512 clients per
site due to large batches formed during command execution. Interestingly, \SYS $f=2$ behaves 
better due to the slow-path optimization in \refsection{execution:improvements}.
Since this protocol uses larger fast quorums, the optimization allows it to
prune dependencies that \SYS $f=1$ cannot: while coordinators in \SYS $f=1$ must
include every dependency reported by a fast-quorum process for a given command,
coordinators in \SYS $f=2$ only include the dependencies reported by at least
$2$ fast-quorum processes. This reduces the size of batches in execution,
improving the overall protocol latency.

With a 100\% conflict rate (right-hand side of \reffigure{tput-latency}),
EPaxos performs worse than the remaining protocols.
It executes commands with an average latency of at least 780ms, making the protocol unpractical in this context.
As pointed out in \refsection{evaluation:fast}, this is explained by its fast-path condition which rarely triggers when the conflict rate is high.
In contrast, \SYS $f=1$ is consistently the fastest protocol.
\SYS is slower than FPaxos $f=1$
only when providing a higher fault-tolerance level ($f=2$).

\subsection{Availability under Failures}
\labsection{evaluation:faults}

\begin{figure}[t]
  \centering
  \fontsize{12}{14}\selectfont
  \scalebox{.6}{\input{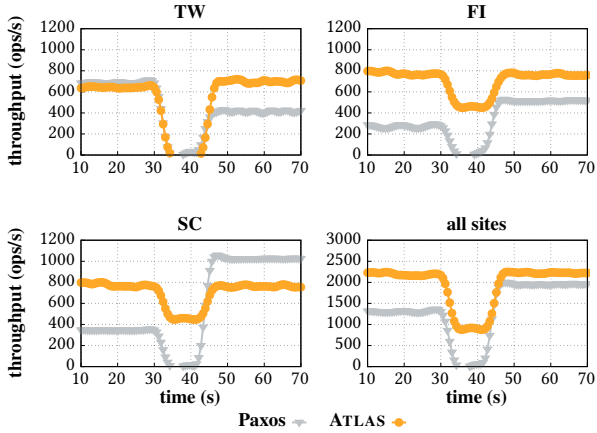}}
  \caption{
    \labfigure{recovery-multi}
    The impact of a failure on the throughput of Paxos and \SYS (3 sites, $f=1$).
  }
\end{figure}

\reffigure{recovery-multi} depicts an experiment demonstrating that
\SYS is inherently more available than a leader-driven protocol.
The experiment runs across 3 sites: Taiwan (TW), Finland (FI) and South Carolina (SC). Such configuration tolerates a single site failure, so FPaxos is the same as Paxos.
We do not evaluate EPaxos, as its availability guarantees are similar to those of \SYS in this configuration.
Each site hosts 128 closed-loop clients.
Half of the clients issue commands targeting key $0$ and the other half
issue commands targeting a unique key per client. Hence, commands by clients in
the first half conflict with each other, while commands by clients in the second
half commute with all commands by a different client.

After 30s of execution, the SMR service is abruptly halted at the TW site,
where the Paxos leader is located.
Based on the measurements reported in \refsection{evaluation:synchrony},
we set the timeout after which a failure is suspected to 10s for both protocols.
Upon detecting the failure, the clients located at the failed site (TW) reconnect to the closest alive site, SC. In the case of Paxos, the surviving sites initiate recovery and elect SC as the new leader. 
In the case of Atlas, the surviving sites recover the commands that were initially coordinated by TW.

As shown in \reffigure{recovery-multi}, Paxos blocks during the recovery time.
In contrast, \SYS keeps executing commands, albeit at a reduced throughput.
The drop in throughput happens largely because the clients issuing commands on key $0$
(50\% of all clients) collect as dependencies some of the commands being
recovered (those that also access  key $0$).
The execution of the former commands then blocks until the latter are recovered.
In contrast, the clients at non-failed sites issuing commands with per-client keys continue to operate as normal.
Since commands by these clients commute with those by other clients, their execution never blocks on the commands being recovered.
This means that these clients operate without disruption during the whole experiment.

The bottom right plot contains the aggregate throughput of the system.
Before failure, \SYS is almost two times faster than Paxos, and operates consistently better during the whole experiment.
Note, however, that Paxos has a slightly higher throughput at the leader (TW) before the crash,
and at the new leader (SC) after recovery. This is due to the delay
between committing and executing commands in \SYS.

\subsection{Key-Value Store Service}
\labsection{evaluation:ycsb}

\begin{figure}[t]
  \centering
  \fontsize{12}{14}\selectfont
  \scalebox{.6}{\input{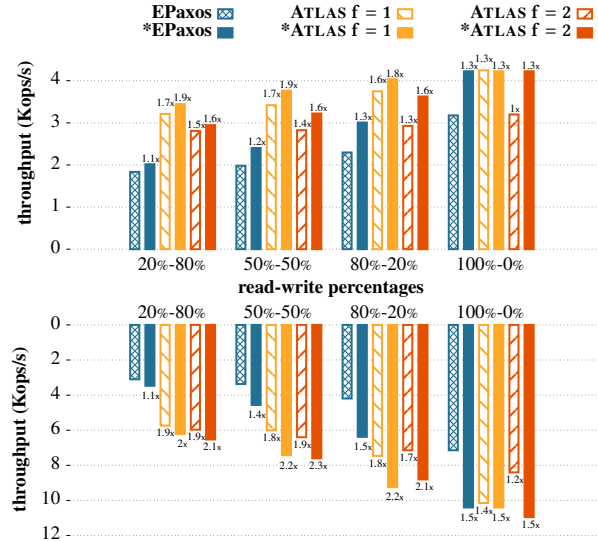}}
  \caption{
    \labfigure{ycsb}
    YCSB performance for update-heavy (20\%-80\%), balanced (50\%-50\%), read-heavy (80\%-20\%) and read-only (100\%-0\%) workloads, with 7 (top) and 13 sites (bottom).
    A * before the protocol name indicates that the \readopt optimization (\refsection{execution:improvements}) is enabled.
    The number at the top of each bar indicates the speed-up over (vanilla) EPaxos.
  }
\end{figure}

Our final experiment compares \SYS and EPaxos when the protocols are applied to a replicated key-value store (KVS) service.
When accessing a KVS record stored under key $k$, a client executes either
\begin{inparaenumorig}[]
\item command $\get(k)$ to fetch its content, or
\item $\aput(k,v)$ to update it to value $v$.
\end{inparaenumorig}
To benchmark the performance of the replicated KVS we use the Yahoo! Cloud Serving Benchmark (YCSB)~\cite{ycsb}.
We apply four types of workloads, each with a different mix of $\get$/$\aput$ operations.
The KVS contains $10^6$ records and all workloads select records following a Zipfian distribution with the default YCSB skew.

In this experiment, \SYS ($f=1,2$) and EPaxos are deployed over 7 and 13 sites (respectively, top and bottom of \reffigure{ycsb}).
At each site running the benchmark we execute 128 YCSB client threads.
The protocol name is preceded with the * symbol if the \readopt optimization is enabled.
As pointed out in \refsection{execution:improvements}, this optimization accelerates the execution of $\get$ commands.
The number at the top of each bar indicates the speed-up over (vanilla) EPaxos.

  With 7 sites, EPaxos executes 1.8K ops/s in the update-heavy workload, whereas \SYS executes 3.2K ops/s when $f=1$, and 2.8K ops/s when $f=2$.
  Although EPaxos and \SYS $f=2$ have the same fast-quorum size with $n=7$, the performance gap between the protocols is large for two reasons.
  First, the key-access distribution in YCSB does not allow EPaxos to take the fast path frequently, since the first 12 records have a 20\% chance of getting picked.
  Due to this, \SYS $f=2$ takes the fast path for 88\% of commands, while EPaxos does so in 70\% of cases.
  This makes the average commit latency of \SYS $f=2$ lower by 50ms.
   Second, batches formed in execution with EPaxos are larger than with \SYS $f=2$ because \SYS prunes unnecessary dependencies (\refsection{execution:improvements}):
  once commands are committed, EPaxos takes on average 147ms to execute them, while \SYS $f=2$ needs only 30ms.
  With $f=1$, \SYS has a longer execution delay of 73ms (this difference between $f=1$ and $f=2$ is explained in \refsection{evaluation:varying}).
  Nevertheless, \SYS $f=1$ beats \SYS $f=2$, since it always commits commands after contacting the closest majority, and this compensates for its higher execution delay.

  Increasing the percentage of read operations improves the performance of all the protocols because reads do not conflict with other reads.
  In the read-only workload the performance is simply determined by the quorum size, since all the protocols take the fast path.
  In this case, both EPaxos and \SYS $f=2$ execute 3.2K ops/s, while \SYS $f=1$, which has a smaller fast quorum, executes 4.2K ops/s.

  With the \readopt optimization and $n=7$, the protocols execute up to 33\% more operations.
  The highest speedup occurs in the read-only workload, where the protocols execute all commands after a single round trip to the closest majority. In this case, \readopt allows EPaxos and \SYS $f=2$ to match the performance of vanilla \SYS $f=1$ while maintaining their higher fault-tolerance level.
  Compared to vanilla EPaxos, \SYS with \readopt is up to 1.9\textsf{x} faster with $f=1$, and 1.6\textsf{x} with $f=2$.
  Similar conclusions can be drawn from \reffigure{ycsb} when the protocols are deployed over 13 sites.
  Overall, \SYS with \readopt outperforms EPaxos by 1.5-2.3\textsf{x}.

\section{Related Work}
\labsection{related}

The classical way of implementing SMR is by funneling all commands through a
single leader replica~\cite{paxos,zab,raft,vr}, which impairs scalability. A way
to mitigate this problem is to distribute the leader responsibilities
round-robin among replicas, as done in Mencius~\cite{mencius}. However, this
makes the system run at the speed of the slowest replica.

Exploiting commutativity to improve the scalability of SMR was first proposed in
Generalized Paxos~\cite{gb} and Generic Broadcast~\cite{gpaxos}. These protocols
still rely on a leader to order concurrent non-commuting commands, which creates
a bottleneck.

The closest SMR protocol to ours is EPaxos~\cite{epaxos}, which is also
leaderless and exploits commutativity. We compared \SYS with EPaxos in detail in
\refsection{properties}. There have been two follow-up protocols to EPaxos,
Alvin~\cite{alvin} and Caesar~\cite{caesar}. \SYS compares to these protocols
similarly to EPaxos; in particular, both follow-ups have large fast quorums that
depend on the overall number of processes only.

Flexible Paxos~\cite{flexible-paxos} reduces the size of Paxos Phase 2 quorums
to $f+1$, a technique we also use on the slow path of \SYS.
However, this technique is not directly applicable to computing dependencies via
fast path, as required by leaderless SMR. To the best of our knowledge, \SYS is
the first protocol to reduce the size of fast quorums to
$\floor{\frac{n}{2}}+f$.

An approach to scaling SMR is to shard the state of the application being
replicated and add cross-shard coordination to preserve
consistency~\cite{fernando-smr}. Such approaches build on a non-sharded SMR
protocol and are hence orthogonal to our proposal: \SYS can be combined with
them to scale SMR even further. %
Protocols such as M2Paxos~\cite{m2paxos}, WPaxos~\cite{wpaxos} and DPaxos~\cite{dpaxos} scale up SMR using a variation of the sharding approach.
These protocols exploit access locality by optimizing for workloads where commands do not frequently access objects in multiple locations.

There have been recent proposals of SMR protocols that improve scalability using
special hardware capabilities, such as low-latency switches or
RDMA~\cite{Li:2016,Dang:2015,apus}. However, currently these protocols 
work within a single data center only.

\section{Conclusion}
\labsection{conclusion}

This paper presented \SYS, the first leaderless SMR protocol parameterized with the number of allowed failures.
\SYS is designed for planet-scale systems where concurrent site failures are rare.
It uses tight quorums, executes a high percentage of the operations within a single round trip and executes quick linearizable reads.
As demonstrated empirically with large-scale experiments in Google Cloud Platform, all these innovations pay off in practice:
adding new nearby replicas improves client-perceived latency, and expanding to new locations maintains the system performance.
Compared to the state of the art, \SYS consistently outperforms existing protocols:
it is up to two times faster than Flexible Paxos with identical failure assumptions,
and more than doubles the performance of EPaxos in mixed read-write workloads.

\section*{Acknowledgments}

We thank Lennart Oldenburg for his valuable feedback on early versions of this paper. We also thank our shepherd, Liuba Shrira, and the anonymous reviewers for their comments and suggestions. Vitor Enes was supported by an FCT
PhD Fellowship (PD/BD/142927/2018).
Tuanir Fran\c{c}a Rezende and Pierre Sutra were supported by EU H2020 grant No 825184 and ANR grant 16-CE25-0013-04.
Alexey Gotsman was supported by an ERC Starting Grant RACCOON.
This work was partially supported by the Google Cloud Platform research credits program.

\bibliographystyle{plain}
\bibliography{biblio}

\iflong
\clearpage
\appendix
\onecolumn

\section{The \SYS Protocol and its Correctness}
\label{app:On}

\subsection{Protocol}

\begin{figure*}[!h]
  \begin{algorithm}[H]
  \begin{multicols}{2}
    \algosetup
    \setcounter{AlgoLine}{0}
    \SubAlgo{\Fun $\KwPCD(c)$\label{algo:full:request}}{
      $\id \leftarrow \tup{i, \mathsf{min}\{l \mid \tup{i, l} \in \KwPhAset \}}$ \\
      $\past \leftarrow \conflicts(c)$ \label{algo:full:past}\;
      $Q \leftarrow \fastq(i)$ \label{algo:full:fast-quorum}\;
      \Send $\KwMCol(\id, c, \past, Q)$ \To $Q$ \label{algo:full:collect-past} \;
    }

    \SubAlgo{\RecA $\KwMCol(\id, c, \past, Q)$ \RecB $j$}{
      \Pre $\id \in \KwPhAset$ \label{algo:full:collect-pre}\;
      $\KwMsgCcl[\id] \leftarrow \conflicts(c) \union \past$ \label{algo:full:conflicts}\;
      $\KwMsgM[\id] \leftarrow c$;
      $\KwMsgQ[\id] \leftarrow Q$\;
      $\KwMsgPh[\id] \leftarrow \KwPhB$\;
      \Send $\KwMColA(\id, \KwMsgCcl[\id])$ \To $j$ \label{algo:full:collect-ack}
    }

    \SubAlgo{\RecA $\KwMColA(\id, \dep_j)$ \RecBAll $j \in Q$\label{algo:full:collect-ack-receive}}{
      \Pre $\id \in \KwPhBset \land Q = \KwMsgQ[\id]$ \label{algo:full:collect-ack-pre}\;
      $D = \bigcup_{Q} \dep$ \label{algo:full:fp-proposal} \;
      \If{$\bigcup_{Q} \dep = \quorumcup{f}{Q} \dep$\label{algo:full:fp-condition}}{
        \Send $\KwMCom(\id, \KwMsgM[\id], D)$ \ToAll \label{algo:full:fp}
      }
      \Else{
        $Q' \leftarrow \slowq(i)$ \label{algo:full:slow-quorum}\;
        \Send $\KwMCons(\id, \KwMsgM[\id], D, i)$ \To $Q'$ \label{algo:full:sp}
      }
    }

    \SubAlgo{\RecA $\KwMCons(\id, c, D, b)$ \RecB $j$} {
      \Pre $\KwMsgR[\id] \leq b$ \label{algo:full:consensus-pre} \;
      $\KwMsgM[\id] \leftarrow c$;
      $\KwMsgCcl[\id] \leftarrow D$ \;
      $\KwMsgR[\id] \leftarrow b$;
      $\KwMsgW[\id] \leftarrow b$ \;
      \Send $\KwMConsA(\id, b)$ \To $j$ \;
    }

    \SubAlgo{\RecA $\KwMConsA(\id, b)$ \RecB $Q$} {
      \Pre $\KwMsgR[\id] = b \land |Q| = f + 1$ \label{algo:full:consensus-ack-pre} \;
      \Send $\KwMCom(\id, \KwMsgM[\id], \KwMsgCcl[\id])$ \ToAll \label{algo:full:consensus-end}\;
    }  

    \SubAlgo{\RecA $\KwMCom(\id, c, D)$} {
      \Pre $\id \not \in \KwPhDset \union \KwPhEset$\;
      $\KwMsgM[\id] \leftarrow c$;
      $\KwMsgCcl[\id] \leftarrow D$;
      $\KwMsgPh[\id] \leftarrow \KwPhD$ \;
    }

    \SubAlgo{\Fun $\KwRec(\id)$\label{algo:full:rec}}{
      $b \leftarrow i + n (\floor{\frac{\KwMsgR[\id]}{n}} + 1)$ \label{algo:full:my-ballot} \;
      \Send $\KwMR (\id, \KwMsgM[\id], b)$ \ToAll \;
    }

    \SubAlgo{\RecA $\KwMR(\id, \_, \_)$ \RecB $j$}{
      \Pre $\id \in \KwPhDset \union \KwPhEset$\label{algo:full:rec-committed} \;
      \Send $\KwMCom(\id, \KwMsgM[\id], \KwMsgCcl[\id])$ \To $j$ \label{algo:full:rec-send-commit}
    }

    \SubAlgo{\RecA $\KwMR(\id, c, b)$ \RecB $j$}{
      \Pre $\KwMsgR[\id] < b \land \id \not \in \KwPhDset \union \KwPhEset$ \label{algo:full:rec-pre} \;
      \If{$\KwMsgR[\id] = 0 \land \id \in \KwPhAset$\label{algo:full:rec-first-see}}{
        $\KwMsgCcl[\id] \leftarrow \conflicts(c)$ \label{algo:full:conflicts-rec} \;
        $\KwMsgM[\id] \leftarrow c$
      }
      $\KwMsgR[\id] \leftarrow b$ \label{algo:full:recovery-ballot} \;
      $\KwMsgPh[\id] \leftarrow \KwPhF$ \label{algo:full:recovery-phase}\;
      \Send $\KwMRA(\id, \KwMsgM[\id], \KwMsgCcl[\id], \KwMsgQ[\id],$
      \nonl\hspace*{10.1em}
      $\KwMsgW[\id], b)$ \To $j$ \label{algo:full:rec-ack}
  }

    \SubAlgo{\mbox{\RecA $\KwMRA(\id, \msg_j, \dep_j, Q^0_j, ab_j, b)$ \RecBAll
      $j \in Q$}\label{algo:full:mrec-ack-pre}}{ 
      \Pre $\KwMsgR[\id] = b \land |Q| = n - f$\;
      \If{$\exists k \in Q.\, ab_k \not = 0$\label{algo:full:previous-proposal}}{
        \Let $k$ be such that $ab_k$ is maximal \label{algo:full:highest-proposal}\;
        \Send $\KwMCons(\id, \msg_k, \dep_k, b)$ \ToAll \label{algo:full:highest-proposal-cons}
      }
    \ElseIf{$\exists k \in Q.\, Q^0_k \not= \emptyset$\label{algo:full:qk-check}}{
      $Q' \leftarrow$ \leIf{$id.1 \in Q$\label{algo:full:coordinator-check}}{$Q$}{$Q \cap Q^0_k$}
      \Send $\KwMCons(\id, \msg_k, \bigcup_{Q'} \dep, b)$ \ToAll \label{algo:full:rec-union}
    }
    \lElse{\Send $\KwMCons(\id, \KwNoop, \emptyset, b)$ \ToAll \label{algo:full:noop}}
    }

    \SubAlgo{\Loop}{
      \Let $S$ be the smallest subset of $\KwPhDset$ such that\;
      {
        \nonl
        \hspace{0.65em}
          $\forall \id \in S.\, (\KwMsgCcl[\id] \subseteq S \cup \KwPhEset)$
          \label{algo:full:exec-start} \;
      }
      \For{$\id \in S$ \OrderedBy $<$\label{algo:full:exec-sort}}{
        $\pcdeliver(\KwMsgM[\id])$ \;
        $\KwMsgPh[\id] \leftarrow \KwPhE$
        \label{algo:full:exec-end}
      }
    }

    \algrule[0.8pt]
    \vspace{0.3em}
    \begin{center}
        \begin{tabular}{r@{}c@{}l@{}ll}
          $\KwMsgM[\id]$ & ${} \assign {}$ & $\KwNoop$ & ${} \in \CMD$ & Command
          \\
          $\KwMsgPh[\id]$ & ${} \assign {}$ & $\KwPhA$ & & Phase
          \\
          $\KwMsgCcl[\id]$ & ${} \assign {}$ & $\emptySet$ & ${} \subseteq \ID$ & Dependency set
          \\
          $\KwMsgQ[\id]$ & ${} \assign {}$ & $\emptySet$ & ${} \subseteq \Proc$ & Fast quorum
          \\
          $\KwMsgR[\id]$ & ${} \assign {}$ & $0$ & ${} \in \mathds{N}$ & Current ballot
          \\
          $\KwMsgW[\id]$ & ${} \assign {}$ & $0$ & ${} \in \mathds{N}$ & Last accepted ballot
          \\
        \end{tabular}
    \end{center}

\end{multicols}
  \caption{Full \SYS protocol at process $i$.}
  \label{algo:full}
  \end{algorithm}
\end{figure*}

\clearpage

\subsection{Protocol Correctness}
\label{app:correctness}

  In what follows, we prove Invariants~\ref{inv:cons} and \ref{inv:conf-a}.
  To this end, we use the auxiliary invariants below:
\begin{enumerate}
  \setcounter{enumi}{4}
\item \label{inv:validity} At any process, if $\KwMsgM[\id] \not= \KwNoop$, then
  $\KwMsgM[\id]$ has been previously submitted by a client.

\item \label{inv:c-or-noop} Assume $\KwMCol(\id, c, \_, \_)$ has been sent.
  Then for any $\KwMCons(\id, c', \_, \_)$,
    $\KwMCom(\id, c', \_)$
    and $\KwMR(\id, c', \_)$, we have
    $c' = c$ or $c' = \KwNoop$.

\item \label{inv:consensus-ballot}
  Assume $\KwMCons(\id, \_, \_, b)$ has been sent.  Then $b = \id.1$ or $b > n$.

\item \label{inv:unique-consensus}
  Assume $\KwMCons(\id, c, D, b)$ and $\KwMCons(\id, c', D', b')$ have been sent.
  If $b = b'$, then $c = c'$ and $D = D'$.

\item \label{inv:unique-initial-quorum}
  Assume $\KwMRA(\id, c, \_, Q, ab, \_)$ and $\KwMRA(\id, c', \_, Q', ab', \_)$ have been sent.
    If $Q \not= \varnothing$ and $Q' \not= \varnothing$, then $Q = Q'$.
    If additionally $ab = ab' = 0$, then $c = c'$.

\item \label{inv:mrecack-ballots}
  Assume $\KwMRA(\_, \_, \_, \_,  ab, b)$ has been sent by some process.
  Then $ab < b$.

\item \label{inv:after-leader}
  Assume $\KwMConsA(\id, b)$ and $\KwMRA(\id, \_, \_, \_,  ab, b')$ have been sent by some process.
  If $b' > b$, then $b \le ab < b'$ and $ab \not = 0$.

\item \label{inv:main-consensus}
  Assume a slow quorum has received $\KwMCons(\id, c, D, b)$ and responded to it with $\KwMConsA(\id, b)$.
  For any $\KwMCons(\id, c', D', b')$ sent, if $b' > b$, then $c' = c$ and $D' = D$.

\item \label{inv:main-fastpath}
  Assume $\KwMCom(\id, c, D)$ has been sent at line~\ref{algo:full:fp}.
    Then for any $\KwMCons(\id, c', D', \_)$ sent, $c' = c$ and $D' = D$.

\item \label{inv:conflict-ordering-single-process}
  Assume $\KwMCom(\id, c, \_)$ and $\KwMCom(\id', c', \_)$ have been sent,
  $c \not= \KwNoop$, $c' \not= \KwNoop$ and $\conflict(c, c')$.
  Assume further that some process sends two messages:
  either $\KwMColA(\id, \dep)$ or $\KwMRA(\id, \_, \dep, \_, 0, \_)$
    and either $\KwMColA(\id', \dep')$ or $\KwMRA(\id', \_, \dep', \_, 0, \_)$.
    Then $\id' \in \dep$ or $\id \in \dep'$.

\end{enumerate}

Invariants~\ref{inv:validity}-\ref{inv:after-leader} easily follow from the
structure of the protocol.
We now prove the rest of the invariants.

\paragraph{Proof of Invariant~\ref{inv:main-consensus}.}
Assume that at some point
\begin{quote}
  (*) a slow quorum has received $\KwMCons(\id, c, D, b)$ and responded to it with $\KwMConsA(\id, b)$.
\end{quote}
We prove by induction on $b'$ that,
if a process $i$ sends $\KwMCons(\id, c', D', b')$ with $b' > b$,
then $c' = c$ and $D' = D$.
Given some $b^*$, assume this property holds for all $b' < b^*$.
We now show that it holds for $b' = b^*$.
We make a case split depending on the transition of process $i$ that sends the
$\KwMCons$ message.

First, assume that process $i$ sends $\KwMCons$ at line~\ref{algo:full:sp}.
In this case, $b' = i$.
Since $b' > b$, we have $b < i$.
But this contradicts Invariant~\ref{inv:consensus-ballot}.
Hence, this case is impossible.

The remaining case is when process $i$ sends $\KwMCons$ during the transition at line~\ref{algo:full:mrec-ack-pre}.
In this case, $i$ has received
$$
\KwMRA(\id, \msg_j, \conf_j, \_, ab_j, b')
$$
from all processes $j$ in a recovery quorum $Q^R$.
Let $\abmax = \mathsf{max}\{ab_j \mid j \in Q^R\}$; then by Invariant~\ref{inv:mrecack-ballots} we have $\abmax < b'$.

Since the recovery quorum $Q^R$ has size $n - f$ and the slow quorum from (*) has size $f + 1$,
we get that at least one process in $Q^R$ must have received the $\KwMCons(\id, c, D, b)$ message and responded to it with $\KwMConsA(\id, b)$.
Let one of these processes be $p$.
Since $b' > b$,
by Invariant~\ref{inv:after-leader} we have $ab_p \not = 0$, and thus
process $i$ executes line~\ref{algo:full:highest-proposal-cons}.
By Invariant~\ref{inv:after-leader} we also have $b \le ab_p$
and thus $b \le \abmax$.

Consider an arbitrary process $k \in Q^R$, selected at line~\ref{algo:full:highest-proposal},
such that $ab_k = \abmax$. We now prove that $\msg_k = c$ and $\conf_k = D$.
If $\abmax > b$, then since $\abmax < b'$,
by induction hypothesis we have $\msg_k = c$ and $\conf_k = D$, as required.
If $\abmax = b$, then
since $\abmax \not = 0$, process $k$ has received some
$\KwMCons(\id, \_, \_, \abmax)$ message.
By Invariant~\ref{inv:unique-consensus},
process $k$ must have received the same $\KwMCons(\id, c, D, \abmax)$
received by process $p$.
Upon receiving this message, process $k$ stores $c$ in $\KwMsgM$ and $D$ in $\KwMsgCcl$
and does not change these values at line~\ref{algo:full:rec-first-see}:
$\abmax \not = 0$ and thus $\bal[id]$ cannot be $0$ when the process executes this line.
Then process $k$ must have sent $\KwMRA(\id, \msg_k, \conf_k, \_, \abmax, b')$
with $\msg_k = c$ and $\conf_k = D$, which concludes the proof.\qed

\paragraph{Proof of Invariant~\ref{inv:main-fastpath}.}
Assume $\KwMCom(\id, c, D)$ has been sent at line~\ref{algo:full:fp}.
Then, the process that sent this $\KwMCom$ message must be process $\id.1$.
Moreover, we have that for some fast quorum $Q^F$ such that $\id.1 \in Q^F$:
\begin{quote}
  (*) every process $j \in Q^F$ has received $\KwMCol(\id, c, Q^F, \past)$ and responded with $\KwMColA(\id, \conf_j)$ such that $D = \quorumcup{f}{Q^F} \conf = \bigcup_{Q^F} \conf$.
\end{quote}
  We prove by induction on $b$ that, if a process $i$ sends $\KwMCons(\id, c', D', b)$,
  then $c' = c$ and $D' = D$.
Given some $b^*$, assume this property holds for all $b < b^*$.
We now show that it holds for $b = b^*$.

First note that process $i$ cannot send $\KwMCons$ at line~\ref{algo:full:sp},
since in this case we would have $i = \id.1$,
and $\id.1$ took the fast path at line~\ref{algo:full:fp}.
Hence, process $i$ must have sent $\KwMCons$ during the transition at line~\ref{algo:full:mrec-ack-pre}.
In this case, $i$ has received
$$
\KwMRA(\id, \msg_j, \conf_j, Q^0_j, ab_j, b)
$$
from all processes $j$ in a recovery quorum $Q^R$.

If $\KwMCons$ is sent at line~\ref{algo:full:highest-proposal-cons},
then we have $ab_k > 0$ for the process $k \in Q^R$ selected at line~\ref{algo:full:highest-proposal}.
In this case, before sending $\KwMRA$, process $k$ must have received
$$
\KwMCons(\id, \msg_k, \conf_k, ab_k)
$$
with $ab_k < b$.
Then by induction hypothesis we have $c' = \msg_k = c$ and $D' = \conf_k = D$.
This establishes the required.

If $\KwMCons$ is not sent in line~\ref{algo:full:highest-proposal-cons},
then we have $ab_k = 0$ for all processes $k \in Q^R$.
In this case, process $i$ sends $\KwMCons$ in either
line~\ref{algo:full:rec-union} or line~\ref{algo:full:noop}.
Since the recovery quorum $Q^R$ has size $n - f$ and the fast quorum $Q^F$ from (*) has size $\floor{\frac{n}{2}} + f$, we have that
\begin{quote}
  (**) at least $\floor{\frac{n}{2}}$ processes in $Q^R$ are part of $Q^F$ and thus must have received $\KwMCol(\id, c, Q^F, \past)$ and responded to it with $\KwMColA$.
\end{quote}

Let process $p$ be one these processes.
Due to the assignment at line~\ref{algo:full:recovery-phase} and the check at line~\ref{algo:full:collect-pre},
process $p$ must have received $\KwMCol$ before sending $\KwMRA$.
Then, since $ab_p = 0$,
process $p$ reports the initial fast quorum $Q^F$ and command $c$,
i.e., process $p$ sends $\KwMRA(\id, cmd_p, \_, Q^0_p, ab_p, \_)$ with
$Q^0_p = Q^F$ and $cmd_p = c$.
Then $Q^0_p \not= \emptyset$, so that
process $i$ must send $\KwMCons$ at line~\ref{algo:full:rec-union}.

By Invariant~\ref{inv:unique-initial-quorum},
and since process $p$ has sent $\KwMRA(\id, c, \_, Q^F, \_, \_)$,
any process $k$ selected in line~\ref{algo:full:qk-check} has $Q^0_k = Q^F$ and $cmd_k = c$.
For this reason, $c' = cmd_k = c$, as required.
We now show that $D' = D$.
By our assumption, process $\id.1$ sent an $\KwMCom(\id, c, D)$ at line~\ref{algo:full:fp}.
Then due to line~\ref{algo:full:rec-committed},
this process would reply to $\KwMR$ with 
$\KwMCom$ instead of $\KwMRA$.
Hence, $\id.1$ is not part of the recovery quorum, i.e., $\id.1 \not \in Q^R$,
and with that,
quorum $Q^R \cap Q^F$ is selected in line~\ref{algo:full:coordinator-check}.
Let this quorum be $Q^U$.
By Property~\ref{property:set-union}, the fast path proposal
$D = \bigcup_{Q^F} \conf$
can be recovered by the set union of the dependencies initially reported by any
$\floor{\frac{n}{2}}$ fast quorum members (excluding the initial coordinator).
By (**), and since all processes $k \in Q^U$ have $ab_k = 0$,
then all processes in $Q^U$ replied with the dependencies that were reported to the initial coordinator.
Thus, by Property~\ref{property:set-union} we have
$D = \bigcup_{Q^F} \conf = \bigcup_{Q^U} \conf = D'$,
which concludes the proof.\qed

\paragraph{Proof of Invariant~\ref{inv:cons}.}
Consider that $\KwMCom(\id, c, D)$ and $\KwMCom(\id, c', D')$ have been sent.
We prove that $c = c'$ and $D = D'$.

Note that, if an $\KwMCom(\id, c, D)$ was sent at line~\ref{algo:full:rec-send-commit},
then some process sent an $\KwMCom(\id, c, D)$ at line~\ref{algo:full:fp} or line~\ref{algo:full:consensus-end}.
Hence, without loss of generality,
we can assume that the two $\KwMCom$ under consideration were sent at line~\ref{algo:full:fp} or at line~\ref{algo:full:consensus-end}.
We can also assume that the two $\KwMCom$ have been sent by different processes.
Only one process can send an $\KwMCom$ at line~\ref{algo:full:fp} and only once.
Hence, it is sufficient to only consider the following two cases.

Assume first that both $\KwMCom$ messages are sent at line~\ref{algo:full:consensus-end}.
Then for some $b$, a slow quorum has received $\KwMCons(\id, c, D, b)$ and responded to it with $\KwMConsA(\id, b)$.
Likewise, for some $b'$, a slow quorum has received $\KwMCons(\id, c', D', b')$ and responded to it with $\KwMConsA(\id, b')$.

Assume without loss of generality that $b \leq b'$.
If $b < b'$, then $c' = c$ and $D' = D$ by Invariant~\ref{inv:main-consensus}.
If $b = b'$, then $c' = c$ and $D' = D$ by Invariant~\ref{inv:unique-consensus}.
Hence, in this case $c' = c$ and $D' = D$, as required.

Assume now that $\KwMCom(\id, c, D)$ was sent at line~\ref{algo:full:fp} and $\KwMCom(\id, c', D')$ at line~\ref{algo:full:consensus-end}.
Then for some $b$, a slow quorum has received $\KwMCons(\id, c', D', b)$ and responded to it with $\KwMConsA(\id, b)$.
Then by Invariant~\ref{inv:main-fastpath}, we must have $c' = c$ and $D' = D$,
as required.\qed

\paragraph{Proof of Invariant~\ref{inv:conflict-ordering-single-process}.}
Assume $\KwMCom(\id, c, \_)$ and $\KwMCom(\id', c', \_)$ have been sent,
$c \not= \KwNoop$, $c' \not= \KwNoop$ and $\conflict(c, c')$.
Assume further that process $j$ sends two messages:
either $\KwMColA(\id, \dep)$ or $\KwMRA(\id, \_, \dep, \_, 0, \_)$
  and either $\KwMColA(\id', \dep')$ or $\KwMRA(\id', \_, \dep', \_, 0, \_)$.
If $\KwMColA(\id, \dep)$ is sent, it must be in response to $\KwMCol(\id, d, \_, \_)$, and by Invariant~\ref{inv:c-or-noop} we have $d = c$.
Similarly, if $\KwMColA(\id', \dep')$ is sent, it must be in response to $\KwMCol(\id', d', \_, \_)$, and by Invariant~\ref{inv:c-or-noop} we have $d' = c'$.
If $\KwMRA(\id, \_, \dep, \_, 0, \_)$ is sent, it must be in response to $\KwMR(\id, d, \_)$, and by Invariant~\ref{inv:c-or-noop} we have $d \in \{c, \KwNoop\}$.
If $\KwMRA(\id', \_, \dep', \_, 0, \_)$ is sent, it must be in response to $\KwMR(\id', d', \_)$, and by Invariant~\ref{inv:c-or-noop} we have $d' \in \{c', \KwNoop\}$.

Without loss of generality, assume that process $j$ sends the message about $\id$ before the message about $\id'$.
We prove that $\id \in \dep'$.
We have four cases depending on which message ($\KwMColA$ or $\KwMRA$)
is sent for each identifier:

\emph{1) Process $j$ sends $\KwMColA(\id, \dep)$ and then $\KwMColA(\id', \dep')$.}
When handling $\KwMCol(\id, c, \_, \_)$,
process $j$ stores $c$ in $\KwMsgM[\id]$.
By Invariant~\ref{inv:c-or-noop}, $\KwMsgM[\id]$ can only change to $\KwNoop$.
When handling $\KwMCol(\id', c', \_, \_)$,
since $\KwMsgM[\id] \in \{c, \KwNoop\}$ and $\KwNoop$ conflicts with all commands,
we have
$\id \in \conflicts(c')$ in line~\ref{algo:full:conflicts},
and thus $\id \in \dep'$ in $\KwMColA(\id', \dep')$, as required.

\emph{2) Process $j$ sends $\KwMColA(\id, \dep)$ and then $\KwMRA(\id', \_, \dep', \_, 0, \_)$.}
When handling $\KwMCol(\id, c, \_, \_)$,
process $j$ stores $c$ in $\KwMsgM[\id]$.
By Invariant~\ref{inv:c-or-noop}, $\KwMsgM[\id]$ can only change to $\KwNoop$.
When handling $\KwMR(\id', d', \_)$ with $d' \in \{c', \KwNoop\}$
we have two cases depending on $\KwMsgPh[\id']$.
If $\id' \in \KwPhAset$, then 
since $\KwMsgM[\id] \in \{c, \KwNoop\}$ and $\KwNoop$ conflicts with all commands,
we have $\id \in \conflicts(d')$ in line~\ref{algo:full:conflicts-rec}.
If $\id' \not \in \KwPhAset$, then process $j$ is a member of the original fast
quorum for $\id'$ and thus included $\id$ into $\KwMsgCl[\id']$ when it processed
$\KwMCol(\id', c', \_, \_)$.
Thus, in both cases $\id \in \dep'$ in $\KwMRA(\id', \_, \dep', \_, 0, \_)$, as required.

\emph{3) Process $j$ sends $\KwMRA(\id, \_, \dep, \_, 0, \_)$ and then $\KwMColA(\id', \dep')$.}
Analogous to the above.

\emph{4) Process $j$ sends $\KwMRA(\id, \_, \dep, \_, 0, \_)$ and then $\KwMRA(\id', \_, \dep', \_, 0, \_)$.}
Analogous to the above.
\qed

\paragraph{Proof of Invariant~\ref{inv:conf-a}.}
Assume that $\KwMCom(\id, c, D)$ and $\KwMCom(\id', c', D')$ have been sent
with $\id \not= \id'$, $c \not= \KwNoop$, $c' \not= \KwNoop$ and $\conflict(c, c')$.
The protocol structure ensures that $D = \bigcup_{Q} \dep$ for $Q$ and $\dep$
given as parameters of handlers at lines~\ref{algo:full:collect-ack-receive}
or~\ref{algo:full:mrec-ack-pre}, and the computation of $D$ occurs at
lines~\ref{algo:full:fp-proposal} or~\ref{algo:full:rec-union}.
We start by proving that there exists a quorum $\widehat{Q}$
with $|\widehat{Q}| \geq \floor{\frac{n}{2}} + 1$
and $\widehat{\dep}$ such that
$\bigcup_Q \dep = \bigcup_{\widehat{Q}} \widehat{\dep}$,
where each process $j \in \widehat{Q}$ computes its $\widehat{\dep}_j$ in either
line~\ref{algo:full:conflicts} or line~\ref{algo:full:conflicts-rec}
and sends it in either
$\KwMColA(\id, \widehat{\dep}_j)$ or $\KwMRA(\id, \_, \widehat{\dep}_j, \_, 0, \_)$.

The computation of $D$ occurs either in the transition at
line~\ref{algo:full:collect-ack-receive} or at line~\ref{algo:full:rec-union}.
If the computation of $D$ occurs in the transition at
line~\ref{algo:full:collect-ack-receive},
then $Q$ is a fast quorum with size $\floor{\frac{n}{2}} + f$.
In this case, we let $\widehat{Q} = Q$ and $\widehat{\dep} = dep$.
Since $f \geq 1$, we have $|\widehat{Q}| \geq \floor{\frac{n}{2}} + 1$, as required.
If the computation of $D$ occurs at line~\ref{algo:full:rec-union},
we have two situations depending on whether $\id.1 \in Q$
(line~\ref{algo:full:coordinator-check}).
If $\id.1 \in Q$, then $Q$ is a recovery quorum of size $n - f$.
In this case, we let $\widehat{Q} = Q$ and $\widehat{\dep} = dep$.
Since $f \leq \floor{\frac{n - 1}{2}}$, 
we have $|\widehat{Q}| \geq \floor{\frac{n}{2}} + 1$, as required.
If $\id.1 \not\in Q$,
then $Q$ consists of the fast quorum members that are part of the recovery quorum
(line~\ref{algo:full:coordinator-check}).
Given that fast quorum size is $\floor{\frac{n}{2}} + f$ and the recovery quorum
size is $n - f$, in this case $Q$ contains at least $\floor{\frac{n}{2}} + f - f
= \floor{\frac{n}{2}}$ fast quorum processes, 
and thus $|Q| \geq \floor{\frac{n}{2}}$.
Since $D$ is computed in the branch where 
the initial fast quorum is known (line~\ref{algo:full:qk-check}),
at least one of the fast quorum members in $Q$ must have computed its
set of dependencies at line~\ref{algo:full:conflicts}, 
including in its dependencies those reported by the original coordinator.
In this case, we let $\widehat{Q} = Q \union \{\id.1\}$,
$\forall j \in Q.\, \widehat{\dep}_j = \dep_j$ and
$\widehat{\dep}_{\id.1}$ be the set of dependencies sent by $\id.1$ in its
$\KwMColA(\id, \widehat{\dep}_{\id.1})$ message.
Since $|Q| \geq \floor{\frac{n}{2}}$ and $\id.1 \not \in Q$,
we have $|\widehat{Q}| \geq \floor{\frac{n}{2}} + 1$, as required.

Similarly to the above, we can also prove that there exists a quorum $\widehat{Q}'$
with $|\widehat{Q}'| \geq \floor{\frac{n}{2}} + 1$
and $\widehat{\dep'}$ such that
$\bigcup_{Q'} \dep' = \bigcup_{\widehat{Q}'} \widehat{\dep'}$,
where each process $j \in \widehat{Q}'$ computes its $\widehat{\dep'_j}$ in either
line~\ref{algo:full:conflicts} or line~\ref{algo:full:conflicts-rec} and sends
its $\widehat{\dep'_j}$ in either $\KwMColA(\id', \widehat{\dep'_j})$ or
$\KwMRA(\id', \_, \widehat{\dep'_j}, \_, 0, \_)$.

We now prove that $id' \in D$ or $id \in D'$. By contradiction,
assume that $\id' \not \in D$ and $\id \not \in D'$.  Since $id' \not \in D$, we
have $\forall j \in \widehat{Q}.\, \id' \not \in \widehat{\dep}_j$.  Similarly, since
$id \not \in D'$, we have $\forall j \in \widehat{Q}'.\, \id \not \in \widehat{\dep'_j}$.
Given that $|\widehat{Q}| \geq \floor{\frac{n}{2}} + 1$ and
$|\widehat{Q}'| \geq \floor{\frac{n}{2}} + 1$, $\widehat{Q}$ and $\widehat{Q}'$ must
intersect.  For this reason, there must exist a process
$p \in \widehat{Q} \cap \widehat{Q}'$ such that $id' \not \in \widehat{\dep}_p$ and
$id \not \in \widehat{\dep'_p}$.  But this contradicts
Invariant~\ref{inv:conflict-ordering-single-process}.
\qed

\subsubsection{Slow-path optimization}
\label{app:slow-path-opt}

Section~\refsection{execution:improvements} describes an optimization that reduces the number of the dependencies in the slow path by proposing to consensus $\quorumcup{f}{Q} \dep$ instead of $\bigcup_{Q} \dep$ at line~\ref{algo:full:sp}. The previous proofs are not affected by this optimization with the exception of the proof of Invariant~\ref{inv:conf-a}.  We now prove this invariant when the optimization is enabled.

\paragraph{Proof of Invariant~\ref{inv:conf-a}.}
Assume that $\KwMCom(\id, c, D)$ and $\KwMCom(\id', c', D')$ have been sent
with $\id \not= \id'$, $c \not= \KwNoop$, $c' \not= \KwNoop$ and $\conflict(c, c')$.
The protocol structure ensures that either $D = \bigcup_{Q} \dep$ or $D = \quorumcup{f}{Q} \dep$ for $Q$ and $\dep$
given as parameters of handlers at lines~\ref{algo:full:collect-ack-receive}
or~\ref{algo:full:mrec-ack-pre}.
Similarly, the protocol structure ensures that either $D' = \bigcup_{Q'} \dep'$ or $D' = \quorumcup{f}{Q'} \dep'$ for $Q'$ and $\dep'$
given as parameters of handlers at lines~\ref{algo:full:collect-ack-receive}
or~\ref{algo:full:mrec-ack-pre}.
The computation of $D = \bigcup_{Q} \dep$ and $D' = \bigcup_{Q'} \dep'$ occurs at
lines~\ref{algo:full:fp-proposal} or~\ref{algo:full:rec-union},
while the computation of $D = \quorumcup{f}{Q} \dep$ and $D' = \quorumcup{f}{Q'} \dep'$ occurs at line~\ref{algo:full:sp}.

Similarly to the previous proof of Invariant~\ref{inv:conf-a},
we can prove that if $D = \bigcup_Q \dep$, then there exists a quorum $\widehat{Q}$
with $|\widehat{Q}| \geq \floor{\frac{n}{2}} + 1$
and $\widehat{\dep}$ such that
$\bigcup_Q \dep = \bigcup_{\widehat{Q}} \widehat{\dep}$,
where each process $j \in \widehat{Q}$ computes its $\widehat{\dep}_j$ in either
line~\ref{algo:full:conflicts} or line~\ref{algo:full:conflicts-rec}
and sends it in either
$\KwMColA(\id, \widehat{\dep}_j)$ or $\KwMRA(\id, \_, \widehat{\dep}_j, \_, 0, \_)$.
Likewise,
we can prove prove that if $D' = \bigcup_{Q'} \dep'$, then there exists a quorum $\widehat{Q}'$
with $|\widehat{Q}'| \geq \floor{\frac{n}{2}} + 1$
and $\widehat{\dep'}$ such that
$\bigcup_{Q'} \dep' = \bigcup_{\widehat{Q}'} \widehat{\dep'}$,
where each process $j \in \widehat{Q}'$ computes its $\widehat{\dep'_j}$ in either
line~\ref{algo:full:conflicts} or line~\ref{algo:full:conflicts-rec} and sends
its $\widehat{\dep'_j}$ in either $\KwMColA(\id', \widehat{\dep'_j})$ or
$\KwMRA(\id', \_, \widehat{\dep'_j}, \_, 0, \_)$.

We now prove that $id' \in D$ or $id \in D'$. By contradiction,
assume that $\id' \not \in D$ and $\id \not \in D'$. 
We have four cases depending on the mechanism ($\bigcup$ or $\quorumcup{f}{}$\hspace{-0.25em}) used to compute $D$ and $D'$:

\emph{1) $D = \bigcup_Q \dep$ and $D' = \bigcup_{Q'} \dep'$}.
Analogous to previous proof of Invariant~\ref{inv:conf-a}.

\newcommand\plus[1]{{#1}^*}
\emph{2) $D = \bigcup_Q \dep$ and $D' = \quorumcup{f}{Q'} \dep'$}.
Since $id' \not \in D$, we
have $\forall j \in \widehat{Q}.\, \id' \not \in \widehat{\dep}_j$. 
Since $D'$ is computed in line~\ref{algo:full:sp}, we have that $|Q'| = \floor{\frac{n}{2}} + f$.
Moreover, since $id \not \in D'$, by the definition of $D' = \quorumcup{f}{Q'} \dep'$, we have:
\begin{align*}
  &\ |\{ j \in Q' \mid \id \in \dep'_j \}| < f & \\
  \Leftrightarrow &\ |\{ j \in Q' \mid \id \not \in \dep'_j \}| \geq \floor{\tfrac{n}{2}} + f - (f - 1)
  & (\text{since } |Q'| = \floor{\tfrac{n}{2}} + f) \\
  \Leftrightarrow &\ |\{ j \in Q' \mid \id \not \in \dep'_j \}| \geq \floor{\tfrac{n}{2}} + 1 & \\
  \Leftrightarrow &\ \exists \plus{Q} \subseteq Q'.\, |\plus{Q}| \geq \floor{\tfrac{n}{2}} + 1 \land \forall j \in \plus{Q}.\, \id \not \in \dep'_j &
\end{align*}

Given that
$|\widehat{Q}| \geq \floor{\frac{n}{2}} + 1$ and
$|\plus{Q}| \geq \floor{\frac{n}{2}} + 1$,
$\widehat{Q}$ and $\plus{Q}$ must intersect.
For this reason, there must exist a process
$p \in \widehat{Q} \cap \plus{Q}$ such that
$id' \not \in \widehat{\dep}_p$ and $id \not \in \dep'_p$.
But this contradicts
Invariant~\ref{inv:conflict-ordering-single-process}.

\emph{3) $D = \quorumcup{f}{Q} \dep$ and $D' = \bigcup_{Q'} \dep'$}.
Analogous to the above.

\emph{4) $D = \quorumcup{f}{Q} \dep$ and $D' = \quorumcup{f}{Q'} \dep'$}.
Analogous to the above.
\qed

\clearpage
\section{State-Machine Replication with \SYS}
\labappendix{smr}

State-machine replication (SMR) implements what is called in literature a \emph{universal construction}\footnote{Maurice Herlihy. Wait-free synchronization. \emph{ACM Trans. Program.  Lang. Syst}., 1991.}, that is a general mechanism to obtain a linearizable shared object from a sequential one.
In Appendix~\ref{app:correctness}, we proved that \SYS correctly implements the SMR protocol specification given in \refsection{smr}.
This section explains how to build a universal construction from this protocol.
To achieve this, we first introduce some preliminary notions.
Then, we explain how to implement any linearizable data type on top of the \SYS protocol.
The bottom of this section covers the \readopt optimization proposed in \refsection{execution:improvements}.

\subsection{Preliminaries}
\labsection{smr:preliminaries}

We base our reasoning and algorithms upon the notion of trace\footnote{\label{Rozenberg:1995}Volker Diekert and Grzegorz Rozenberg, editors. \emph{The Book of Traces}.  World Scientific, 1995}, that is a class of equivalent command words.
Two words in a class contain the same commands and sort non-commuting ones in the same order.
A trace can be seen as as special case of the notion of c-struct used to define the generalized consensus problem \cite{gpaxos}.

\paragraph{State machine.}
We assume a sequential object specified by the following components:
\begin{inparaenum}
\item a set of states $\sttSet$;
\item an initial state $\sttInit \in \sttSet$;
\item a set of commands $\cmdSet$ that can be performed on the object;
\item a set of their response values $\valSet$; and
\item a transition function $\tau :  \sttSet \times \cmdSet \rightarrow \sttSet \times \valSet$.
\end{inparaenum}
In the following, we use special symbols $\bot$ and $\top$ that do not belong to $\valSet$.
When applying a command, we use $.\stt$ and $.\val$ selectors to respectively extract the state and the response value, i.e., given a state $s$ and a command $c$, we let $\tau(s,c)=(\tau(s,c).\stt,\tau(s,c).\val)$.
Without lack of generality, we consider that commands are applicable to every state.
A command $c$ is a \emph{read} if it does not change the object state: $\forall s.\, \tau(s,c).\stt = s$; otherwise, $c$ is a \emph{write}.
We denote by $\Read$ and $\Write$ the set of read and write commands.

\paragraph{Command words.}
A \emph{command word} $x$ is a sequence of commands.
The empty word is denoted $1$ and $\cmdSet^*$ is the set of all command words.
We use the following notations for a word $x$:
$|x|$ is the length of $x$;
$x[i \geq 1]$ is the $i$-th element in $x$;
$|x|_{c}$ is the number of occurrences of command $c$ in $x$.
We write $c^i \in x$ when $c$ occurs at least $i>0$ times in $x$.
$\pos(c^i,x)$ is the position of the $i$-th occurrence of command $c$ in $x$, with $\pos(c^i,x)=0$ when $c^i \notin x$.
The shorthand $c^i <_x d^j$ stands for $\pos(c^i,x) < \pos(d^j,x)$.
The set $\cmdOf{x}$ is defined as $\{(c,i) : c^i \in x\}$.
The operator $x \setminus c$ deletes the last occurrence of $c$ in $x$ (if such an occurrence exists).
By extension, for some word $y$, $x \setminus y$ applies $x \setminus c$ for every $(c,i) \in \cmdOf{y}$.
We let $\prefix$ be the prefix relation induced by the append operator over $\cmdSet^*$.
The prefix of $x$ up to some occurrence $c^i$ is the command word $x\upto{c^i}$.
If $c^i \notin x$, then by convention $x\upto{c^i}$ equals $1$.
In case $c$ appears once in $x$, $x\upto{c}$ is a shorthand for $x\upto{c^1}$.

\begin{lemma}
  \lablem{prel:1}
  Consider a command $c$ and two words $x$ and $y$.
  Then, $|xy|_{c}$ equals $|x|_{c} + |y|_{c}$.
  Moreover, if $c^k \in xy$ then $\pos(c^k, xy)$ equals $\pos(c^k, x)$, if $k \leq |x|_c$ and $|x| + pos(c^{k - |x|_c}, y)$ otherwise.
\end{lemma}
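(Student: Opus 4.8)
The plan is to argue directly from the positional description of concatenation. By definition of the append operator, the word $xy$ has length $|xy| = |x| + |y|$, and its letters are given by $(xy)[i] = x[i]$ for $1 \le i \le |x|$ and $(xy)[i] = y[i-|x|]$ for $|x| < i \le |x| + |y|$. Both claims are read off from this description. For the first claim I would partition the index set $\{1,\dots,|x|+|y|\}$ into the two blocks $\{1,\dots,|x|\}$ and $\{|x|+1,\dots,|x|+|y|\}$. On the first block an index $i$ satisfies $(xy)[i]=c$ iff $x[i]=c$, contributing exactly $|x|_c$ occurrences; on the second block, writing $j = i-|x|$, we have $(xy)[i]=c$ iff $y[j]=c$ with $j$ ranging over $\{1,\dots,|y|\}$, contributing exactly $|y|_c$ occurrences. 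Summing the two blocks yields $|xy|_c = |x|_c + |y|_c$.

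For the second claim, the key observation is that every occurrence of $c$ lying inside $x$ sits at a position $\le |x|$, whereas every occurrence of $c$ lying inside $y$ sits, after the shift by $|x|$, at a position $> |x|$. Hence listing the occurrence positions of $c$ in $xy$ in increasing order merely concatenates two already-sorted lists with no interleaving: the first $|x|_c$ entries are $\pos(c^1,x),\dots,\pos(c^{|x|_c},x)$, and the remaining $|y|_c$ entries are $|x|+\pos(c^1,y),\dots,|x|+\pos(c^{|y|_c},y)$. By definition the $k$-th entry of this list is exactly $\pos(c^k,xy)$. A case split on $k$ then gives the formula: if $k \le |x|_c$ the $k$-th entry lies in the first block and equals $\pos(c^k,x)$; if instead $k > |x|_c$, then since $c^k \in xy$ and, by the first claim, $|xy|_c = |x|_c + |y|_c$, we have $k \le |x|_c + |y|_c$, so that $c^{k-|x|_c} \in y$ and the $k$-th entry lies in the second block and equals $|x|+\pos(c^{k-|x|_c},y)$. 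Note that the hypothesis $c^k \in xy$ guarantees $\pos(c^k,xy) \neq 0$, and in each case the invoked value $\pos(c^k,x)$ or $\pos(c^{k-|x|_c},y)$ is likewise a genuine (nonzero) position, so the convention $\pos(\cdot,\cdot)=0$ for absent occurrences never interferes.

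The only delicate point is to make rigorous the informal phrase ``listing occurrences in increasing order'', that is, that the $k$-th occurrence of $c$ in a word as measured by $\pos(c^k,\cdot)$ is indeed the $k$-th in order of position, and that the split at position $|x|$ partitions the occurrences in $xy$ into exactly the first $|x|_c$ (coming from $x$) followed by the remaining $|y|_c$ (coming from $y$). This is elementary but is the one place where some care is warranted. If one prefers to avoid this bookkeeping entirely, the same result follows by a routine induction on $|y|$: the base case $y = 1$ is immediate since $xy = x$, and the inductive step appends a single command $a$, so that $xy = (xy')a$, and the count and position of the trailing occurrence update directly from the definitions of $|\cdot|_c$ and $\pos$. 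Either route is straightforward, and I would present the direct positional argument as the main proof, using the first claim as an ingredient for the case analysis in the second.
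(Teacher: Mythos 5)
Your argument is correct, and it is precisely the direct-from-definitions reasoning the paper intends: the paper's own proof of this lemma is the single line ``Follows from the definitions,'' and your positional decomposition of the concatenation (with the case split on $k \leq |x|_c$) is the natural spelling-out of exactly that. Nothing differs in approach; you have merely supplied the bookkeeping the paper elides.
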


\begin{proof}
  Follows from the definitions.
\end{proof}

\paragraph{Equivalence of command words.}
We define function $\tau^{*}$ by the repeated application of $\tau$.
In detail, for a state $s$ we define $\tau^*(s,1) = (s,\nil)$, for some symbol $\nil \in \valSet$, and if $x$ is non-empty then we have:
\begin{displaymath}
  \tau^{*}(s,x)
  =
  \left\lbrace
  \begin{array}{ll}
    \tau(s,x[1]), & \text{if $|x|=1$}; \\
    \tau^{*}(\tau(s,x[1]).\stt, x[2] \ldots x[n]), & \text{otherwise}.
  \end{array}
  \right.
\end{displaymath}
Two commands $c$ and $d$ \emph{commute}, written $c \commuting d$, if in every state $s$ we have:
\begin{displaymath}
    \begin{array}{l}
      \tau^{*}(s,cd).\stt = \tau^{*}(s,dc).\stt; \\
      \tau^{*}(s,dc).\val = \tau^{*}(s,c).\val; \\
      \tau^{*}(s,cd).\val  = \tau^{*}(s,d).\val.
    \end{array}
\end{displaymath}
Relation $\commuting$ is an equivalence relation over $\cmdSet$.
We write $c \nonCommuting d$ the fact that $c$ and $d$ do no commute.
Two words $x,y \in \cmdSet^*$ are \emph{equivalent}, written $x \equiv y$, when there exist words $z_1, \ldots, z_{k \geq 1}$ such that $z_1=x$, $z_k=y$ and for all $i$, $1 \leq i < k$, there exist words $z'$, $z''$ and commands $c \commuting d$ satisfying $z_i = z'cdz'', z_{i+1} = z'dcz''$.
This means that a word can be obtained from another by successive transpositions of neighboring commuting commands.
One may show that $u \equiv v$ holds when $u$ and $v$ contain the same commands and order non-commuting ones the same way.
In such a case, commands have the same effects.

\begin{lemma}
  \lablem{prel:2}
  Relation $x \equiv y$ holds iff $\cmdOf{x}=\cmdOf{y}$ and for any $c \nonCommuting d$, $c^i <_x d^j \iff c^i <_y d^j$. \footnote{Volker Diekert and Yves M{\'{e}}tivier. Partial Commutation and Traces. In \emph{Handbook of Formal Languages, Volume 3: Beyond Words}. 1997.}
\end{lemma}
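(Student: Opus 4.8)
The plan is to prove the two implications separately, leaning on the (easy) forward direction to power the harder converse. One useful preliminary observation keeps the whole argument clean: when $c=d$ the equivalence $c^i <_x d^j \iff c^i <_y d^j$ is automatic once $\cmdOf{x}=\cmdOf{y}$, since occurrences of a single command always appear in index order and the two words have equal occurrence counts; so the only content is for distinct commands $c \nonCommuting d$.

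For the forward direction, $x\equiv y$ means $x$ and $y$ are joined by a finite chain of words, consecutive ones differing by a single transposition $z' c d z'' \to z' d c z''$ with $c \commuting d$. By induction on the chain length it suffices to check that one such transposition preserves both invariants. It preserves $\cmdOf{\cdot}$ trivially, since no command is added or removed. For the ordering, swapping the two adjacent occurrences leaves the relative order of every other pair of occurrences unchanged, so the only occurrence pair whose order can flip is this particular $(c,d)$ pair, which commutes. Hence for every non-commuting pair $a \nonCommuting b$ the order of $a^i$ and $b^j$ is preserved, as required.

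For the converse I would induct on $n=|x|=|y|$ (equal because $\cmdOf{x}=\cmdOf{y}$), the base case $n=0$ being trivial. For the step set $a=y[1]$, so $\pos(a^1,y)=1$. First I claim that in $x$ every command preceding the first $a$ commutes with $a$: if some $b^k <_x a^1$ had $b \nonCommuting a$, the hypothesis would force $b^k <_y a^1$, impossible since $a^1$ is at position $1$ in $y$. Thus the first $a$ in $x$ can be bubbled to the front by adjacent commuting transpositions, yielding $x \equiv a\,x'$, where $x'$ is $x$ with its first $a$ deleted; write $y=a\,y'$. Now $\cmdOf{x'}=\cmdOf{y'}$ follows by cancelling one $a$ from $\cmdOf{a x'}=\cmdOf{x}=\cmdOf{y}=\cmdOf{a y'}$ (the first equality uses the already-proven forward direction applied to $x\equiv a x'$). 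For the ordering condition on $x',y'$, the forward direction applied to $x\equiv ax'$ together with the hypothesis on $x$ and $y=ay'$ shows that $ax'$ and $ay'$ order every non-commuting pair identically; it then suffices to note that prepending the fixed command $a$ induces, per command, a fixed order-preserving shift of occurrence indices (the identity on commands $\neq a$ and $a^i\mapsto a^{i+1}$ on $a$), a direct consequence of the concatenation identity \reflem{prel:1}. Since this reindexing is identical for $x'\to ax'$ and $y'\to ay'$, the agreement descends to $x'$ and $y'$. By the induction hypothesis $x'\equiv y'$, hence $ax'\equiv ay'$ (any transposition chain for $x'\to y'$ lifts verbatim to one for $ax'\to ay'$), and finally $x\equiv ax'\equiv ay'=y$ by transitivity of $\equiv$.

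I expect the main obstacle to be the index bookkeeping in the last step: transferring the occurrence-indexed ordering condition between $x,y$ and the shortened words $x',y'$ across the prepended $a$. The two subtle points are (i) using \reflem{prel:1} to justify that the induced index shift preserves strict position order and is uniform across both words, and (ii) the reduction that renders the case $c=d$ vacuous, so that only genuinely non-commuting distinct commands must be tracked. Everything else — that $\equiv$ is an equivalence relation and a congruence for left concatenation, and the single-transposition analysis — is routine.
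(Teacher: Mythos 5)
Your proof is correct, but be aware that the paper itself offers no proof of this lemma: the footnote is the paper's entire argument, deferring to Diekert and M\'etivier's handbook chapter, where this is the standard characterization of trace (Mazurkiewicz) equivalence by the ordering of dependent occurrences. What you have written is essentially the classical proof from that literature, reconstructed against this paper's notation: the forward direction by induction on the transposition chain (a single adjacent swap of commuting letters preserves $\cmdOf{\cdot}$, keeps all occurrence indices stable since the swapped letters are distinct or the word is unchanged, and can only flip the order of the one swapped --- commuting --- pair), and the converse by induction on $|x|=|y|$, bubbling $a=y[1]$ to the front of $x$ after observing that any $b^k$ preceding $a^1$ in $x$ with $b \nonCommuting a$ would force $\pos(b^k,y)<1$, a contradiction, then descending both hypotheses to the shortened words via the uniform index shift from \reflem{prel:1} and closing with congruence and transitivity of $\equiv$. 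I checked the index bookkeeping you flagged as the main risk and it goes through, including the edge cases created by the convention $\pos(c^i,x)=0$ for absent occurrences, which your repeated appeals to $\cmdOf{x}=\cmdOf{y}$ keep consistent. Two details deserve particular credit: your preliminary reduction of the case $c=d$ is not pedantry, since under the paper's definition a command need not commute with itself (e.g.\ an increment returning the new value), so that case carries real content and your argument disposes of it correctly rather than tacitly assuming irreflexivity of $\nonCommuting$; and your explicit use of the already-proved forward direction on $x \equiv a x'$ to transfer the ordering hypothesis to $ax'$ versus $ay'$ is exactly the right way to make the descent rigorous. The trade-off between the two routes is the usual one: the paper's citation keeps the appendix lean by importing a textbook fact, while your argument makes the development self-contained and verifiable against the paper's exact definitions.
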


\begin{lemma}
  \lablem{prel:3}
  If $x \equiv y$ then for every command $c$, $\tau^*(\sttInit, x\upto{c^i}).\val =  \tau^*(\sttInit, y\upto{c^i}).\val$.
\end{lemma}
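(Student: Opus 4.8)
The plan is to exploit the generators of $\equiv$: since $x \equiv y$ means there is a chain $z_1 = x, \dots, z_k = y$ in which each step transposes two adjacent commuting commands, I would first reduce the claim to a single such transposition by induction on $k$. The base case $k=1$ is vacuous, and in the inductive step I would apply the induction hypothesis to $z_1 \equiv z_{k-1}$ and the single-transposition result to $z_{k-1} \to z_k$, chaining the two value equalities by transitivity; this is sound because all the $z_j$ share the same $\cmdOf{\cdot}$, so $c^i$ occurs in every intermediate word exactly when it occurs in $x$, and when $c^i \notin x$ every prefix is $1$ and the value is $\nil$, trivially equal. Before the case analysis I would also record a compositionality property of $\tau^{*}$: for all states $s$ and words $u,v$ one has $\tau^{*}(s,uv).\stt = \tau^{*}(\tau^{*}(s,u).\stt, v).\stt$, and $\tau^{*}(s,uv).\val = \tau^{*}(\tau^{*}(s,u).\stt, v).\val$ whenever $v \neq 1$; this follows by a routine induction on $|u|$ directly from the recursive definition of $\tau^{*}$.

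For the single transposition, write $x = w\,a\,b\,w'$ and $y = w\,b\,a\,w'$, where $a \commuting b$ are the transposed pair (if $a = b$ then $x = y$ and there is nothing to prove). Fix the command $c$ and index $i$ under consideration and let $s = \tau^{*}(\sttInit, w).\stt$. The $i$-th occurrence of $c$ sits in exactly one of four places, and I would treat each: (1) inside $w$; (2) it is the displayed $a$; (3) it is the displayed $b$; (4) inside $w'$. In case (1) the two prefixes $x\upto{c^i}$ and $y\upto{c^i}$ are literally the same prefix of $w$, so the values coincide. In case (4), writing $p$ for the prefix of $w'$ up to that occurrence, compositionality reduces both sides to $\tau^{*}(s', p).\val$ with $s' = \tau^{*}(\sttInit, wab).\stt = \tau^{*}(\sttInit, wba).\stt$; these states agree because $\tau^{*}(s,ab).\stt = \tau^{*}(s,ba).\stt$ is exactly the state clause of $a \commuting b$ (again via compositionality), so the values match.

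The two remaining cases are where the occurrence is one of the swapped commands, and they are the crux. In case (2), $x\upto{c^i} = wa$ while $y\upto{c^i} = wba$, so I must show $\tau(s,a).\val = \tau^{*}(s,ba).\val$, which is precisely the clause $\tau^{*}(s,ba).\val = \tau^{*}(s,a).\val$ of the commutativity definition. In case (3), $x\upto{c^i} = wab$ while $y\upto{c^i} = wb$, requiring $\tau^{*}(s,ab).\val = \tau(s,b).\val$, which is the clause $\tau^{*}(s,ab).\val = \tau^{*}(s,b).\val$. The point I would highlight is that here the two prefixes are \emph{not} equivalent words — in case (2) one contains a stray $b$ that the other lacks — so the equality of their final values cannot be obtained from any generic principle that equivalent words have equal effect, and must instead come directly from the value clauses of $a \commuting b$. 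Recognizing that these value clauses, rather than the state clause, are exactly what is needed at the boundary is the main obstacle; once the four-way split is set up, each case is a one-line appeal either to compositionality or to a single clause of the commute definition.
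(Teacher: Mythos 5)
Your proof is correct and follows essentially the same route as the paper's: reduce to a single transposition of adjacent commuting commands, then case-split on whether the occurrence $c^i$ lies before the swapped pair, is one of $a$ or $b$ (handled by the value clauses of $\commuting$), or lies after (handled by the state clause). Your version merely makes explicit what the paper leaves routine — the compositionality of $\tau^{*}$, the chain induction, the degenerate cases $a=b$ and $c^i \notin x$ — so there is nothing to object to.
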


\begin{proof}
  We show that the proposition holds if $x=z'abz''$ and $y=z'baz''$, for $a \commuting b$ and words $z'$ and $z''$.
  Obviously, this is true for any command $c$ in $z'$.
  Now, if $a=c^i$, then the proposition holds by definition of relation $\commuting$.
  A symmetric argument holds for $b=c^i$.
  Then, because $a$ and $b$ are commuting, we may observe that $\tau^{*}(\sttInit,z'ab).\stt = \tau^{*}(\sttInit,z'ba).\stt$.
  From which, we deduce that the result also holds if $c^i \in z''$.
  Now, applying the above claim to the definition of $x \equiv y$, we deduce that the proposition holds in the general case.
\end{proof}   

\paragraph{Command traces.}
The equivalence class of $x$ for the relation $\equiv$ is denoted $[x]$.
This is the set of words that order non-commuting commands in the same way as $x$.
Hereafter, we note $\traceSet$ the quotient set of $\cmdSet^*$ by relation $\equiv$.
An element in $\traceSet$ is named a \emph{command trace}.
For any $x,y,z \in \cmdSet^*$, it is easy to observe that if $x \equiv y$ holds, then both $(zx \equiv zy)$ and $(xz \equiv yz)$ are true.
As a consequence, $\equiv$ is a congruence relation over $\cmdSet^*$.
It follows that $\traceSet$ together with the append operator defined as $[x][y]=[xy]$ forms a monoid\footnote{Gerard Lallement. \emph{Semigroups and Combinatorial Applications}. John Wiley \& Sons, Inc., 1979.}.
Now, consider the natural ordering induced by the append operator on $\traceSet$.
In other words, $[x] \prefix [y]$ holds iff $[x][z]=[y]$ for some $[z]$.
One can show that relation $\prefix$ is a partial order over $\traceSet$\textsuperscript{\ref{Rozenberg:1995}}.

\begin{lemma}
  \lablem{prel:4}
  If $\classOf{x} \prefix \classOf{y}$, then $\classOf{x} \classOf{y \setminus x} = \classOf{y}$.
\end{lemma}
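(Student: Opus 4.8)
The plan is to turn the trace identity into an equivalence of command words and then check that equivalence with the characterization of $\equiv$ established just above (equal occurrence multisets plus identical ordering of every non-commuting pair). By the definition of $\prefix$, the hypothesis $\classOf{x}\prefix\classOf{y}$ supplies a word $z$ with $\classOf{x}\classOf{z}=\classOf{y}$, i.e.\ $xz\equiv y$. Since $\equiv$ is a congruence (left multiplication preserves $\equiv$, as noted in the text), to prove $\classOf{x}\classOf{y\setminus x}=\classOf{y}=\classOf{x}\classOf{z}$ it suffices to show the single word equivalence $y\setminus x\equiv z$: from it we get $x(y\setminus x)\equiv xz\equiv y$, which is exactly $\classOf{x}\classOf{y\setminus x}=\classOf{y}$.

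To establish $y\setminus x\equiv z$ I would apply the characterization of $\equiv$. The multiset part is immediate: $xz\equiv y$ gives $\cmdOf{xz}=\cmdOf{y}$, hence $|y|_{c}=|x|_{c}+|z|_{c}$ for every command $c$; and by definition the operator $\setminus x$ deletes exactly $|x|_{c}$ occurrences of each $c$ from $y$, so $|y\setminus x|_{c}=|z|_{c}$ and thus $\cmdOf{y\setminus x}=\cmdOf{z}$. What remains, and what is the heart of the argument, is to match the order of every non-commuting pair: for $c\nonCommuting d$ one must show $c^{i}<_{y\setminus x}d^{j}$ iff $c^{i}<_{z}d^{j}$.

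For the ordering I would track individual occurrences across the words $z$, $xz$, $y$ and $y\setminus x$. The fact to isolate first is that $\equiv$ never reorders two occurrences of the \emph{same} command (transposing two adjacent equal letters is the identity, so it is never an allowed rewrite step), whence ``the $k$-th occurrence of $c$'' is a well-defined notion preserved along $xz\equiv y$. This lets me identify the occurrences contributed by $z$ inside $xz$ with their images in $y$, and identify those same occurrences with the ones surviving the deletion $\setminus x$. Then two order-preserving identifications compose: by the characterization applied to $xz\equiv y$, any two non-commuting occurrences keep their relative order when passing from $xz$ to $y$; and deleting letters preserves the relative order of the survivors. Composing shows that a non-commuting pair sits in $y\setminus x$ exactly as the corresponding pair sits in $z$, which is what the characterization requires.

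The main obstacle is precisely this occurrence bookkeeping: one must argue that the occurrences removed by $\setminus x$ are exactly those matched to the $x$-part of the factorization $xz\equiv y$, so that the survivors correspond bijectively and order-isomorphically to the occurrences of $z$. I expect the cleanest way to discharge it is by induction on $|x|$, peeling one command $c$ of $x$ at a time: one first proves the single-command division $\classOf{c}\classOf{y\setminus c}=\classOf{y}$ whenever $\classOf{c}\prefix\classOf{y}$ (the hypothesis $\classOf{c}\prefix\classOf{y}$ is exactly what guarantees that the matched occurrence of $c$ is a minimal, unblocked element, i.e.\ no command non-commuting with $c$ precedes it in $y$, so reattaching $c$ restores $y$ up to $\equiv$), and then uses the identity $y\setminus x=(y\setminus c)\setminus x'$ for $x=cx'$ together with $\classOf{x'}\prefix\classOf{y\setminus c}$ to invoke the induction hypothesis. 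The base case $x=1$ is trivial, and the congruence of $\equiv$ glues the single-letter step to the inductive remainder.
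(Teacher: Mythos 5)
Your proposal is correct in substance and shares its skeleton with the paper's proof: both take $z$ with $\classOf{x}\classOf{z}=\classOf{y}$ from the definition of $\prefix$ and reduce the lemma to the single word equivalence $y \setminus x \equiv z$, verified against the characterization in \reflem{prel:2} (equal occurrence sets plus identical ordering of non-commuting pairs). Where you genuinely diverge is in how the ordering half is discharged. The paper does it in one shot: $\cmdOf{y\setminus x}=\cmdOf{z}$ from the counting argument, and the order of any non-commuting pair in $y\setminus x$ and in $z$ are both inherited from $y$ via $xz\equiv y$, so a second application of \reflem{prel:2} closes the proof. You instead peel $x$ one command at a time via a single-command division lemma $\classOf{c}\classOf{y\setminus c}=\classOf{y}$ and an induction on $|x|$. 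This isolates the occurrence bookkeeping nicely, but it creates an obligation the direct route avoids: to obtain $\classOf{x'}\prefix\classOf{y\setminus c}$ from $\classOf{c}\classOf{x'z}=\classOf{y}=\classOf{c}\classOf{y\setminus c}$ you need left cancellativity of the trace monoid --- a true but nontrivial fact that the paper never establishes. You should either prove it or extract $\classOf{x'}\prefix\classOf{y\setminus c}$ directly by the same \reflem{prel:2} bookkeeping, at which point the induction buys little over the paper's one-shot argument.

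Two caveats. First, your parenthetical that transposing two adjacent equal letters ``is never an allowed rewrite step'' is slightly off, since $\commuting$ is reflexive and such a step is allowed; but it is the identity, so your conclusion --- occurrence indices of a fixed command are stable along $\equiv$ --- stands. Second, you correctly identified the crux (``the occurrences removed by $\setminus x$ are exactly those matched to the $x$-part''), but note that both your proof and the paper's quietly ride over it: the operator $\setminus$ deletes the \emph{last} occurrence of each command, whereas the positional matching along $xz\equiv y$ sends $x$ to the \emph{first} occurrences. With repeated commands the statement even fails as written: for $c\nonCommuting d$, take $y=cdc$, $x=c$, $z=dc$; then $\classOf{x}\prefix\classOf{y}$, yet $\classOf{x}\classOf{y\setminus x}=\classOf{ccd}\neq\classOf{cdc}$. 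This never bites in the paper because every command occurs at most once in the words the lemma is applied to (commands are unique, cf.\ \refprop{smr:1}), so the deleted occurrence is the matched one and your ``unblocked minimal occurrence'' justification for the single-letter step goes through.
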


\begin{proof}
  From $\classOf{x} \prefix \classOf{y}$, there exists some $z$ such that $\classOf{x} \classOf{z} = \classOf{y}$.
  We show that $\classOf{y \setminus x} = \classOf{z}$.
  If $c^i \in y$ and $c^i \notin x$, by \reflem{prel:2}, $c^i \in z$.
  Conversely, if $c^i \in z$ then $c^i \notin x$ and by \reflem{prel:2}, $c^i \in y$.
  Then, by applying again \reflem{prel:2}, we deduce that $c^i <_z d^j \iff  c^i <_{y \setminus x} d^j$.
\end{proof}

\begin{lemma}
  \lablem{prel:5}
  If $\cmdOf{x} \subseteq \cmdOf{y}$ and for any $c \nonCommuting d$, $c^i <_{y} d^j \land d^j \in x \implies c^i <_{x} d^j$, then $\classOf{x} \prefix \classOf{y}$.
\end{lemma}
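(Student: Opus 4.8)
The plan is to produce an explicit witness for the prefix relation and verify it through \reflem{prel:2}. I would take $z = y \setminus x$ and show $\classOf{x}\classOf{z} = \classOf{y}$, i.e. $xz \equiv y$; by \reflem{prel:2} this reduces to two checks: that $xz$ and $y$ carry the same occurrences, and that they order every non-commuting pair identically. For the first, \reflem{prel:1} gives $|xz|_{c} = |x|_{c} + |z|_{c}$, while the definition of $\setminus$ gives $|y \setminus x|_{c} = |y|_{c} - |x|_{c}$, which is well-defined because $\cmdOf{x} \subseteq \cmdOf{y}$ forces $|x|_{c} \le |y|_{c}$. Hence $|xz|_{c} = |y|_{c}$ for every $c$, so $\cmdOf{xz} = \cmdOf{y}$.

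For the ordering condition, fix a non-commuting pair $c \nonCommuting d$ (so $c \neq d$, since $\commuting$ is reflexive) and occurrences $c^i, d^j \in \cmdOf{y}$. Using \reflem{prel:1} to locate each occurrence, I would split on whether it falls in the $x$-part or the $z$-part of $xz$. When both fall in the $x$-part, positions inside $x$ are preserved by the concatenation, so $c^i <_{xz} d^j \iff c^i <_{x} d^j$; the equivalence with $c^i <_{y} d^j$ then follows from the second hypothesis used in both directions — backward is the hypothesis itself, forward is obtained by noting that $c^i, d^j$ both occur in $y$, so a reversed $y$-order would, via the hypothesis on $(d,c)$, yield $d^j <_{x} c^i$ and directly contradict $c^i <_{x} d^j$. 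When both fall in the $z$-part, deleting occurrences never reorders the survivors, so $<_{y \setminus x}$ agrees with $<_{y}$ on them, and \reflem{prel:1} transports this order unchanged into the $z$-part of $xz$.

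The delicate cases, and the main obstacle, are the two mixed ones, where one occurrence comes from $x$ and the other from $z = y \setminus x$. In $xz$ the $x$-part entirely precedes the $z$-part, so the order there is forced; the task is to show $y$ orders the pair the same way. Suppose $c^i$ lies in $x$ and $d^j$ in $z$ (so $d^j \notin x$), yet $d^j <_{y} c^i$. Applying the second hypothesis to $d \nonCommuting c$ with $c^i \in x$ would yield $d^j <_{x} c^i$, forcing $d^j$ to actually occur in $x$ and contradicting $d^j \in z$; the symmetric mixed case is identical. The crucial point — precisely what makes the statement true — is that the conclusion $d^j <_{x} c^i$ of the hypothesis must be read as asserting genuine presence of $d^j$ in $x$ (equivalently, the hypothesis simultaneously encodes order-consistency and downward-closedness of the occurrences of $x$ inside $y$); under the purely positional convention $\pos(\cdot,x)=0$ for absent occurrences this contradiction would evaporate, so it is exactly this downward-closure content that carries the argument. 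I would also record that the reasoning relies on each command occurring at most once (as assumed for submitted commands), so that the last-occurrence deletion defining $y \setminus x$ keeps the occurrence indices of surviving letters aligned between $y$ and $z$; without uniqueness the index relabelling induced by $\setminus$ would require separate bookkeeping.
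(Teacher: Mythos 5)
Your proposal takes essentially the same route as the paper's proof: both exhibit $z = y \setminus x$ as the explicit witness, get $\cmdOf{xz} = \cmdOf{y}$ from \reflem{prel:1}, and verify the ordering condition of \reflem{prel:2} by the same case split on whether each occurrence of a non-commuting pair falls in the $x$-part or the $z$-part (the paper merely compresses your four cases into three, handling the case $d^j \in x$ directly ``by assumption''). The two subtleties you flag are real but are exactly what the paper's terse proof silently assumes as well --- the conclusion $c^i <_x d^j$ of the hypothesis is read as asserting genuine presence of $c^i$ in $x$ (under the purely positional convention with $\pos(c^i,x)=0$ the lemma would fail, e.g.\ for $x=d$, $y=cd$ with $c \nonCommuting d$), and the index transport $c^i \mapsto c^{i-k}$ into $y \setminus x$ is only unproblematic when the last-occurrence deletion keeps indices aligned --- so your version is, if anything, more careful than the original.
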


\begin{proof}
  By \reflem{prel:1}, $\cmdOf{x(y \setminus x)} = \cmdOf{y}$. 
  Then, choose $c, d \in \cmdSet$ with $c \nonCommuting d$ and $c^i <_y d^j$.
  We show that $c^i <_{x(y \setminus x)} d^j$.
  Let $k=|x|_{c}$ and $l=|x|_d$.
  (Case $l=j$)
  By assumption.
  (Otherwise)
  If $k=i$ then $c^i \in x$ and $d^{j-l} \in (y \setminus x)$.
  In the converse case, $c^{i-k}$ and $d^{j-l}$ are both in $(y \setminus x)$.
  We then conclude by applying \reflem{prel:1}.
\end{proof}

\begin{lemma}
  \lablem{prel:6}
  If $\classOf{x} \prefix \classOf{y}$, then for every command $c$ with $c^i \in x$, $\tau^*(\sttInit, x\upto{c^i}).\val =  \tau^*(\sttInit, y\upto{c^i}).\val$.
\end{lemma}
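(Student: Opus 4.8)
The plan is to reduce \reflem{prel:6} to \reflem{prel:3} by rewriting $y$ as a concatenation whose prefix up to the $i$-th occurrence of $c$ coincides exactly with $x\upto{c^i}$, so that the response value is preserved.

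First I would invoke \reflem{prel:4}. Since $\classOf{x} \prefix \classOf{y}$, it gives $\classOf{x}\classOf{y \setminus x} = \classOf{y}$, and by the monoid law $\classOf{x}\classOf{y\setminus x} = \classOf{x(y\setminus x)}$, so $x(y\setminus x) \equiv y$. Applying \reflem{prel:3} to this equivalence yields, for the command $c$ and index $i$ of interest,
\[
\tau^*(\sttInit, (x(y\setminus x))\upto{c^i}).\val = \tau^*(\sttInit, y\upto{c^i}).\val .
\]

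Next I would show that $(x(y\setminus x))\upto{c^i} = x\upto{c^i}$. By hypothesis $c^i \in x$, so $i \le |x|_c$. By \reflem{prel:1}, $|x(y\setminus x)|_c = |x|_c + |y\setminus x|_c \ge i$, and since $i \le |x|_c$, the same lemma gives $\pos(c^i, x(y\setminus x)) = \pos(c^i, x)$. Hence the $i$-th occurrence of $c$ in the concatenation lies inside the $x$ block, and the prefix up to and including it is literally $x\upto{c^i}$. Substituting this equality of words into the displayed equation yields $\tau^*(\sttInit, x\upto{c^i}).\val = \tau^*(\sttInit, y\upto{c^i}).\val$, as required.

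The only delicate point is the positional bookkeeping in the previous paragraph: one must confirm that appending the suffix $y\setminus x$ does not push the $i$-th occurrence of $c$ past the end of the $x$ block. This is exactly the content of \reflem{prel:1} for the case $k \le |x|_c$, so once that lemma is in hand there is no real obstacle, and the argument is purely a matter of combining \reflem{prel:4}, \reflem{prel:3}, and \reflem{prel:1}.
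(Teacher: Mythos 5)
Your proof is correct and follows essentially the same route as the paper's: invoke \reflem{prel:4} to get $x(y \setminus x) \equiv y$, apply \reflem{prel:3}, and observe that $x(y\setminus x)\upto{c^i} = x\upto{c^i}$ since $c^i \in x$. Your use of \reflem{prel:1} to make the positional bookkeeping explicit is just a more careful spelling-out of the step the paper dispatches with the remark that $c^i \notin (y \setminus x)$.
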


\begin{proof}
  From \reflem{prel:4}, $x(y \setminus x) \equiv y$.
  Choose $c^i \in x$.
  By \reflem{prel:3}, $\tau^*(\sttInit, x(y \setminus x)\upto{c^i}).\val =  \tau^*(\sttInit, y\upto{c^i}).\val$.
  Since $c^i \in x$, $c^i \notin (y \setminus x)$ and $x(y \setminus x)\upto{c^i} = x\upto{c^i}$.
\end{proof}

\paragraph{Histories.}
A \emph{history} is a sequence of \emph{events} of the form $\invocation{i}{c}$ or $\response{i}{c}{v}$, where $i \in \procSet$, $c \in \cmdSet$ and $v \in \valSet$.
The two kinds of events denote respectively an \emph{invocation} of command $c$ by process $i$, and a \emph{response} to this command returning some value $v$.
We write $c \hb_h d$ the fact that the response of $c$ precedes the invocation of command $d$ in history $h$.
For a process $i$, we let $h|i$ be the projection of history $h$ onto the events by $i$.
The following classes of histories are of particular interest:
\begin{compactitem}
\item[--] %
  A history $h$ is \emph{sequential} if it is a  non-interleaved sequence of invocations and matching responses, possibly terminated by a non-returning invocation.
\item[--] %
  A history $h$ is \emph{well-formed} if
  \begin{inparaenum}
  \item $h|i$ is sequential for every $i \in \procSet$;
  \item each command $c$ is invoked at most once in $h$; and
  \item for every response $\response{i}{c}{v}$, an invocation $\invocation{i}{c}$ occurs before in $h$.
  \end{inparaenum}
\item[--] %
  A well-formed history $h$ is \emph{complete} if every invocation has a matching response.
  We shall write $\completeOf{h}$ the largest complete prefix of $h$.
\item[--] %
  A well-formed history $h$ is \emph{legal} if $h$ is complete and sequential and for any command $c$, if a response value appears in $h$, then it equals $\tau^{*}(\sttInit, h\upto{c}).\val$.
\end{compactitem}

\paragraph{Linearizability.}
Two histories $h$ and $h'$ are \emph{equivalent}, written $h \equiv h'$, if they contain the same set of events.
History $h$ is \emph{linearizable} \cite{linearizability} when it can be extended (by appending zero or more responses) into some history $h'$ such that $\completeOf{h'}$ is equivalent to a legal and sequential history $l$ preserving the real-time order in $h$, i.e., $\hb_{h} \subseteq \hb_{l}$.

\subsection{Algorithm}
\labsection{smr:alg}

\begin{algorithm*}[t]
  \setcounter{AlgoLine}{0}
  
  \caption{
    \labalg{smr}
    SMR with \SYS -- code at process $i$
  }

  \nonl\SubAlgo{\text{\bf Variables:}}{
    $B$  \tcp*[f]{An instance of \SYS with $c \nonCommuting d \implies \conflict(c,d)$} \labline{smr:var:1} \\
    $S \assign \sttInit$ \tcp*[f]{A local copy of the sequential object} \labline{smr:var:2} \\
    $\lambda \assign 1$ \tcp*[f]{A command word} \labline{smr:var:3} \\
    $\pending(c) \assign \bot, \forall c \in \cmdSet$ \tcp*[f]{A map of the response values} \labline{smr:var:4} \\
  }

  \nonl\SubAlgo{\Fun $\invoke(c)$}{
    $\pending(c) \assign \top$ \labline{smr:inv:1} \\
    $B.\pcdbroadcast(c)$ \labline{smr:inv:2} \\ 
    \textbf{wait until} $\pending(c) \neq \top$ \labline{smr:inv:3} \\
    \Return $\pending(c)$ \labline{smr:inv:4}
  }
 
  \When{$B.\pcdeliver(c)$}{ \labline{smr:exec:0}
    $\lambda \assign \lambda \append c$; $(S,v) \assign \tau(S,c)$ \labline{smr:exec:1} \\
    \If{$\caller(c) = i$}{ \labline{smr:exec:2}
      $\pending(c) \assign v$ \labline{smr:exec:3}
    }
  }
  
\end{algorithm*}

\refalg{smr} presents the pseudo-code of our universal construction on top of \SYS.
Each line of this algorithm is atomic.
To execute a command $c$ on the shared object, a process executes $\invoke(c)$.
As usual, we shall assume that no two process invoke the same command.
\refalg{smr} employs the following four variables:
\begin{compactitem}
\item[--] %
  $B$ is an instance of \SYS with $\conflict$ set to the non-commutativity relation among commands ($\nonCommuting$).
\item[--] %
  $S$ is a local copy of the state of the sequential object under concern.
  Initially, it equals $\sttInit$.
\item[--] %
  Variable $\lambda$ stores the log of commands applied to the local copy.
\item[--] %
  Variable $\pending$ stores the response value of each command call.
  Initially, $\pending(c) = \bot$ holds for every command $c$.
\end{compactitem}

\paragraph{Internals.}
When a process $i$ invokes some command $c$, it sets $\pending(c)$ to $\top$ to signal that the command is invoked  (\refline{smr:inv:1}).
Then, $i$ submits $c$ to \SYS and awaits that $c$ is applied locally (\refline{smr:inv:3}) before returning its response value (\refline{smr:inv:4}).
Upon the delivery of a command $c$, $i$ appends $c$ to $\lambda$ then executes it.
In case $i$ is the caller of $c$, $\pending(c)$ is set to the response value (\refline{smr:exec:3}).

\subsection{Correctness}
\labsection{smr:correctness}

In what follows, $\run$ is a run of \refalg{smr} and $h$ the corresponding history.
For some variable $\mathit{var}$, we denote by $\mathit{var}_i$ the value of $\mathit{var}$ at process $i$.
The notation $\mathit{var}_i^{\run}$ refers to the value of $\mathit{var}_i$ at the end of the execution $\run$.
For starters, we prove that at any point in time a single occurrence of a command may appear in $\lambda_i$.

\begin{proposition}
  \labprop{smr:1}
  $\forall i \in \procSet.~\tlGlobally (\forall c \in \cmdSet.~|\lambda_i|_{c} \leq 1)$.
\end{proposition}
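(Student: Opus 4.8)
The plan is to prove the bound by induction on the prefixes of the run $\run$, first establishing a precise correspondence between the occurrences of a command in $\lambda_i$ and the deliveries generated by the underlying \SYS instance $B$, and then invoking the Integrity property of the SMR specification that $B$ satisfies. Since the statement is guarded by $\tlGlobally$, the induction over run prefixes is exactly what lets us conclude that the bound holds at \emph{every} point of the execution.

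First I would observe that the only line of \refalg{smr} that writes $\lambda_i$ is the update $\lambda \assign \lambda \append c$ inside the $B.\pcdeliver(c)$ handler; the $\invoke$ code and the rest of the delivery handler never modify the log. Applying \reflem{prel:1} to the one-letter word $c$, this append increments $|\lambda_i|_c$ by exactly one and leaves $|\lambda_i|_d$ unchanged for every $d \neq c$. A straightforward induction on the run then shows that, at every point, $|\lambda_i|_c$ equals the number of times $B.\pcdeliver(c)$ has been triggered at process $i$ so far. The base case is immediate because $\lambda_i = 1$ initially, so $|\lambda_i|_c = 0$ for all $c$.

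It then remains only to bound the number of deliveries. Recall that $B$ is an instance of \SYS, which implements the SMR specification of \refsection{smr} (proved in Appendix~\ref{app:correctness}); in particular it satisfies Integrity, which states that each process executes (here, $\pcdeliver$s) a given command at most once. Hence $B.\pcdeliver(c)$ fires at most once at $i$, and combining this with the counting invariant above yields $|\lambda_i|_c \leq 1$ at every point of $\run$. As $i$ and $c$ were arbitrary, the globally-quantified proposition follows.

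I do not expect any genuine obstacle here: the argument is pure bookkeeping over the few lines of \refalg{smr}, and the single external fact it relies on---at-most-once delivery---is exactly Integrity, already established for \SYS. The only point meriting care is verifying that the $B.\pcdeliver(c)$ handler is the unique producer of occurrences in $\lambda_i$, so that the correspondence ``$c$ appended to $\lambda_i$'' $\iff$ ``$B.\pcdeliver(c)$ fired at $i$'' is exact; this is immediate by inspection of the code.
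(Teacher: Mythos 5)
Your proof is correct and follows essentially the same route as the paper's: an induction over the run showing that occurrences in $\lambda_i$ arise only from the $B.\pcdeliver(c)$ handler, combined with the Integrity property of \SYS to bound deliveries of each command at one per process. Your version just spells out the counting correspondence (via \reflem{prel:1}) slightly more explicitly than the paper does.
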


\begin{proof}
  (by induction)
  $\lambda_i$ is initially empty.
  Then, assume that process $i$ appends $c$ to $\lambda_i$ at \refline{smr:exec:1}.
  Command $c$ is thus executed by \SYS at \refline{smr:exec:0}.
  By the Integrity property of \SYS, this happens at most once.
  Hence, ($|\lambda_i|_{c} = 1$) is true from that point in time.
\end{proof}

The execution mechanism at \reflines{smr:exec:1}{smr:exec:3} applies in order the commands of $\lambda_i$ to update $S_i$.
Such an approach maintains the following two invariants:

\begin{proposition}
  \labprop{smr:2}
  $\forall i \in \procSet.~ \tlGlobally(S_i = \tau^*(\sttInit,{\lambda_i}).\stt)$.
\end{proposition}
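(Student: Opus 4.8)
The plan is to prove the claim by a standard invariance argument: induction on the length of the execution prefix of $\run$, showing that every transition of \refalg{smr} preserves the equality $S_i = \tau^*(\sttInit,\lambda_i).\stt$. For the base case, in the initial configuration we have $\lambda_i = 1$ and $S_i = \sttInit$ for every $i \in \procSet$ (see the variable initializations at \reflines{smr:var:2}{smr:var:3}); since $\tau^*(\sttInit, 1).\stt = \sttInit$ by definition of $\tau^*$, the equality holds at the start.

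For the inductive step I would inspect each transition and check preservation. The key observation is that the only lines touching $S_i$ or $\lambda_i$ are those of the delivery handler at \refline{smr:exec:1}: the handler $\invoke$, the blocking wait, and the write to $\pending$ leave both variables unchanged, so they trivially maintain the invariant. The whole argument therefore reduces to the delivery step. When process $i$ delivers a command $c$, it simultaneously sets $\lambda_i' = \lambda_i \append c$ and $S_i' = \tau(S_i, c).\stt$. By the induction hypothesis $S_i = \tau^*(\sttInit, \lambda_i).\stt$, so $S_i' = \tau(\tau^*(\sttInit,\lambda_i).\stt, c).\stt$, and it remains to show that this equals $\tau^*(\sttInit, \lambda_i \append c).\stt$.

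The crux, and the one place requiring genuine work, is the right-append identity
\[
  \tau^*(s, x \append c).\stt = \tau(\tau^*(s,x).\stt,\, c).\stt ,
\]
valid for every state $s$, word $x$ and command $c$. This identity is exactly what is needed to bridge the mismatch between the two definitions: the recursion for $\tau^*$ in \refsection{smr:preliminaries} is left-recursive, peeling off the first command $x[1]$, whereas \refalg{smr} grows $\lambda_i$ by appending on the right. I would establish it by a separate induction on $|x|$. The cases $|x| \le 1$ follow directly from the two base clauses of $\tau^*$; for $|x| \ge 2$, I write $x = x[1]y$, apply the recursive clause of $\tau^*$ to both $x \append c = x[1](y \append c)$ and to $x$, and invoke the inner induction hypothesis on the shorter word $y$ at the state $\tau(s,x[1]).\stt$.

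With this identity in hand, the delivery step gives $S_i' = \tau(\tau^*(\sttInit,\lambda_i).\stt, c).\stt = \tau^*(\sttInit, \lambda_i \append c).\stt = \tau^*(\sttInit, \lambda_i').\stt$, so the equality is preserved. Since the equality holds initially and is preserved by every transition, it holds at every point of $\run$, which is precisely the $\tlGlobally$ claim for each process $i$. I expect no further obstacle beyond carefully managing the small edge cases ($|x| = 0, 1$) in the right-append induction; the rest is routine bookkeeping following the definition of $\tau^*$.
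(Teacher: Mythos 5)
Your proof is correct and takes essentially the same route as the paper's: an invariance argument by induction over the run, with the delivery handler at \refline{smr:exec:1} identified as the only transition touching $S_i$ and $\lambda_i$. You are in fact more careful than the paper, which silently uses the right-append identity $\tau^*(s, x \append c).\stt = \tau(\tau^*(s,x).\stt, c).\stt$ in its final chain of equalities (despite the left-recursive definition of $\tau^*$), whereas you isolate this identity and justify it by a separate induction on $|x|$.
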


\begin{proof}
  (by induction.)
  Initially $\lambda_i=1$, leading to $\tau^*(\sttInit,\lambda_i).\stt = \sttInit$.
  This coincides with the value of $S_i$  at start time.
  At \refline{smr:exec:1}, variable $S_i$ is changed to ${S_i}'=\tau^*(S_i,\lambda_i').\stt$, with $\lambda_i'=\lambda_i \append c$.
  By induction, $S_i=\tau^*(\sttInit,\lambda_i).\stt$.
  It follows that:
  \begin{align*}
    {S_i}' &= \tau^*(S_i,\lambda_i \append c).\stt \\
    &= \tau^*(\tau^*(\sttInit,\lambda_i).\stt , c).\stt \\
    &= \tau^*(\sttInit,\lambda_i').\stt
  \end{align*}  
\end{proof}

\begin{proposition}
  \labprop{smr:3}
  $\forall \response{i}{c}{v} \in h.~ v = \tau^*(\sttInit, \lambda_i^{\run}\upto{c}).\val$.
\end{proposition}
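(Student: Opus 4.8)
The plan is to trace each response value back to the instant at which its command is applied locally, and then to identify the log at that instant with the relevant prefix of the final log $\lambda_i^{\run}$.

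First I would fix a response event $\response{i}{c}{v} \in h$ and follow the code of \refalg{smr}. Such an event is produced when $\invoke(c)$ at process $i$ returns the value stored in $\pending(c)$, after having waited for $\pending(c)$ to differ from $\top$. Now $\pending(c)$ is assigned in only two places: it is set to $\top$ at the start of $\invoke(c)$, and it is set to the computed value $v$ inside the $B.\pcdeliver(c)$ handler, under the guard $\caller(c) = i$. Since no two processes invoke the same command, process $i$ is the unique caller of $c$ (it is the process that issued $B.\pcdbroadcast(c)$), so this guard holds exactly at $i$; and by the Integrity property of \SYS (equivalently \refprop{smr:1}) command $c$ is delivered at $i$ at most once, so $\pending(c)$ takes this value $v$ once and is never overwritten. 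Hence the value $v$ returned in the response is precisely the $v$ computed in the delivery handler.

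Next I would evaluate $v$. Let $\lambda$ be the value of $\lambda_i$ immediately before the handler $B.\pcdeliver(c)$ appends $c$; the handler atomically performs $\lambda_i \assign \lambda \append c$ together with $(S_i,v) \assign \tau(S_i,c)$, so $v = \tau(S_i,c).\val$ for $S_i$ the state just before this step. Applying \refprop{smr:2} at that instant gives $S_i = \tau^{*}(\sttInit,\lambda).\stt$, and unfolding the definition of $\tau^{*}$ on the nonempty word $\lambda \append c$ yields
\[
  v = \tau\bigl(\tau^{*}(\sttInit,\lambda).\stt,\, c\bigr).\val = \tau^{*}(\sttInit,\lambda \append c).\val .
\]

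Finally I would identify $\lambda \append c$ with $\lambda_i^{\run}\upto{c}$. In \refalg{smr} the only modification of $\lambda_i$ is the append performed by the delivery handler, so $\lambda_i$ grows purely by appending at its end and $\lambda \append c$ is a prefix of $\lambda_i^{\run}$. By \refprop{smr:1} the command $c$ occurs at most once in $\lambda_i^{\run}$, and the append isolated above is the occurrence that produces it; therefore the prefix of $\lambda_i^{\run}$ up to this unique occurrence of $c$ is exactly $\lambda \append c$, i.e. $\lambda_i^{\run}\upto{c} = \lambda \append c$. Substituting into the display gives $v = \tau^{*}(\sttInit, \lambda_i^{\run}\upto{c}).\val$. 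The argument is mostly bookkeeping; the one step needing care is this last identification, where the append-only behaviour of $\lambda_i$ and the uniqueness of occurrences from \refprop{smr:1} are precisely what let me replace the log seen at execution time by the final-log prefix appearing in the statement.
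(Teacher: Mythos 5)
Your proof is correct and follows essentially the same route as the paper's: trace $v$ to the computation in the $\pcdeliver(c)$ handler, apply \refprop{smr:2} to the pre-append log $\lambda$ to get $v = \tau^{*}(\sttInit, \lambda \append c).\val$, and invoke \refprop{smr:1} to identify $\lambda \append c$ with $\lambda_i^{\run}\upto{c}$. The extra detail you supply on why $\pending(c)$ is written exactly once and why append-only growth justifies the final prefix identification is a sound elaboration of steps the paper leaves implicit, not a different argument.
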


\begin{proof}
  From \refline{smr:exec:3}, we have $v = \pending(c)$.
  The map $\pending(c)$ is set to $\top$ at \refline{smr:inv:1}.
  Process $i$ then awaits that $(\pending(c) \neq \top)$ holds at \refline{smr:inv:3}.
  As a consequence, $v$ is the result of the computation at \reflines{smr:exec:1}{smr:exec:3}.
  Let $\lambda$ be the value of $\lambda_i$ before this execution.
  Applying \refprop{smr:2} leads to $S_i = \tau^*(\sttInit,\lambda).\stt$.
  Thus, we have $v = \tau^*(\sttInit, \lambda \append c).\val$.
  By \refprop{smr:1}, $\lambda_i^{\run}\upto{c} = \lambda \append c$.
  Thus, the claim holds.
\end{proof}

The above proposition explains how the response values of $h$ are computed.
We now construct a linearization of the commands submitted to the replicated state-machine that is consistent with these return values.
This linearization is denoted $\delta$ and built as follows:
\begin{construction}
  \label{construction:1}
  Initially, $\delta$ is set to $1$.
  Let $E$ be the set $\bigunion_{j \in \procSet} \cmdOf{\lambda_j^{\run}}$.
  By the Ordering property of \SYS, the transitive closure of $\mapsto$ forms an order over $\cmdSet$.
  We append each command $c \in E$ to $\delta$ following some linear extension of this relation over $E$.
\end{construction}

\begin{proposition}
  \labprop{smr:4}
  $\forall i \in \procSet.~ \classOf{\lambda_i^{\run}} \prefix \classOf{\delta}$.
\end{proposition}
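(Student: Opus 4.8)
The plan is to derive \refprop{smr:4} directly from \reflem{prel:5}, instantiated with $x = \lambda_i^{\run}$ and $y = \delta$. That lemma reduces the goal $\classOf{\lambda_i^{\run}} \prefix \classOf{\delta}$ to two obligations: the occurrence containment $\cmdOf{\lambda_i^{\run}} \subseteq \cmdOf{\delta}$, and the order-preservation condition stating that for every $c \nonCommuting d$, if $c^1 <_\delta d^1$ and $d^1 \in \lambda_i^{\run}$ then $c^1 <_{\lambda_i^{\run}} d^1$. I would first dispatch the containment. By \refprop{smr:1} every command occurs at most once in $\lambda_i^{\run}$, so $\cmdOf{\lambda_i^{\run}} = \{(c,1) : c \in \lambda_i^{\run}\}$; each such $c$ lies in $E = \bigunion_{j \in \procSet} \cmdOf{\lambda_j^{\run}}$ by definition of $E$, and Construction~\ref{construction:1} appends each element of $E$ to $\delta$ exactly once, so $(c,1) \in \cmdOf{\delta}$. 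Note also that, by \refprop{smr:1} and the construction, both $\lambda_i^{\run}$ and $\delta$ contain each command at most once, so only first occurrences are relevant: for occurrence indices above $1$ the required conclusion holds vacuously through the $\pos = 0$ convention, which is why it suffices to argue about $c^1$ and $d^1$.

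The substantive step is the order-preservation condition. I would split on whether $c \in \lambda_i^{\run}$. If $c \notin \lambda_i^{\run}$, then $\pos(c^1, \lambda_i^{\run}) = 0 < \pos(d^1, \lambda_i^{\run})$, the latter being positive since $d^1 \in \lambda_i^{\run}$; hence $c^1 <_{\lambda_i^{\run}} d^1$ holds trivially. If $c \in \lambda_i^{\run}$, then both $c$ and $d$ are executed at process $i$, and since $c \nonCommuting d$ they conflict in the \SYS instance $B$ (whose conflict relation is set to $\nonCommuting$ in \refalg{smr}), so one of $c \mapsto_i d$ or $d \mapsto_i c$ holds, according to the order of execution at $i$. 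Suppose for contradiction that $d$ is executed before $c$, i.e.\ $d \mapsto_i c$. Since $\mapsto_i \subseteq {\mapsto}$, we get $d \mapsto c$, so $d$ precedes $c$ in the transitive closure of $\mapsto$, which by the Ordering property is a strict partial order. Therefore any linear extension of it, in particular $\delta$ built in Construction~\ref{construction:1}, places $d$ before $c$, giving $d^1 <_\delta c^1$; as $\delta$ contains each of $c,d$ only once, this contradicts the hypothesis $c^1 <_\delta d^1$. Hence $c$ is executed before $d$ at $i$, and since $\lambda_i^{\run}$ records commands in their delivery-and-execution order in \refalg{smr}, we conclude $c^1 <_{\lambda_i^{\run}} d^1$, as required.

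The main obstacle I anticipate is not conceptual but in the occurrence bookkeeping: correctly reducing the \reflem{prel:5} hypotheses to first occurrences via \refprop{smr:1}, and handling the $c \notin \lambda_i^{\run}$ branch cleanly (where $c^1 <_{\lambda_i^{\run}} d^1$ holds only by the $\pos = 0$ convention rather than by any genuine ordering). The conceptual core—that $\delta$ is a linear extension of the $\mapsto$-order and so cannot invert a pair ordered by $\mapsto_i$—is short once the acyclicity of $\mapsto$ guaranteed by the Ordering property is invoked.
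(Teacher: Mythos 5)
Your overall route---reduce the claim to \reflem{prel:5} and exploit the fact that $\delta$ is built along a linear extension of the (acyclic, by Ordering) transitive closure of $\mapsto$---is the paper's, and your second branch ($c \in \lambda_i^{\run}$) reproduces the paper's argument exactly. The genuine gap is your first branch. You dispatch the case $c \notin \lambda_i^{\run}$ by appealing to the convention $\pos(c^1,\lambda_i^{\run})=0$, claiming the order-preservation hypothesis of \reflem{prel:5} is then ``trivially'' satisfied. But \reflem{prel:5} is false under that literal reading of $<_x$: take $x=d$ and $y=cd$ with $c \nonCommuting d$. Then $\cmdOf{x}\subseteq\cmdOf{y}$, and under the $\pos=0$ convention $c^1<_x d^1$ holds, so your reading of the hypothesis is met; yet $\classOf{x}\not\prefix\classOf{y}$, since $dz\equiv cd$ would require transposing a non-commuting pair (\reflem{prel:2}). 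The lemma is only sound when its hypothesis is read with genuine occurrence, i.e.\ $c^1\in x$ actually preceding $d^1$---which is also how the lemma's own proof uses it (the case $l=j$, ``by assumption'', needs $c^i \in x$). Worse, if your first case could actually occur together with $c<_\delta d$, the proposition itself would fail: in any word $\lambda_i^{\run}z$ the occurrence of $c$ would sit in $z$, after $d$, while $c$ precedes $d$ in $\delta$, so $\classOf{\lambda_i^{\run}}\not\prefix\classOf{\delta}$. So this case must be shown impossible, not waved through.

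That is precisely what the paper does: it observes that if $c \notin \lambda_i^{\run}$ (with $d\in\lambda_i^{\run}$ and $c \nonCommuting d$, hence conflicting in $B$), then $d \mapsto_i c$ holds---the paper reads ``$i$ executes $d$ before executing $c$'' as covering the situation where $i$ executes $d$ but never executes $c$ in $\run$. Your own contradiction machinery from the second branch then applies verbatim: $d \mapsto_i c$ forces $d$ before $c$ in every linear extension of the transitive closure of $\mapsto$ over $E$, hence $d <_\delta c$, contradicting $c <_\delta d$. Consequently $c \in \lambda_i^{\run}$ in all relevant cases, and the genuine precedence $c <_{\lambda_i^{\run}} d$ follows as you argued. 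The repair is local---replace the appeal to the $\pos=0$ convention with this contradiction---but as written your proof certifies the hypothesis of \reflem{prel:5} only under a reading that makes that lemma unsound, so the first branch does not go through.
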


\begin{proof}
  For any $\lambda_i^{\run}$, we have $\cmdOf{\lambda_i^{\run}} \subseteq \cmdOf{\delta}$.
  Now consider a pair of non-commuting command $(c,d)$ in $\delta$, with $c <_{\delta} d$ and $d \in \lambda_i^{\run}$.
  Observe that if $c \notin \lambda_i^{\run}$ or $d <_{\lambda_i^{\run}} c$, then $d \mapsto_i c$ holds;
  thus, we have necessarily $c <_{\lambda_i^{\run}} d$.
  Applying \reflem{prel:5}, $\classOf{\lambda_i^{\run}} \prefix \classOf{\delta}$.
\end{proof}

Consider the complete, sequential and legal history $l$ produced by applying the commands in $\delta$ to $\sttInit$ following the order $<_{\delta}$.
For every pending command $c$ in $h$, if $c$ has no response $v$ in $h$, we append $\response{i}{c}{v}$ to $h$, where $i$ is the caller of $c$ and $v$ the response of $c$ in $l$.
Name $h'$ the resulting history that by construction completes $h$.

\begin{proposition}
  \labprop{smr:5}
  $l \equiv h'$
\end{proposition}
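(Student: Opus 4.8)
The plan is to unfold the definition of history equivalence: $l \equiv h'$ asserts that $l$ and $h'$ contain exactly the same set of events, and events come in only two flavours, invocations $\invocation{i}{c}$ and responses $\response{i}{c}{v}$. I would therefore split the argument into two matching tasks carried out command by command: first match the invocation events (equivalently, show the two histories mention the same commands with the same callers), and then match the response events, the subtle part being that the returned \emph{value} must also agree.

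First I would establish that $l$ and $h'$ mention exactly the same commands. The commands of $l$ are those of $\delta$, i.e. the set $E = \bigunion_{j \in \procSet}\cmdOf{\lambda_j^{\run}}$ of commands applied at some process through $B.\pcdeliver$ (\refline{smr:exec:0}). Every such command was executed by \SYS, so by the Validity property it was previously submitted, which in \refalg{smr} happens only inside $\invoke(c)$ at \refline{smr:inv:2}; hence $\invocation{\caller(c)}{c}$ occurs in $h \subseteq h'$. Conversely, each command invoked in $h'$ carries a response there — either one recorded during the run at \refline{smr:exec:3}, which is reached only after the command was delivered and thus placed in some $\lambda_i^{\run}$, or one appended during the completion using its value in $l$; in both cases the command lies in $E$ and hence appears in $l$, with the same caller. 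Since $l$ is sequential with one invocation per command, the invocation events of $l$ and $h'$ coincide.

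Next I would match the response events, the crux being that the value returned for each command $c$ is identical in $l$ and $h'$. In the legal sequential history $l$ built from $\delta$, the value of $c$ is $\tau^{*}(\sttInit,\delta\upto{c}).\val$. For a pending command completed during the construction this value is taken verbatim from $l$, so nothing is to check. For a command that already returned during the run at its caller $i=\caller(c)$, \refprop{smr:3} gives that its value equals $\tau^{*}(\sttInit,\lambda_i^{\run}\upto{c}).\val$. The remaining step is to equate the two expressions: \refprop{smr:4} yields $\classOf{\lambda_i^{\run}} \prefix \classOf{\delta}$, and then \reflem{prel:6}, applied to $c \in \lambda_i^{\run}$, forces $\tau^{*}(\sttInit,\lambda_i^{\run}\upto{c}).\val = \tau^{*}(\sttInit,\delta\upto{c}).\val$. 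Thus every response of $h'$ appears, with identical value, in $l$, and each response of $l$ matches a (possibly appended) response of $h'$.

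The main obstacle is precisely this value-agreement for commands that returned during the run: it is what ties each purely local view $\lambda_i^{\run}$ to the single global linearization $\delta$. Everything hinges on \refprop{smr:4}, whose proof relies on the Ordering property of \SYS (so that the transitive closure of $\mapsto$ is a genuine order and $\delta$ is well defined) and on \reflem{prel:5}; the commutativity bookkeeping of \reflem{prel:6} then guarantees that reordering commuting commands between $\lambda_i^{\run}$ and $\delta$ cannot change the value observed up to $c$. Once invocations and responses with their values are shown to coincide, $l$ and $h'$ contain the same events, which is exactly $l \equiv h'$.
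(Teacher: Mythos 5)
Your proof is correct and follows essentially the same route as the paper, whose own proof is just the one-line citation of \refprop{smr:3}, \refprop{smr:4} and \reflem{prel:6} that you unfold: the value-agreement step via $\classOf{\lambda_i^{\run}} \prefix \classOf{\delta}$ is exactly the intended argument, and your explicit matching of invocation events (via Validity and the construction of $h'$) merely spells out what the paper leaves implicit.
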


\begin{proof}
  By applying \refprop{smr:3}, \refprop{smr:4} and \reflem{prel:6}.
\end{proof}

\begin{proposition}
  \labprop{smr:6}
  $\hb_{h} \subseteq \hb_{l}$
\end{proposition}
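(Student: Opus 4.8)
The plan is to show that the linearization order $<_{\delta}$ underlying $l$ refines the real-time order of $h$, by routing $\hb_{h}$ through the $\leadsto$ component of the relation $\mapsto$ defined in \refsection{smr}. Concretely, I would take an arbitrary pair of commands $c$ and $d$ with $c \hb_{h} d$ that both occur in $l$ (equivalently, in $\delta$, i.e. in $E = \bigcup_{j} \cmdOf{\lambda_j^{\run}}$) and prove $c <_{\delta} d$. Since $l$ is sequential and applies commands exactly in the order $<_{\delta}$, this yields $c \hb_{l} d$, giving the desired inclusion $\hb_{h} \subseteq \hb_{l}$.

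First I would unfold what the two events witnessing $c \hb_{h} d$ mean in \refalg{smr}. A response event for $c$ is emitted by $\invoke(c)$ only after it waits for $\pending(c) \neq \top$, and this value is installed solely inside the $\pcdeliver(c)$ handler at the caller of $c$. Hence, by the time $c$'s response appears in $h$, the command $c$ has already been executed (delivered by the underlying \SYS instance $B$) at its caller process. Symmetrically, the invocation event for $d$ coincides with $\invoke(d)$ calling $B.\pcdbroadcast(d)$, i.e. the moment $d$ is submitted to $B$. Because $c$'s response precedes $d$'s invocation in the global history $h$, I conclude that $c$ was executed at some process strictly before $d$ was submitted.

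This is exactly the definition of the real-time order $c \leadsto d$ given in \refsection{smr}, and since $\leadsto \subseteq \mapsto$ we obtain $c \mapsto d$. By the Ordering property of \SYS the transitive closure of $\mapsto$ is acyclic, so by Construction~\ref{construction:1} the word $\delta$ is a linear extension of that order over $E$; therefore $c \mapsto d$ forces $c <_{\delta} d$. Combined with \refprop{smr:5} (the equivalence $l \equiv h'$), this inclusion shows that $h'$ completes $h$ into a linearizable history, which closes the correctness argument.

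The main obstacle is not any deep combinatorics but the careful event-to-code correspondence in the first step: one must argue rigorously that a response cannot be produced in $h$ before the local $\pcdeliver$, so that $c$ is genuinely executed before its own response and hence before $d$'s submission. A secondary point is the membership bookkeeping. Here $c \in E$ holds automatically, since a completed command has been delivered at its caller; and the inclusion $\hb_{h} \subseteq \hb_{l}$ is read over the commands that actually appear in $l$, so a pending, never-executed $d$ imposes no constraint. Once this correspondence is pinned down, the reduction to $\leadsto \subseteq \mapsto$ and to Construction~\ref{construction:1} is immediate.
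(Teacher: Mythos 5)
Your proof is correct and is essentially the paper's argument spelled out in full: the paper's one-line proof (``by construction of $l$ and the fact that ${\hb_h} \subseteq {<_\lambda}$'') compresses exactly your chain --- a response is emitted only after local $\pcdeliver$, an invocation coincides with submission, hence $c \hb_h d$ implies $c \leadsto d \subseteq {\mapsto}$, which Construction~\ref{construction:1} linearizes into $c <_\delta d$. Your explicit treatment of never-delivered pending commands (they do not occur in $l$ and so impose no constraint) is a detail the paper leaves implicit, but it does not change the route.
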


\begin{proof}
  By construction of $l$ and the fact that $\hb_h \subseteq <_\lambda$.
\end{proof}

At the light of the last two propositions, we may conclude the result that follows.

\begin{theorem}
  \labtheo{smr:1}
  For every run $\run$ of \refalg{smr}, the history $h$ induced by $\run$ is linearizable.
\end{theorem}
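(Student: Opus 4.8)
The plan is to read off the theorem from the definition of linearizability by instantiating it with the objects already assembled just before the statement. Recall that $h$ is linearizable if it can be extended by appending responses into a history $h'$ such that $\completeOf{h'}$ is equivalent to some legal, sequential history $l$ with $\hb_h \subseteq \hb_l$. Construction~\ref{construction:1} already supplies the witness: by the Ordering property of \SYS the relation $\mapsto$ is acyclic, so the restriction of the transitive closure of $\mapsto$ to $E = \bigunion_{j \in \procSet}\cmdOf{\lambda_j^{\run}}$ is a strict partial order and admits a linear extension $\delta$, and $l$ is the sequential history obtained by replaying $\delta$ from $\sttInit$. The first step is then to certify, directly from this definition, that $l$ is sequential (it is a non-interleaved replay) and legal (each response it records equals $\tau^*(\sttInit, \delta\upto{c}).\val$ by construction).

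The second step is to discharge the two matching conditions by quoting the propositions already proved. Proposition~\refprop{smr:6} gives $\hb_h \subseteq \hb_l$, so the real-time order of $h$ is preserved. Proposition~\refprop{smr:5} gives $l \equiv h'$, i.e.\ the replay and the completed history carry the same events with the same response values; since $h'$ is by construction a completion of $h$, this yields $\completeOf{h'} \equiv l$. Combining the sequentiality and legality of $l$ with these two facts instantiates every clause of the definition, and the theorem follows in one line. The genuine mathematical content therefore sits upstream, chiefly in Proposition~\refprop{smr:4} (each local log is a trace-prefix of $\delta$) feeding \reflem{prel:6} to align the locally computed return values with those read off $l$.

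The point I would treat with the most care --- and the only real obstacle --- is the interplay between pending commands and the completion operator $\completeOf{\cdot}$. A command submitted during $\run$ need not be executed by any process (its coordinator may crash without recovery, or $\run$ may end first); such a command appears in $h$ as a bare invocation and lies outside $E$, hence outside $\delta$ and $l$. Since the construction can only append a response for a command that occurs in $l$, I must argue that $\completeOf{h'}$ trims exactly these leftover incomplete invocations, so that the event set of $\completeOf{h'}$ coincides with that of $l$ as Proposition~\refprop{smr:5} requires. Pinning down this bookkeeping --- rather than any deep ordering argument --- is where the proof needs to be explicit; once it is settled, linearizability is immediate from the definition.
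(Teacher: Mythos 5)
Your proposal is correct and takes essentially the same route as the paper, whose entire proof of \reftheo{smr:1} is a one-line appeal to \refprop{smr:5} and \refprop{smr:6} with $l$ and $h'$ as the witnesses from Construction~\ref{construction:1}. The pending-command bookkeeping you flag is real but is a refinement of a detail the paper glosses over: commands invoked but never delivered lie outside $E$ and hence outside $l$, so strictly \refprop{smr:5} should be read as $l \equiv \completeOf{h'}$, with $\completeOf{\cdot}$ trimming exactly those bare invocations --- precisely as you argue.
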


\subsection{Non-fault-Tolerant Reads}
\labsection{smr:reads}

Command $\get()$ in the KVS use case of \refsection{evaluation:ycsb} belongs to a class of commands whose conflicts are \emph{transitive} \cite{epaxos}.
This means that for any such read $c$ and any two writes $d$ and $d'$, if $c \nonCommuting d \land c \nonCommuting d'$ then $d \nonCommuting d'$.
We denote by $\ReadStar$ the set of such reads.

In \refsection{execution:improvements}, we introduce the \readopt optimization.
This optimization skips the commands in $\ReadStar$ when computing conflicts and it allows the coordinator to use a fast quorum which consists of a plain majority.
As shown experimentally in \refsection{evaluation:ycsb}, this mechanism reduces dependencies for write commands and improves overall performance.

When the \readopt optimization is enabled, line~\ref{algo:full:fast-quorum} in Algorithm~\ref{algo:full} assigns a majority quorum if command $c$ belongs to $\ReadStar$.
Additionally, function $\conflicts$ is redefined to become: $\conflicts(c) = \{ \id \not \in \KwPhAset \mid \conflict(c, \KwMsgM[\id]) \land \KwMsgM[\id] \notin \ReadStar \}$.

\paragraph{Sketch of proof.}
\SYS with the \readopt optimization implements the SMR specification given in \refsection{smr} when the Ordering property is restricted to $\cmdSet \setminus \ReadStar$.
Moreover, if $c \hb_h d$, $c$ and $d$ do not commute, and $c$ is a write, then $d \mapsto_i c$ cannot hold at a process $i$.

\refprops{smr:1}{smr:3} do not change when \readopt is applied.
The word $\delta$ is built by first applying Construction~\ref{construction:1} to $E \setminus \ReadStar$.
Then, for each $c \in E \inter \ReadStar$, following some linearization of $\hb_h$ over $E \inter \ReadStar$, $\delta$ is extended as follows:
Name $i$ the caller of $c$.
We add $c$ after the last command in $\delta$ that either happens-before $c$ in $h$ or does not commute with $c$ and precedes it in $\lambda_i^{\run}$.

Then, \refprop{smr:4} is established for $\hat{\lambda_i^{\run}}$ and $\hat{\delta}$, where $\hat{x}$ the projection of $x$ over $\cmdSet \setminus \ReadStar$.
This implies that response value of some command in $c \in \cmdSet \setminus \ReadStar$ is the same in $l$ and $h'$.
To obtain the same result when $c \in \ReadStar$, we observe that the commands in $\lambda$ which do not non-commute with $c$ form a total order in $\mapsto$.
Thus, \refprop{smr:5} holds.
As previously, \refprop{smr:6} follows from the fact that $\hb_h \subseteq <_{\lambda}$.

\fi

\end{document}